\documentclass[letterpaper,draftclsnofoot,onecolumn]{IEEEtran}
\usepackage{amsmath,amsfonts,amsthm}
\usepackage{algorithmic}
\usepackage{algorithm}
\usepackage{array}
\usepackage[caption=false,font=normalsize,labelfont=sf,textfont=sf]{subfig}
\usepackage{textcomp}
\usepackage{stfloats}
\usepackage{url}
\usepackage{tikz}
\usepackage{verbatim}
\usepackage{graphicx}
\usepackage{cite}
\usepackage{multirow}
\usepackage{tabularx}
\usepackage{booktabs}
\usepackage{longtable}
\usepackage{arydshln}
\usepackage{MnSymbol}
\usepackage{hyperref}

\newtheorem{definition}{Definition}
\newtheorem{theorem}{Theorem}
\newtheorem{corollary}{Corollary}
\newtheorem{lemma}{Lemma}
\newtheorem{proposition}{Proposition}
\newtheorem{remark}{Remark}
\newtheorem{example}{Example}

\DeclareMathAlphabet\mathbfcal{OMS}{cmsy}{b}{n}

\allowdisplaybreaks[3]

\begin{document}

\title{Performance Analysis and Optimal Design of ORB-Type GRAND Algorithms}

\author{Li Wan, {\it Student Member, IEEE} and Wenyi Zhang, {\it Senior Member, IEEE}
\thanks{The authors are with Department of Electronic Engineering and Information Science, University of Science and Technology of China, Hefei, China, 230027.}
}



\maketitle

\begin{abstract}
    Guessing Random Additive Noise Decoding (GRAND) performs decoding by sequentially guessing channel error patterns (EPs). Ordered Reliability Bits GRAND (ORBGRAND) is a notable instance suitable for efficient implementation, as it schedules EPs solely according to the ranking of soft channel outputs.
    In this paper, we generalize this principle to a broader class of GRAND algorithms whose testing order depends only on reliability ranking, referred to as ORB-type GRAND. We develop a unified analytical framework based on a key quantity termed the average guessing posterior (AGP), which captures the effectiveness of each EP and reduces decoding into an ordering problem over the EP space.
    For random code ensembles, we derive exact expressions for the block error rate (BLER), stopping-time distribution, and average number of tests under a fixed test budget. The analysis separates target-miss and target-preemption errors and shows that ordering EPs by non-increasing AGP is optimal over the EP set under consideration.
    For fixed linear block codes, we derive the BLER expression that isolates the code-dependent target-preemption term and characterize this term through higher-order weight relationships of codeword tuples, with a computable first-order upper bound as a useful special case.
    Guided by these insights, we formulate ReShuffled-ORBGRAND (RS-ORBGRAND) as an offline AGP-based reshuffling scheme. Numerical results for the Bose--Chaudhuri--Hocquenghem (BCH)$(127,113)$ code show that RS-ORBGRAND consistently improves existing ORB-type GRAND algorithms and lies within $0.1$~dB of a maximum-likelihood decoding lower-bound benchmark at a BLER of $10^{-6}$.
\end{abstract}

\begin{IEEEkeywords}
    Average guessing posterior, block error rate, error-pattern ordering, GRAND, linear block codes, maximum likelihood decoding, ORBGRAND.
\end{IEEEkeywords}


\section{Introduction}\label{Sec:introduction}

Shannon's pioneering work demonstrated that increasing the code length allows for constructions with progressively stronger error-correction capability \cite{shannon1948}. However, since maximum likelihood decoding (MLD) of linear block codes is NP-hard, the associated computational burden typically grows rapidly with block length \cite{berlekamp1978}. 
As wireless communications, optical communications, and high-speed serializer/deserializer (SerDes) links continue to evolve toward higher reliability, lower latency, and higher throughput \cite{you2021towards,winzer2018fiber,mickevicius2021serdes}, efficient short-blocklength channel decoding has become a common key problem across modern communication and information-processing systems \cite{shirvanimoghaddam2018short, yue2023efficient}. A central challenge is to design decoding strategies that approach MLD performance while maintaining manageable decoding complexity.

Guessing Random Additive Noise Decoding (GRAND) has recently emerged as a universal decoding framework \cite{duffy2019GRAND}. Instead of searching over the codeword space, GRAND performs decoding in the noise domain by sequentially generating candidate error patterns (EPs) and subtracting them from the received sequence until a valid codeword is identified. With an appropriate ordering of EPs, GRAND can achieve MLD for both hard-output channels such as the binary symmetric channel (BSC) and binary erasure channel (BEC) \cite{duffy2018guessing, duffy2019GRAND} and soft-output channels such as the additive white Gaussian noise (AWGN) channel \cite{solomon2020SGRAND}. Since the decoding procedure depends only on channel noise statistics, GRAND can operate with arbitrary linear block codes without code-specific decoder design.





To adapt the GRAND framework to different channels and system requirements, a number of variants have been proposed \cite{solomon2020SGRAND, duffy2021SRGRAND, abbas2022grand, duffy2023using, sarieddeen2022grand, feng2024laplacian}, along with several methods aimed at improving decoding performance and reducing complexity \cite{condo2021high, yuan2023guessing}. 
For soft-output channels, Soft GRAND (SGRAND) generates EPs according to their likelihood using the exact magnitudes of log-likelihood ratios (LLRs), and is equivalent to MLD when the search over EPs is exhaustive \cite{solomon2020SGRAND, Lin2004ErrorCC}. However, generating EPs according to precise LLR magnitudes requires dynamic scheduling of candidate EPs, leading to substantial computational overhead and complicating hardware implementations \cite{wan2025parallelism}. In contrast, Ordered Reliability Bits GRAND (ORBGRAND) schedules EPs using only the ranking of LLR magnitudes \cite{duffy2022ORBGRAND}. Once the reliability ranking is obtained, EPs can be generated according to a fixed structure independent of the exact LLR magnitudes \cite{an2023soft, galligan2023block, abbas2022high, abbas2023guessing, ji2024efficient}.


ORBGRAND provides an attractive trade-off between decoding performance and implementation complexity by generating EPs based solely on the reliability ranking of LLR magnitudes rather than their exact values. This ranking-based structure enables efficient parallel implementations while maintaining competitive decoding performance. Motivated by this property, several works have proposed improved EP ordering strategies that preserve the reliance on reliability ranking, including heuristic ordering rules \cite{duffy2022ORBGRAND, condo2021high, wang2023improved}, empirical rearrangement approaches \cite{condo2022fixed}, and search-based optimization methods \cite{wan2024approaching}. We collectively refer to this class of decoding strategies as ORB-type GRAND, whose formal definition will be given later.

From an information-theoretic perspective, ORBGRAND and certain variants have been shown to be capacity-achieving or nearly capacity-achieving~\cite{liu2022orbgrand,li2024orbgrand,li2025orbgrand}, and finite-blocklength analysis~\cite{li2026finite} further provides achievable-rate expansions and dispersion-based normal approximations. These results characterize reliable transmission through rate-oriented asymptotic analyses, but they do not directly quantify operational list-size-one block error rate (BLER) under a fixed EP test budget. At such finite block lengths, ORB-type GRAND nonetheless exhibits a noticeable BLER gap from MLD, particularly at high signal-to-noise ratios (SNRs).

These observations motivate a closer examination of EP ordering in ORB-type GRAND under finite test budgets. The ordered EP sequence determines which EPs are tested before the decoder terminates, so the resulting BLER is governed jointly by the testing order and the codebook. Existing ordering rules are mostly heuristic, empirical, or based on idealized search criteria, while a systematic BLER analysis from the perspective of EP ordering is still lacking.

Existing analytical results mainly focus on list-GRAND decoding and its extensions to soft-output decoding \cite{abbas2022list,galligan2023upgrade,yuan2024softoutput}. These approaches often approximate the occurrences of competing codewords along the guessing order by geometric distributions, yielding tractable performance predictions that are empirically accurate for random code ensembles. However, for structured linear block codes in the practically relevant list-size-one case---i.e., standard GRAND that terminates upon identifying the first valid codeword---competing-codeword events are strongly code-dependent and are no longer well captured by such geometric approximations; consequently, these predictions become unreliable~\cite{duffy2024soft}. This limitation is particularly relevant to ORB-type GRAND, whose performance is governed by a fixed EP ordering.

For the EP-ordering aspect specifically, ReShuffled-ORBGRAND (RS-ORBGRAND) was proposed in \cite{wan2024approaching}. That work studied an idealized search problem in which decoding succeeds once the target EP is queried, thereby isolating the effect of EP ordering on locating the target while excluding target preemption. Thus, although it provides useful insight into ordering candidate EPs, it does not explain how code structure affects target preemption. A systematic BLER analysis of ORB-type GRAND under finite test budgets is therefore still needed.


This paper develops an average guessing posterior (AGP)-based theoretical framework for analyzing ORB-type GRAND and establishing EP-ordering principles. The main contributions are summarized as follows.

\begin{itemize}
    \item We formalize ORB-type GRAND as the class of GRAND algorithms whose testing order depends only on reliability ranking. We then introduce the AGP, which quantifies EP effectiveness, is independent of any particular LLR realization, and characterizes ORB-type decoding behavior.
    \item For random code ensembles, we derive exact expressions for the BLER, decoding success probability, stopping-time distribution, and average number of tests under a fixed test budget. The analysis separates target-miss and target-preemption errors, and proves that ordering EPs by non-increasing AGP simultaneously minimizes the average BLER and the average number of tests over the EP set under consideration.
    \item We extend the BLER analysis to fixed linear block codes. The resulting exact expression isolates the code-dependent target-preemption term, which is characterized through higher-order weight relationships of codeword tuples. It also supports efficient finite-order low-BLER evaluation and yields a simple first-order upper bound.
    \item Guided by these insights, we formulate RS-ORBGRAND as an AGP-driven offline scheme that preserves the ranking-based ORB-type structure. Compared with the idealized search setting in~\cite{wan2024approaching}, the present design is supported by a BLER analysis that accounts for competing codewords; simulations for Bose--Chaudhuri--Hocquenghem (BCH)$(127,113)$ show consistent gains and performance within $0.1$~dB of the MLD lower-bound benchmark down to BLERs of $10^{-6}$.
\end{itemize}

The remaining part of this paper is organized as follows. 
Section~\ref{Sec:preliminaries} introduces the channel model and formally defines ORB-type GRAND. 
Section~\ref{Sec:random_code} analyzes ORB-type GRAND under random code ensembles and derives the decoding success probability, the average number of tests, and the AGP-based optimal ordering rule. 
Section~\ref{Sec:linear_code} extends the analysis to fixed linear block codes, establishes the BLER expression, and studies AGP-ordered EP sequences in this fixed-code setting. 
Section~\ref{Sec:simulation} validates the analytical results and evaluates the proposed RS-ORBGRAND. 
Finally, Section~\ref{Sec:conclusion} concludes the paper.

\section{ORB-Type GRAND Algorithms}\label{Sec:preliminaries}

In this section, we establish the analytical framework for the rest of the paper. We begin with the channel model and the posterior quantities associated with EPs, then introduce the GRAND decoding principle and the formal definition of ORB-type GRAND, and finally present RS-ORBGRAND as a practical scheme motivated by this framework. These preliminaries will serve as the foundation for our analysis developed in subsequent sections.

\subsection{Channel Model}\label{SubSec:basic_channel_model}

We use uppercase letters (e.g., $W$) to represent random variables and their corresponding lowercase letters (e.g., $w$) to represent their realizations. We append an underscore to a letter to represent a vector (e.g., $\underline{w}$), whose length is equal to the code length unless otherwise noted. For a sequence of vectors, we use parenthesis, such as $\underline{w}(i)$, to represent the ordinal, and for a vector $\underline{w}$, we use $w_i$ to represent its $i$-th component. For a continuous random variable, we use lowercase $p$ to represent its probability density function (e.g., $p_{Y}(y)$); for a discrete random variable, we use $P$ to represent its probability mass function, which characterizes the probability distribution that the random variable satisfies (e.g., $P_{W|Y}(w \mid y)$); for a random event, we use $\Pr(\cdot)$ to represent its probability of occurrence. Depending on the context, we may also use certain specific symbols for some frequently used events or probabilities.

We consider a general block code of rate $R = K/N$, which consists of $M = 2^K$ different binary codewords $\{\underline{w}(m)\}_{m = 1, 2, \ldots, M}$, each of a length $N$, thus forming a codebook $\mathcal{C} = \{\underline{w}(1),\ldots,\underline{w}(M)\}$. We let $\underline{W}$ represent a codeword drawn uniformly at random from the codebook. At the receiving end, the channel output vector is $\underline{Y}$, which is assumed to possess a probability density function over $\mathbb{R}^N$ for concreteness. We adopt a memoryless channel model, and let the output probability density distribution be $q_0(y)$ under input $w = 0$ and $q_1(y)$ under input $w = 1$, respectively. We call a channel output-symmetric if
\begin{eqnarray}\label{eqn:output-symmetric-channel}
    q_0(y) = q_1(-y)
\end{eqnarray}
holds for any $y \in \mathbb{R}$; for example, the commonly considered AWGN channel is output-symmetric. 

After receiving $\underline{Y}$, the LLR vector $\underline{L}$ is determined as follows:
\begin{equation}\label{eq:def_ell}
    L_i = {\log\frac{p_{Y|W}(Y_i \mid W_i = 0)}{p_{Y|W}(Y_i \mid W_i = 1)}} = \log\frac{q_0(Y_i)}{q_1(Y_i)},
\end{equation}
for $i = 1, \ldots, N$, and subsequently we use $\ell_{i}$ to represent the realization of $L_i$. Based on $\underline{L}$, the hard decision vector, $\theta(\underline{Y})$, is given by $\theta(Y_i) = (1 - \operatorname{sgn}(L_i))/2$. Equivalently, $\theta(Y_i)=\mathbf{1}(L_i < 0)$, i.e., $\theta(Y_i)=0$ if $L_i\ge 0$ and $1$ otherwise.

For an output-symmetric channel, we have the following useful properties:
\begin{itemize}
    \item If $P_W(W=0) = P_W(W=1) = 1/2$, then $p_Y(y) = p_Y(-y)$, because 
    \begin{equation}\label{eqn:prop_symm_1}
        p_Y(y) = \frac{1}{2}(p_{Y|W}(y \mid 0) + p_{Y|W}(y \mid 1)) \overset{\eqref{eqn:output-symmetric-channel}}{=} \frac{1}{2}(p_{Y|W}(-y \mid 1) + p_{Y|W}(-y \mid 0)) = p_Y(-y). 
    \end{equation}
    \item If the LLR of $y$ calculated by \eqref{eq:def_ell} is $\ell$, then the LLR corresponding to $-y$ is $-\ell$, because
    \begin{equation}\label{eqn:prop_symm_2}
        \ell = \log\frac{p_{Y|W}(y \mid 0)}{p_{Y|W}(y \mid 1)} \Rightarrow \log\frac{p_{Y|W}(-y \mid 0)}{p_{Y|W}(-y \mid 1)} = \log\frac{p_{Y|W}(y \mid 1)}{p_{Y|W}(y \mid 0)} = -\log\frac{p_{Y|W}(y \mid 0)}{p_{Y|W}(y \mid 1)} = -\ell. 
    \end{equation}
\end{itemize}

\subsection{Average Guessing Posterior}\label{SubSec:average_guessing_posterior}

Throughout the paper, we assume that for each $i = 1, \ldots, N$, $W_i$ is uniform over $\mathbb{F}_2$. The codebook ensembles studied in the paper satisfy this assumption.

To study EP ordering in noise-guessing decoding, we introduce a key metric that quantifies the effectiveness of an EP. For a received vector $\underline{y}$ and a candidate EP $\underline{e}$ which is a vector in $\mathbb{F}_2^N$, the vector $\theta(\underline{y})\oplus \underline{e}$ represents the codeword candidate obtained by flipping the hard decision according to $\underline{e}$. A natural quantity associated with $\underline{e}$ is therefore the posterior probability that this codeword candidate is indeed the transmitted codeword, referred to as the guessing posterior. Its expectation over the channel output, termed the average guessing posterior (AGP), will serve as the central quantity for characterizing EP ordering and decoding performance in the sequel.

It is clear that the hard decision vector may not coincide with the sent codeword; that is, $\theta(Y_i) = W_i$ may not hold for some indices in $\{1, 2, \ldots, N\}$. Regarding this fact, we have the following lemma.

\begin{lemma}\label{lem:y-neq-w-cond-ell}
    Conditioned on $Y = y$, the probability that $\theta(Y)$ and $W$ are different is given by
    \begin{equation}\label{eq:p_error}
    \Pr(\theta(Y) \neq W \mid Y = y) = \frac{1}{1 + \exp(|\ell|)},
\end{equation}
in which $\ell$ follows the definition in \eqref{eq:def_ell}.
\end{lemma}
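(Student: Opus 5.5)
The proof is a direct Bayes computation for a single coordinate, using the standing assumption that $W$ is uniform over $\mathbb{F}_2$. Since the prior is uniform, the posterior of $W$ given $Y=y$ is
\begin{equation*}
P_{W|Y}(w \mid y) = \frac{q_w(y)}{q_0(y)+q_1(y)}, \qquad w \in \{0,1\}.
\end{equation*}
Dividing numerator and denominator by the likelihood of the other input and using the definition of $\ell$ in \eqref{eq:def_ell} gives
\begin{equation*}
P_{W|Y}(0\mid y)=\frac{1}{1+e^{-\ell}}, \qquad P_{W|Y}(1\mid y)=\frac{1}{1+e^{\ell}}.
\end{equation*}

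Next I will split on the sign of $\ell$, using the hard-decision rule $\theta(y)=\mathbf{1}(\ell<0)$.

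\emph{Case $\ell\ge 0$.} Here $\theta(y)=0$. Since $\theta(Y)$ is deterministic given $Y=y$, the error event is $\{W=1\}$. Its conditional probability is $1/(1+e^{\ell})=1/(1+e^{|\ell|})$.

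\emph{Case $\ell<0$.} Here $\theta(y)=1$, and the error event is $\{W=0\}$. Its conditional probability is $1/(1+e^{-\ell})=1/(1+e^{|\ell|})$.

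Combining the two cases yields \eqref{eq:p_error}.

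There is no real obstacle. The only points needing care are that the prior is exactly uniform, which is what makes the posterior ratio equal to the likelihood ratio, and the tie case $\ell=0$. The tie is absorbed consistently by the convention $\theta=0$: both posteriors equal $1/2 = 1/(1+e^{0})$. Output symmetry \eqref{eqn:output-symmetric-channel} is not needed for this lemma.
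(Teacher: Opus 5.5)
Your proof is correct and follows the same route as the paper: Bayes' rule with the uniform prior on $W$, then a case split on the sign of $\ell$ (equivalently on $\theta(y)$), with each case giving $1/(1+\exp(|\ell|))$. You also write out the $\theta(y)=1$ case and the tie $\ell=0$, which the paper covers with ``similarly,'' and you correctly note that output symmetry is not needed.
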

\begin{proof}
    This lemma follows from Bayes' rule and the definition of $L$ in \eqref{eq:def_ell}. First, consider the case where $\theta(y) = 0$. We have $\Pr(\theta(Y) \neq W \mid Y = y) = \Pr(W = 1 \mid Y = y)$ and 
    \begin{eqnarray}
        \Pr(W = 1 \mid Y = y) &=& \frac{\Pr(Y = y \mid W = 1) \Pr(W = 1)}{\Pr(Y = y \mid W = 0) \Pr(W = 0) + \Pr(Y = y \mid W = 1) \Pr(W = 1)}\\
        &\overset{(a)}{ = }& \frac{q_1(y)}{q_0(y) + q_1(y)}  \overset{(b)}{ = } \frac{1}{1 + \exp(|\ell|)},
    \end{eqnarray}
    where (a) follows from the assumption that $W$ is uniform, and (b) holds because $\theta(y) = 0$ implies $\ell \geq 0$. Similarly, for the case where $\theta(y) = 1$, we also have
    \begin{eqnarray}
        \Pr(W = 0 \mid Y = y) = \frac{1}{1 + \exp(|\ell|)}.
    \end{eqnarray}
\end{proof}

When studying GRAND, we will frequently consider testing whether the modulo-two sum of the hard decision vector $\theta(\underline{Y})$ and a binary vector $\underline{e}$ is a codeword, where $\underline{e}$ is usually interpreted as an EP. For any given $\underline{y}$, we call $P_{\underline{W}|\underline{Y}}(\theta(\underline{y})\oplus \underline{e} \mid \underline{y})$ the guessing posterior, which will be seen to play a pivotal role in decoding.

We first consider an idealized setting in which the components of $\underline{W}$ are i.i.d. In this case, the guessing posterior admits a simple product form that reveals its basic channel-dependent structure.

\begin{lemma}\label{lemma:posterior_probability}
    When $\underline{W} = [{W}_1, {W}_2, \ldots, {W}_N]$ are i.i.d., given $\underline{Y} = \underline{y}$, the guessing posterior $P_{\underline{W}|\underline{Y}}(\theta(\underline{y})\oplus \underline{e} \mid \underline{y})$ is given by:
    \begin{equation}\label{eq:posterior}
        P_{\underline{W}|\underline{Y}}(\theta(\underline{y})\oplus \underline{e} \mid \underline{y}) =  \prod_{i:e_i = 1}\frac{1}{1+\exp(|\ell_i|)} \prod_{i:e_i = 0}\frac{\exp(|\ell_i|)}{1+\exp(|\ell_i|)}.
    \end{equation}
\end{lemma}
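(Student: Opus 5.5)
The plan is to factor the joint posterior over coordinates and then apply Lemma~\ref{lem:y-neq-w-cond-ell} to each coordinate separately.

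\textbf{Step 1: factorization.} Because the $W_i$ are i.i.d. and the channel is memoryless, the pairs $(W_i,Y_i)$ are mutually independent. By Bayes' rule,
\[
P_{\underline{W}|\underline{Y}}(\underline{w}\mid\underline{y})=\frac{\prod_i P_W(w_i)p_{Y|W}(y_i\mid w_i)}{\prod_i p_{Y_i}(y_i)}=\prod_{i=1}^N P_{W_i|Y_i}(w_i\mid y_i).
\]
We then evaluate this product at $\underline{w}=\theta(\underline{y})\oplus\underline{e}$. The $i$-th factor is $P_{W_i|Y_i}(\theta(y_i)\oplus e_i\mid y_i)$, and it depends only on $y_i$ and $e_i$.

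\textbf{Step 2: coordinates with $e_i=1$.} Here the candidate symbol is $\theta(y_i)\oplus 1$, which differs from the hard decision. Its posterior is therefore $\Pr(\theta(Y_i)\neq W_i\mid Y_i=y_i)$. Lemma~\ref{lem:y-neq-w-cond-ell} applies because each $W_i$ is uniform, and it gives this probability as $1/(1+\exp(|\ell_i|))$.

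\textbf{Step 3: coordinates with $e_i=0$.} Here the candidate symbol equals the hard decision $\theta(y_i)$. Its posterior is the complement $1-1/(1+\exp(|\ell_i|))=\exp(|\ell_i|)/(1+\exp(|\ell_i|))$.

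Multiplying the factors from Steps~2 and~3 gives \eqref{eq:posterior}.

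The only point needing care is Step~1, where the joint posterior must split into per-coordinate posteriors. This requires that both the prior and the likelihood factor across coordinates, so that the marginal density $p_{\underline{Y}}$ also factors. That is exactly where the i.i.d. hypothesis on $\underline{W}$ is used. For a general codebook the prior does not factor, and the product formula would fail.
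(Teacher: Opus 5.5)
Your proposal is correct and follows the paper's proof: factor the guessing posterior coordinate-wise using the i.i.d.\ prior and the memoryless channel, then apply Lemma~\ref{lem:y-neq-w-cond-ell} to the flipped coordinates ($e_i=1$) and its complement to the unflipped ones ($e_i=0$). You are also more explicit than the paper on two points: why the marginal density $p_{\underline{Y}}$ factors, and that invoking Lemma~\ref{lem:y-neq-w-cond-ell} relies on the paper's standing assumption that each $W_i$ is uniform.
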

\begin{proof}
    Using the i.i.d. property of $\underline{W}$ and the 
    memoryless property of the channel, Bayes' rule gives
    \begin{eqnarray}
        P_{\underline{W}|\underline{Y}}\left(\theta(\underline{y}) \oplus \underline{e} \mid \underline{y}\right) = \prod_{i = 1}^N P_{{W}|Y} \left(\theta(y_i) \oplus e_i \mid y_i\right).
    \end{eqnarray}
Inspecting this product, the factors with $e_i = 1$ correspond to flipped hard decisions and, by Lemma~\ref{lem:y-neq-w-cond-ell}, contribute $\prod_{i:e_i = 1} \frac{1}{1+\exp(|\ell_i|)}$. The factors with $e_i = 0$ correspond to unchanged hard decisions and contribute $\prod_{i:e_i = 0} \frac{\exp(|\ell_i|)}{1+\exp(|\ell_i|)}$. Multiplying these two parts together leads to \eqref{eq:posterior}.
\end{proof}

For an actual codebook $\mathcal{C}$, however, the i.i.d. condition in Lemma~\ref{lemma:posterior_probability} no longer holds. The following result shows that the guessing posterior still admits a similar factorized form, up to a codebook constraint and a normalization coefficient.

\begin{lemma}\label{lemma:posterior_probability_codebook}
    For a given codebook $\mathcal{C}$, given $\underline{Y} = \underline{y}$, the guessing posterior $P_{\underline{{W}}|\underline{Y}}(\theta(\underline{y})\oplus \underline{e} \mid \underline{y})$ is given by:
    \begin{equation}\label{eq:posterior_2}
        P_{\underline{W}|\underline{Y}}\left(\theta(\underline{y})\oplus \underline{e} \mid \underline{y}\right) = \left\{\begin{array}{ll}
        C(\underline{y})\prod_{i:e_i = 1}\frac{1}{1+\exp(|\ell_i|)}\prod_{i:e_i = 0}\frac{\exp(|\ell_i|)}{1+\exp(|\ell_i|)}      & \text{if } \theta(\underline{y})\oplus \underline{e} \in \mathcal{C} \\
        0     & \text{else}
        \end{array} \right. ,
\end{equation}
where $C(\underline{y})$ is a normalization coefficient given by $C(\underline{y}) = 2^{N - K} \cdot  \prod_{i = 1}^N p_Y(y_i)/p_{\underline{Y}}(\underline{y})$. 
\end{lemma}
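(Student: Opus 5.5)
The plan is to apply Bayes' rule directly with the uniform prior on $\mathcal{C}$, and then rewrite the per-symbol channel likelihoods as per-symbol posteriors that would arise under a uniform binary input. That rewriting puts us back in the setting of Lemma~\ref{lem:y-neq-w-cond-ell}.

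First, set $\underline{w} = \theta(\underline{y})\oplus\underline{e}$. Since $\underline{W}$ is uniform over $\mathcal{C}$ with $|\mathcal{C}| = 2^K$, Bayes' rule gives
\begin{equation*}
P_{\underline{W}|\underline{Y}}(\underline{w}\mid\underline{y}) = \frac{2^{-K}\,\mathbf{1}(\underline{w}\in\mathcal{C})\prod_{i=1}^N q_{w_i}(y_i)}{p_{\underline{Y}}(\underline{y})},
\end{equation*}
using the memoryless channel. This immediately disposes of the case $\underline{w}\notin\mathcal{C}$, where the posterior is $0$.

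For $\underline{w}\in\mathcal{C}$, write each factor as
\begin{equation*}
q_{w_i}(y_i) = 2\,p_Y(y_i)\cdot \frac{q_{w_i}(y_i)}{q_0(y_i)+q_1(y_i)},
\end{equation*}
where $p_Y(y_i) = \tfrac12(q_0(y_i)+q_1(y_i))$ is the marginal output density under a uniform input bit. This marginal is the same one that appears in \eqref{eqn:prop_symm_1}, and it is consistent with the standing assumption that each $W_i$ is uniform over $\mathbb{F}_2$. The ratio $q_{w_i}(y_i)/(q_0(y_i)+q_1(y_i))$ is exactly the single-symbol posterior $P_{W|Y}(w_i\mid y_i)$ under a uniform input. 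By Lemma~\ref{lem:y-neq-w-cond-ell}, this ratio equals $1/(1+\exp(|\ell_i|))$ when $w_i\neq\theta(y_i)$, i.e.\ $e_i=1$, and equals $\exp(|\ell_i|)/(1+\exp(|\ell_i|))$ when $e_i=0$. Collecting the constants gives
\begin{equation*}
2^{-K}\cdot 2^N\prod_i p_Y(y_i)/p_{\underline{Y}}(\underline{y}) = C(\underline{y}),
\end{equation*}
which yields \eqref{eq:posterior_2}.

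The only delicate point is making sure that the $p_Y$ used in $C(\underline{y})$ is the marginal under a uniform bit input, rather than some codebook-induced marginal. We need the per-symbol factorization $q_{w_i}(y_i) = 2p_Y(y_i)P_{W|Y}(w_i\mid y_i)$ to hold exactly. This is guaranteed because each coordinate $W_i$ is marginally uniform, so $p_{Y_i}(y) = \tfrac12(q_0(y)+q_1(y))$. Everything else is bookkeeping of the powers of $2$.
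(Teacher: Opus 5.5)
Your proposal is correct and follows essentially the same route as the paper: Bayes' rule with the uniform prior $2^{-K}$ on $\mathcal{C}$, memoryless factorization, and a per-symbol Bayes rewrite $q_{w_i}(y_i)=2\,p_Y(y_i)\,P_{W|Y}(w_i\mid y_i)$ with $p_Y=(q_0+q_1)/2$, followed by Lemma~\ref{lem:y-neq-w-cond-ell}. This yields the constant $C(\underline{y})=2^{N-K}\prod_i p_Y(y_i)/p_{\underline{Y}}(\underline{y})$, exactly as in the paper's steps \eqref{eqn:posterior_C_1}--\eqref{eqn:posterior_C_5}.
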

\begin{proof}
    We deduce via the following steps:
    \begin{align}
    P_{\underline{W}|\underline{Y}}\left(\theta(\underline{y})\oplus \underline{e} \mid \underline{y}\right) = & \frac{P_{\underline{W}}(\theta(\underline{y})\oplus \underline{e}) p_{\underline{Y}|\underline{W}}\left(\underline{y} \mid \theta(\underline{y})\oplus \underline{e}\right)}{ p_{\underline{Y}}(\underline{y})} \label{eqn:posterior_C_1}\\
    = &  \left\{\begin{array}{ll}
        C'(\underline{y})\cdot p_{\underline{Y}|\underline{W}}\left(\underline{y} \mid \theta(\underline{y})\oplus \underline{e}\right)   & \text{if } \theta(\underline{y})\oplus \underline{e} \in \mathcal{C} \\
        0     & \text{else}
        \end{array} \right. \label{eqn:posterior_C_2}\\
    = & \left\{\begin{array}{ll}
        C'(\underline{y})\cdot \prod_{i=1}^{N} p_{Y|W}\left(y_i \mid \theta(y_i)\oplus e_i\right)   & \text{if } \theta(\underline{y})\oplus \underline{e} \in \mathcal{C} \\
        0     & \text{else}
        \end{array} \right. \label{eqn:posterior_C_3}\\
    = & \left\{\begin{array}{ll}
        C(\underline{y})\cdot \prod_{i=1}^{N} P_{W|Y}\left(\theta(y_i)\oplus e_i \mid y_i\right)   & \text{if } \theta(\underline{y})\oplus \underline{e} \in \mathcal{C} \\
        0     & \text{else}
        \end{array} \right. \label{eqn:posterior_C_4}\\
    = & \left\{\begin{array}{ll}
        C(\underline{y})\cdot \prod_{i:e_i = 1}\frac{1}{1+\exp(|\ell_i|)}\prod_{i:e_i = 0}\frac{\exp(|\ell_i|)}{1+\exp(|\ell_i|)}.   & \text{if } \theta(\underline{y})\oplus \underline{e} \in \mathcal{C} \\
        0     & \text{else}
        \end{array} \right. ,\label{eqn:posterior_C_5}
\end{align}
where:
\begin{itemize}
    \item \eqref{eqn:posterior_C_1} is Bayes' rule;
    \item \eqref{eqn:posterior_C_2} holds, because if $\theta(\underline{y})\oplus \underline{e} \in \mathcal{C}$ then $P_{\underline{W}}(\theta(\underline{y})\oplus \underline{e}) = 2^{-K}$ (codeword drawn uniformly at random from $\mathcal{C}$), and otherwise $P_{\underline{W}}(\theta(\underline{y})\oplus \underline{e}) = 0$, with
    \begin{eqnarray}
        C'(\underline{y}) = \frac{P_{\underline{W}}(\theta(\underline{y})\oplus \underline{e})}{p_{\underline{Y}}(\underline{y})} = \frac{1}{2^K p_{\underline{Y}}(\underline{y})};
    \end{eqnarray}
    \item \eqref{eqn:posterior_C_3} follows from the memoryless property of the channel;
    \item \eqref{eqn:posterior_C_4} is Bayes' rule, with
    \begin{eqnarray}\label{eqn:C-y}
        C(\underline{y}) = C'(\underline{y}) \frac{\prod_{i = 1}^N p_Y(y_i)}{\prod_{i = 1}^N P_W(\theta(y_i) \oplus e_i)} = 2^{N - K} \frac{\prod_{i = 1}^N p_Y(y_i)}{p_{\underline{Y}}(\underline{y})},
    \end{eqnarray}
    where we note that each $W_i$ has the same marginal probability, i.e., uniform over $\mathbb{F}_2$, and consequently that each $Y_i$   
    has the same marginal probability density distribution, i.e., $p_Y(y) = (q_0(y) + q_1(y))/2$.
    \item \eqref{eqn:posterior_C_5} follows from the same argument as that in the proof of Lemma~\ref{lemma:posterior_probability}.
\end{itemize}
\end{proof}


When considering the random channel output vector $\underline{Y}$ induced by a codebook $\mathcal{C}$, the guessing posterior $P_{\underline{W}|\underline{Y}}\left(\theta(\underline{Y}) \oplus \underline{e} \mid \underline{Y}\right)$ is itself a random variable. Its expectation is called the AGP. The following proposition shows that, although the instantaneous guessing posterior depends on the codebook $\mathcal{C}$, this dependence disappears after averaging over the channel output when the channel is output-symmetric in the sense of \eqref{eqn:output-symmetric-channel}.

\begin{proposition}\label{prop:agp}
    When the channel is output-symmetric in the sense of \eqref{eqn:output-symmetric-channel}, for any codebook $\mathcal{C}$, the AGP is given by:
    \begin{eqnarray}\label{eqn:agp}
        \mathbf{E}_{\underline{Y}} \left[P_{\underline{W}|\underline{Y}}(\theta(\underline{Y})\oplus \underline{e} \mid \underline{Y})\right] = \int_{\mathbb{R}^N}\prod_{i=1}^{N}p_{Y}(y_i) \left(\prod_{i:e_i = 1} \frac{1}{1+\exp(|\ell_i|)} \prod_{i:e_i = 0} \frac{\exp(|\ell_i|)} {1+\exp(|\ell_i|)}\right)\mathrm{d}\underline{y}.
    \end{eqnarray}
\end{proposition}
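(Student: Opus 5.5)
We start from Lemma~\ref{lemma:posterior_probability_codebook} and write the expectation over $\underline{Y}$ explicitly as an integral against the true output density $p_{\underline{Y}}(\underline{y}) = 2^{-K}\sum_{\underline{w}\in\mathcal{C}} \prod_{i} q_{w_i}(y_i)$. The normalization $C(\underline{y})$ contains the factor $1/p_{\underline{Y}}(\underline{y})$, which cancels against this density. This leaves
\begin{equation*}
\mathbf{E}_{\underline{Y}}\left[P_{\underline{W}|\underline{Y}}(\theta(\underline{Y})\oplus \underline{e} \mid \underline{Y})\right] = 2^{N-K}\int_{\mathbb{R}^N} \mathbf{1}\left(\theta(\underline{y})\oplus\underline{e}\in\mathcal{C}\right)\prod_{i=1}^N p_Y(y_i)\, g_{\underline{e}}(\underline{y})\,\mathrm{d}\underline{y},
\end{equation*}
where $g_{\underline{e}}(\underline{y})$ denotes the product of the logistic factors in \eqref{eqn:agp}. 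The key observation is that $\prod_i p_Y(y_i)\,g_{\underline{e}}(\underline{y})$ is invariant under sign flips of any subset of coordinates. By \eqref{eqn:prop_symm_1}, $p_Y$ is even. By \eqref{eqn:prop_symm_2}, flipping $y_i \mapsto -y_i$ negates $\ell_i$, so $|\ell_i|$ and hence each factor of $g_{\underline{e}}$ is unchanged. Only the indicator depends on the signs, through $\theta(\underline{y})$.

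Next we partition $\mathbb{R}^N$ into the $2^N$ orthants indexed by the hard-decision vector $\underline{h} = \theta(\underline{y}) \in \mathbb{F}_2^N$. The sign-flip map $\underline{y}\mapsto \underline{s}\odot\underline{y}$ is a measure-preserving bijection between orthants. Together with the invariance above, this shows that
\begin{equation*}
I \triangleq \int_{\{\theta(\underline{y})=\underline{h}\}} \prod_i p_Y(y_i)\, g_{\underline{e}}(\underline{y})\,\mathrm{d}\underline{y}
\end{equation*}
is the same for every $\underline{h}$, and therefore equals $2^{-N}$ times the full integral on the right-hand side of \eqref{eqn:agp}. Coordinates with $y_i$ exactly on the decision boundary form a null set and can be ignored; the same holds for any set where $\ell_i=0$ while $y_i\neq0$, since there both hard decisions give identical factors.

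Within the orthant labelled $\underline{h}$, the indicator equals $1$ exactly when $\underline{h}\oplus\underline{e}\in\mathcal{C}$. The number of such $\underline{h}$ is $|\mathcal{C}\oplus\underline{e}| = 2^K$. The expectation is therefore $2^{N-K}\cdot 2^K\cdot I$, which is the full integral, as claimed. This also shows why the result does not depend on $\mathcal{C}$: only the codebook's cardinality enters.

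The main obstacle is the orthant-invariance step. One must check carefully that the hard-decision map $\theta$ is determined by the signs of $y_i$ through $\ell_i$. This requires that $\ell$ be an odd function of $y$, which follows from \eqref{eqn:prop_symm_2}. In general $\theta$ is determined by the sign of $\ell(y_i)$ rather than of $y_i$. Accordingly, I would define the partition by the sign pattern of $\underline{\ell}$, and use the reflection $y_i\mapsto -y_i$, which maps $\{\ell_i>0\}$ onto $\{\ell_i<0\}$. This works without assuming monotonicity of $\ell$.
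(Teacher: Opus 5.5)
Your argument is essentially the paper's. Both cancel $p_{\underline{Y}}(\underline{y})$ against $C(\underline{y})$, use that $p_Y$ and $|\ell_i|$ are unchanged under $y_i\mapsto -y_i$, and count that exactly $2^K$ of the $2^N$ hard-decision patterns satisfy $\underline{h}\oplus\underline{e}\in\mathcal{C}$. The paper organizes the count over sign-flip orbits of points in $\mathbb{R}_+^N$ rather than over cells $\{\theta(\underline{y})=\underline{h}\}$. This is equivalent, and like your partition by the sign of $\ell$ it needs no monotonicity of $\ell$: flipping $y_i$ negates $\ell_i$, and hence flips $\theta(y_i)$ whenever $\ell_i\neq 0$.

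One sentence, however, is wrong: you cannot dismiss the set $\{\ell_i=0\}$. For an output-symmetric channel with a density, this set can have positive probability, for example when $q_0=q_1$ on an interval around $0$. On it, $\theta(y_i)=\theta(-y_i)=0$, so the reflection does not move you between cells. The fact that both logistic factors equal $1/2$ there does not help, because the indicator $\mathbf{1}(\theta(\underline{y})\oplus\underline{e}\in\mathcal{C})$ still depends on $\theta$.

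In that case the identity itself fails. Take a useless channel with $q_0=q_1$ even, $N=2$, $\mathcal{C}=\{00,11\}$ (marginals are uniform) and $\underline{e}=00$. Then $\theta\equiv 00$ and the guessing posterior is identically $P_{\underline{W}|\underline{Y}}(00\mid\underline{y})=1/2$, so the AGP is $1/2$. The right-hand side of \eqref{eqn:agp}, by contrast, equals $1/4$.

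Replace that sentence with the explicit assumption $\Pr(L_i=0)=0$, which holds for AWGN. The paper's proof uses this assumption tacitly when it asserts that exactly $2^K$ of the $2^N$ sign-flipped points give a codeword. With that assumption stated, your proof is complete.
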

\begin{proof}
    Considering an arbitrary codebook $\mathcal{C}$, we have
    \begin{eqnarray}\label{eq:agp_expect2int}
        \mathbf{E}_{\underline{Y}} \left[P_{\underline{W}|\underline{Y}}(\theta(\underline{Y})\oplus \underline{e} \mid \underline{Y})\right] = \int_{\mathbb{R}^N} p_{\underline{Y}}(\underline{y}) P_{\underline{W}|\underline{Y}}(\theta(\underline{y})\oplus \underline{e} \mid \underline{y})
        \mathrm{d}\underline{y}.
    \end{eqnarray}
    Applying Lemma~\ref{lemma:posterior_probability_codebook} and plugging into the expression \eqref{eqn:C-y} of $C(\underline{y})$, we obtain
    \begin{eqnarray}
        && \mathbf{E}_{\underline{Y}} \left[P_{\underline{W}|\underline{Y}}(\theta(\underline{Y})\oplus \underline{e} \mid \underline{Y})\right]\\
        &=& \int_{\mathbb{R}^N} 2^{N - K} \prod_{i=1}^{N}p_{Y}(y_i) \left(\prod_{i:e_i = 1} \frac{1}{1+\exp(|\ell_i|)} \prod_{i:e_i = 0} \frac{\exp(|\ell_i|)} {1+\exp(|\ell_i|)}\right) \mathbf{1}\left(\theta(\underline{y}) \oplus \underline{e} \in \mathcal{C}\right) \mathrm{d}\underline{y}.\label{eqn:agp-all}
    \end{eqnarray}
    Under the output-symmetric condition \eqref{eqn:output-symmetric-channel} and its associated properties \eqref{eqn:prop_symm_1} and \eqref{eqn:prop_symm_2}, we have $p_Y(y) = p_Y(-y)$ for any $y \in \mathbb{R}$, and the values $y$ and $-y$ result in the same $|\ell|$. Consider any $\underline{y} \in \mathbb{R}_{+}^{N}$ (i.e., the first quadrant in $\mathbb{R}^N$). There are $2^N$ ways of flipping the signs of the $N$ elements of $\underline{y}$ to yield $2^N$ points in $\mathbb{R}^N$ (including $\underline{y}$ itself), and all these points lead to the same value of $\prod_{i=1}^{N}p_{Y}(y_i) \left(\prod_{i:e_i = 1} \frac{1}{1+\exp(|\ell_i|)} \prod_{i:e_i = 0} \frac{\exp(|\ell_i|)} {1+\exp(|\ell_i|)}\right)$ in the integral in \eqref{eqn:agp-all}. Furthermore, exactly $2^K$ among all these $2^N$ points lead to $\mathbf{1}\left(\theta(\underline{y}) \oplus \underline{e} \in \mathcal{C}\right) = 1$ since there are exactly $2^K$ different codewords in the codebook $\mathcal{C}$. Consequently, we can evaluate the integral \eqref{eqn:agp-all} as
    \begin{eqnarray}
        && \mathbf{E}_{\underline{Y}} \left[P_{\underline{W}|\underline{Y}}(\theta(\underline{Y})\oplus \underline{e} \mid \underline{Y})\right]\\
        &=& 2^K 2^{N - K} \int_{\mathbb{R}_{+}^{N}} \prod_{i=1}^{N}p_{Y}(y_i) \left(\prod_{i:e_i = 1} \frac{1}{1+\exp(|\ell_i|)} \prod_{i:e_i = 0} \frac{\exp(|\ell_i|)} {1+\exp(|\ell_i|)}\right) \mathrm{d}\underline{y}\\
        &=& \int_{\mathbb{R}^{N}} \prod_{i=1}^{N}p_{Y}(y_i) \left(\prod_{i:e_i = 1} \frac{1}{1+\exp(|\ell_i|)} \prod_{i:e_i = 0} \frac{\exp(|\ell_i|)} {1+\exp(|\ell_i|)}\right) \mathrm{d}\underline{y},
    \end{eqnarray}
    where in the last step we have restored the integral over $\mathbb{R}^N$ by noticing that the integrands in all the $2^N$ quadrants are identical according to the output-symmetric condition.
\end{proof}

\begin{remark}
    Inspecting the expression \eqref{eqn:agp} in Proposition~\ref{prop:agp}, we observe that $\underline{W}$ behaves as if it consists of i.i.d. elements, rather than being drawn from a specific codebook $\mathcal{C}$. This is an attractive property for practical purposes, because we can then readily generate abundant samples of i.i.d. $\underline{W}$ vectors and utilize efficient numerical integration methods such as Monte Carlo to evaluate the AGP, for any given $\underline{e}$, without considering the specific codebook to be used.
\end{remark}

As will be shown in the sequel, the metric of AGP provides the appropriate lens for characterizing the ordering behavior of ORB-type GRAND and for establishing principled EP-ordering rules.

\subsection{GRAND}\label{SubSec:GRAND}

A GRAND algorithm mainly consists of two components: an EP generator and a codeword tester. An EP generator takes the LLR vector $\underline{\ell}$ as input, and sequentially outputs a series of vectors in $\mathbb{F}_2^N$ called EPs. Denote the $t$-th EP as $\underline{e}(t)$, $1 \leq t \leq T \leq 2^N$, where $T$ is the maximum number of tests the GRAND algorithm is permitted to conduct.\footnote{For implementation, setting such a maximum number $T$ typically much smaller than $2^N$ is necessary, and in the literature this is called GRAND with abandonment (GRANDAB) \cite{duffy2019GRAND}.} A codeword tester sequentially tests whether $\theta(\underline{y}) \oplus \underline{e}(t) \in \mathcal{C}$ for $t=1,\ldots,T$. It is clear that there are two possibilities:
\begin{itemize}
    \item There exists at least one index within $\{1, 2, \ldots, T\}$ such that the corresponding EP passes the test of codeword tester, and we denote the smallest such index as $\hat{t}$ and declare the decoded codeword to be $\hat{\underline{w}} = \theta(\underline{y}) \oplus \underline{e}(\hat{t})$;
    \item There is no index within $\{1, 2, \ldots, T\}$ such that the corresponding EP passes the test of codeword tester, and we declare a decoding failure with $\hat{\underline{w}} = \emptyset$.
\end{itemize}

For the sent codeword $\underline{w}$, the EP that can lead to successful decoding is
\begin{equation}\label{eqn:target-error-pattern}
    \underline{e}^{*} = \theta(\underline{y}) \oplus \underline{w},
\end{equation}
which we call the target EP. Relative to the position of this target EP in the tested sequence, decoding errors can be decomposed into the following two types:
\begin{itemize}
    \item \textbf{Type~I error (target-miss error):} $\underline{e}^{*} \notin \{\underline{e}(t)\}_{t=1}^{T}$. In this case, the correct codeword cannot be reached within the test budget; the decoder may either stop at a competing codeword or declare a decoding failure.
    \item \textbf{Type~II error (target-preemption error):} $\underline{e}^{*} \in \{\underline{e}(t)\}_{t=1}^{T}$, but an EP $\underline{e}(\hat t)$ satisfying $\theta(\underline{y})\oplus\underline{e}(\hat{t})\in\mathcal{C}$ is encountered before $\underline{e}^{*}$, so the decoder stops before reaching the target.
\end{itemize}

The following proposition revisits a known result (see, e.g., \cite{Lin2004ErrorCC, solomon2020SGRAND, liu2022orbgrand}) regarding the optimal GRAND, which is in fact equivalent to MLD if $T = 2^N$.

\begin{proposition}\label{prop:equal_ml}
    Define the soft weight of $\underline{e} \in \mathbb{F}_2^N$ as $\zeta(\underline{e}) = \sum_{i:e_i = 1} |\ell_i|$. Setting $T = 2^N$ and letting the EP generator output $\underline{e}(t)$, $1 \leq t \leq 2^N$, in such a way that $\{\zeta(\underline{e}(t))\}_{t = 1,\ldots,2^N}$ is monotonically non-decreasing, the resulting GRAND algorithm achieves MLD.
\end{proposition}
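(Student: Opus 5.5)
The plan is to show that, for a fixed received $\underline{y}$, the first EP in the prescribed order that produces a codeword yields a codeword maximizing the likelihood $p_{\underline{Y}|\underline{W}}(\underline{y}\mid\underline{w})$ over $\mathcal{C}$; this is exactly MLD. The natural tool is Lemma~\ref{lemma:posterior_probability_codebook}. For every EP $\underline{e}$ with $\theta(\underline{y})\oplus\underline{e}\in\mathcal{C}$, it writes the posterior as $C(\underline{y})$ times a product that depends on $\underline{e}$ only through which positions are flipped. Since codewords are equiprobable, maximizing the likelihood over $\mathcal{C}$ is the same as maximizing this posterior. Every codeword $\underline{w}$ corresponds to exactly one EP, namely $\underline{e}=\theta(\underline{y})\oplus\underline{w}$.

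The key step is to rewrite the product in terms of the soft weight. Factor out $\prod_{i=1}^N \frac{\exp(|\ell_i|)}{1+\exp(|\ell_i|)}$, which does not depend on $\underline{e}$. The remaining factor is $\prod_{i:e_i=1}\exp(-|\ell_i|)=\exp(-\zeta(\underline{e}))$. Hence, for every EP whose candidate lies in $\mathcal{C}$,
\begin{equation*}
P_{\underline{W}|\underline{Y}}(\theta(\underline{y})\oplus\underline{e}\mid\underline{y}) = C(\underline{y})\prod_{i=1}^N \frac{\exp(|\ell_i|)}{1+\exp(|\ell_i|)}\exp(-\zeta(\underline{e})).
\end{equation*}
This expression is strictly decreasing in $\zeta(\underline{e})$.

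Now take $T=2^N$. The tested sequence runs over all of $\mathbb{F}_2^N$, so some EP passes the test, and $\hat t$ is well defined. Suppose some codeword $\underline{w}'$ had strictly larger posterior than the output $\hat{\underline{w}}$. Its EP $\underline{e}'=\theta(\underline{y})\oplus\underline{w}'$ would then satisfy $\zeta(\underline{e}')<\zeta(\underline{e}(\hat t))$. Because $\zeta(\underline{e}(t))$ is non-decreasing in $t$, the EP $\underline{e}'$ would appear at some index $t'<\hat t$. That contradicts the minimality of $\hat t$. So $\hat{\underline{w}}$ attains the maximum posterior, i.e., it is an ML decision.

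I do not expect a real obstacle. The only subtle point is ties: when several EPs share the same soft weight, the algorithm returns one of the ML codewords. That is acceptable under the usual convention that MLD breaks ties arbitrarily, and I would state this explicitly. I would also note that the $\underline{e}$-independent factors, including $C(\underline{y})$, are strictly positive, so dropping them does not affect the comparison.
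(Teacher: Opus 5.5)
Your proposal is correct and follows essentially the same route as the paper. Both arguments use Lemma~\ref{lemma:posterior_probability_codebook} and the uniform prior to reduce MLD to minimizing the soft weight $\zeta(\underline{e})$ over EPs that yield codewords, after dropping the positive $\underline{e}$-independent factors. Your contradiction via the minimality of $\hat t$ and your remark on ties just make explicit the final step, which the paper states in one sentence.
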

\begin{proof} We show that MLD and the thus described GRAND algorithm lead to the same decoded codeword: 
\begin{align}
    \hat{\underline{w}} & = \arg\max_{\underline{w}\in \mathcal{C}}\left\{p_{\underline{Y}|\underline{W}}(\underline{y} \mid \underline{w})\right\} = \arg\max_{\underline{w}\in \mathcal{C}}\left\{\frac{p_{\underline{Y}}(\underline{y})P_{\underline{W}|\underline{Y}}(\underline{w} \mid \underline{y})}{P_{\underline{W}}(\underline{w})}\right\} \\
    & = \theta(\underline{y}) \oplus \arg\max_{\underline{e}\in \mathbb{F}_2^N}\left\{P_{\underline{W}|\underline{Y}}(\theta(\underline{y})\oplus \underline{e} \mid \underline{y})\mathbf{1}(\theta(\underline{y})\oplus \underline{e}\in \mathcal{C})\right\}\label{eq:ml_23}\\
    & = \theta(\underline{y}) \oplus \arg\max_{\underline{e}\in \mathbb{F}_2^N}\left\{\prod_{i:e_i = 1}\frac{1}{1+\exp(|\ell_i|)}\prod_{i:e_i = 0}\frac{\exp(|\ell_i|)}{1+\exp(|\ell_i|)}\mathbf{1}(\theta(\underline{y})\oplus \underline{e}\in \mathcal{C})\}\right\}\label{eq:ml_3} \\
    & = \theta(\underline{y}) \oplus \arg\max_{\underline{e}:\theta(\underline{y})\oplus \underline{e}\in \mathcal{C}}\left\{\exp\left(\sum_{i:e_i = 0}|\ell_i|\right)\right\} \\
    & = \theta(\underline{y}) \oplus \arg\min_{\underline{e}:\theta(\underline{y})\oplus \underline{e}\in \mathcal{C}}\left\{\sum_{i:e_i=1} |\ell_i|\right\} = \theta(\underline{y}) \oplus \arg\min_{\underline{e}:\theta(\underline{y})\oplus \underline{e}\in \mathcal{C}} \left\{\zeta(\underline{e})\right\}. \label{eq:ml_4}
\end{align}
Here,
\begin{itemize}
    \item \eqref{eq:ml_23} rewrites the MLD criterion, noting that $P_{\underline{W}}$ is uniform over $\mathcal{C}$ and that $p_{\underline{Y}}(\underline{y})$ does not depend on the choice of $\underline{w}$;
    \item \eqref{eq:ml_3} is from \eqref{eq:posterior_2} in Lemma~\ref{lemma:posterior_probability_codebook}, again noting that the leading normalization coefficient $C(\underline{y})$ therein can be removed.
\end{itemize}
The rule described in \eqref{eq:ml_4} is that we identify the EP as the one, among all those leading to some codeword in $\mathcal{C}$, that attains the smallest soft weight. When we execute the GRAND algorithm with an EP generator that outputs monotonically non-decreasing soft weights, the procedure exactly solves \eqref{eq:ml_4}. This completes our proof.
\end{proof}

\begin{remark}\label{remark:equal_ml}
    An immediate implication of Proposition~\ref{prop:equal_ml} is that the optimal EP generator produces EPs in decreasing order of the guessing posterior $\left\{P_{\underline{W}|\underline{Y}}(\theta(\underline{y})\oplus \underline{e}(t) \mid \underline{y})\right\}_{t = 1, \ldots, 2^N}$.
    In implementation, we usually set some $T < 2^N$, and it is expected that the resulting GRAND algorithm approximates MLD as $T$ becomes sufficiently large.
\end{remark}

The SGRAND algorithm~\cite{solomon2020SGRAND} implements the optimal EP generator in Proposition~\ref{prop:equal_ml}. Since its soft weight depends upon the actual values of LLRs, it can only be implemented in an ``on-the-fly'' fashion, with its parallelization and hardware implementation still remaining at an exploratory stage~\cite{wan2025parallelism}. More generally, many GRAND algorithms can be viewed as
ordering EPs according to a weighted metric. Following \cite{liu2022orbgrand}, we introduce a generalized measure
\begin{equation}
\zeta'(\underline{e})
=\sum_{i:e_i=1}\gamma_i(\underline{\ell},\underline{e}),
\end{equation}
where $\gamma_i$ are nonnegative functions that may depend on both
$\underline{\ell}$ and $\underline{e}$. We have the following unified description of many GRAND algorithms:
\begin{align}\label{eq:gamma}
    \hat{\underline{w}} = \theta(\underline{y})\oplus \arg\min_{\underline{e}:\theta(\underline{y})\oplus \underline{e}\in \mathcal{C}}\sum_{i:e_i=1} \gamma_{i}(\underline{\ell},\underline{e}).
\end{align}
The flexibility of \eqref{eq:gamma} comes from allowing $\gamma_i$, $i=1,\ldots,N$, to depend on $\underline{\ell}$, on the candidate EP $\underline{e}$, or on both. Table~\ref{Tab:1} gives a compact overview of representative GRAND variants under this formulation, and also indicates which of them admit a pre-defined, ranking-based EP sequence. This distinction is important for the ORB-type class studied below.

\begin{table*}[ht]
    \renewcommand{\arraystretch}{1.12}
    \centering
    \begin{tabular}{|c|c|c|c|c|}
    \hline
    \textbf{Algorithm} & \textbf{$\gamma$ function} & \textbf{Pre-generated?} & \textbf{Channel dependent?} & \textbf{ORB-type?} \\ \hline
    Hard GRAND~\cite{duffy2019GRAND} & $1$ & Yes & No & Yes \\ \hline
    Quantized GRAND~\cite{yuan2023guessing} & $|\ell_i|$ (with quantization) & No & -- & No \\ \hline
    SGRAND~\cite{solomon2020SGRAND} & $|\ell_i|$ & No & -- & No \\ \hline
    SRGRAND~\cite{duffy2021SRGRAND} & $1$ or $\infty$ & No & -- & No \\ \hline
    ORBGRAND~\cite{duffy2022ORBGRAND} & $r_i$ & Yes & No & Yes \\ \hline
    UP-ORBGRAND~\cite{liu2022orbgrand} & $\text{Box}(r_i)$ & Yes & Yes & Yes \\ \hline
    CDF-ORBGRAND~\cite{duffy2022ORBGRAND, liu2022orbgrand} & $\text{CDF}^{-1}(r_i/(N{+}1))$ & Yes & Yes & Yes \\ \hline
    Line-ORBGRAND~\cite{duffy2022ORBGRAND} & $r_i$ (with linear adjustment) & Yes & Yes & Yes \\ \hline
    iLWO-GRAND$^{*}$~\cite{condo2021high} & $r_i$ (with fixed linear penalty) & Yes & No & Yes \\ \hline
    sLWO-GRAND$^{*}$~\cite{ji2024efficient} & $r_i$ (with variable linear penalty) & Yes & No & Yes \\ \hline
    $\beta$Weight-GRAND$^{*}$~\cite{wang2023improved} & $r_i$ (with exponential penalty) & Yes & No & Yes \\ \hline
    RS-ORBGRAND$^{*}$~\cite{wan2024approaching} & -- \quad (see Sec.~\ref{SubSec:RS-ORB}) & Yes & Yes & Yes \\ \hline
    \end{tabular}
    \caption{
    Unified characterization of GRAND algorithms under the generalized metric formulation \eqref{eq:gamma}.
    For algorithms with superscript $^{*}$, $\gamma$ depends on both $\underline{\ell}$ and $\underline{e}$; otherwise, $\gamma$ depends only on $\underline{\ell}$. 
    ``Pre-generated?'' denotes whether the algorithm relies on a pre-defined EP sequence $\mathcal{E}$; 
    ``Channel dependent?'' specifies whether $\mathcal{E}$ is adapted according to channel parameters such as SNR; 
    ``ORB-type?'' identifies whether the algorithm belongs to the ORB-type class (see Sec.~\ref{SubSec:ORBTYPE}).
    CDF denotes the cumulative distribution function.
    }
    \label{Tab:1}
\end{table*}

In particular, when the metric depends only on the reliability ranking of the received symbols, the resulting decoder belongs to the class of ORB-type GRAND algorithms, which will be formally defined in the next subsection.

\subsection{ORB-Type GRAND}\label{SubSec:ORBTYPE}

While the optimal GRAND ordering relies on the exact LLR magnitudes $|\ell_i|$, implementing such ordering may be computationally demanding. A natural idea is therefore to replace $|\ell_i|$ with alternative measures $\gamma_i$ ($i=1,\ldots,N$) that are easier to compute, especially measures determined only by the reliability ranking.
In particular, ORBGRAND~\cite{duffy2022ORBGRAND} has received attention due to its simplicity and effectiveness. To describe ORBGRAND, let $\underline{r}$ denote the ranking vector of $|\underline{\ell}|$, where $r_i$ represents the position of $|\ell_i|$ in the ascending order among the elements of $|\underline{\ell}|$. For example, if $|\underline{\ell}| = (1.2, 0.7, 0.4, 1.1, 0.3)$, then $\underline{r} = (5, 3, 2, 4, 1)$. ORBGRAND simply sets $\gamma_i = r_i$ in \eqref{eq:gamma}. Intuitively, the idea is to approximate the exact value of $|\ell_i|$ with its ranking $r_i$ among $|\underline{\ell}|$.

An alternative description of the ORBGRAND algorithm is as follows:
\begin{itemize}
    \item For each $\underline{e} \in \mathbb{F}_2^N$, calculate the ranking weight of $\underline{e}$ as $\sum_{i:e_i=1} i$, and arrange the elements of $\mathbb{F}_2^N$ as an ordered list $\mathcal{E}$ such that their ranking weights are monotonically non-decreasing (ties broken arbitrarily). If the maximum number of tests $T$ is imposed, then truncate $\mathcal{E}$ to include its first $T$ elements only.
    \item For the received vector $\underline{y}$ and thus its corresponding $|\underline{\ell}|$, obtain the ranking $\underline{r}$ which induces a permutation $\pi_{\underline{y}}$ over $\{1, 2, \ldots, N\}$. We refer to this mapping rule as the rank-based permutation.
    \item The EP generator outputs $\underline{e}(t)$ sequentially, and the codeword tester checks whether $\theta(\underline{y}) \oplus \pi_{\underline{y}}(\underline{e}(t)) \in \mathcal{C}$ for $t=1,\ldots,T$.
\end{itemize}

We emphasize that the ordered EP list $\mathcal{E}$ is an abstract representation of the EP ordering rather than a specific implementation. One possible implementation is to pre-generate and store $\mathcal{E}$ offline, in which case the EP generator sequentially reads elements from $\mathcal{E}$ and applies the permutation $\pi_{\underline{y}}$ to obtain $\{\pi_{\underline{y}}(\underline{e}(t))\}_{t=1,\ldots,T}$. Alternatively, as discussed in \cite{duffy2022ORBGRAND}, the same ordered EP sequence can be generated on-the-fly using efficient search procedures that exploit the integer structure of the ranking weights (e.g., $\sum_{i:e_i=1} i$), without explicitly storing $\mathcal{E}$. These implementations are algorithmically equivalent in that they produce the same ordered EP sequence.

In this paper, we use $\mathcal{E}$ as an abstract notation for the ordered EP set, independent of any particular implementation. For convenience, we denote
\begin{equation}
\pi_{\underline{y}}(\mathcal{E})
=
\{\pi_{\underline{y}}(\underline{e}(1)),\ldots,\pi_{\underline{y}}(\underline{e}(T))\}.
\end{equation}


To illustrate the difference between the ordering rules of SGRAND and ORBGRAND, we consider the following example.

\begin{example}\label{example:ORB}
    As shown in Fig.~\ref{fig:ORBvsS}, assume that the transmitted codeword is $\underline{w}=0000011$ and the received LLR vector is $\underline{\ell} = (2.5, 1.1, -0.8, -0.2, 3.3, -4.1, 0.4)$. Thus, the target EP is $\underline{e}^* = \theta(\underline{y}) \oplus \underline{w} = 0011001$. The SGRAND algorithm tests EPs in ascending order according to the soft weight $\zeta(\underline{e}) = \sum_{i:e_i=1} |\ell_i|$, as shown in the table ``SGRAND ordering''. For $T \ge 10$, the target EP $\underline{e}^*$ is tested at $t=10$. If decoding terminates earlier due to identifying another valid codeword, the correct codeword may not be recovered; otherwise the decoder successfully identifies the transmitted codeword at $t=10$.

    To execute the ORBGRAND algorithm, we first generate the ordered list $\mathcal{E}$, and then obtain the permutation $\pi$ based on the ranking $\underline{r} = (5,4,3,1,6,7,2)$. For convenience of exposition, 
    we may represent $\pi_{\underline y}$ by an $N\times N$ permutation matrix $\mathcal P$ whose $(r_i,i)$-th entry equals $1$ for $i=1,\ldots,N$, with all other entries equal to $0$.
    This yields $\pi_{\underline{y}}(\underline{e}) = \underline{e}\cdot\mathcal{P}$. 
    As can be observed from the table ``ORBGRAND ordering'', ordering EPs according to the ranking weight $\sum_{i:e_i=1} r_i$ does not necessarily preserve the ordering induced by the soft weight $\sum_{i:e_i=1} |\ell_i|$. Consequently, the ORBGRAND algorithm may test some EPs in different ordering than the SGRAND algorithm, thereby explaining its potential performance loss.
\begin{figure*}[htbp]
    \begin{gather*}
        \underline{w} = (0,0,0,0,0,1,1)\xrightarrow{\text{BPSK,AWGN}} \underline{\ell} = (2.5, 1.1, -0.8, -0.2, 3.3, -4.1, 0.4)\\
        \underline{y} \xrightarrow{\text{Hard}} \theta(\underline{y}) = (0,0,1,1,0,1,0), \  \underline{\ell}\xrightarrow{\text{Sort}} \underline{r} = (5,4,3,1,6,7,2)
    \end{gather*}
    \textbf{SGRAND ordering} \hspace{4.5cm} \textbf{ORBGRAND ordering} \hspace{3.2cm} \\[2mm] 
    \footnotesize
$\begin{array}{|c|c|c|}
    \hline \text{Time} & \text{EP } \underline{e} & \sum_{i:e_i=1}|\ell_i| \\
    \hline 1 & 0000000 & 0 \\
    \hline 2 & 0001000 & 0.2 \\ 
    \hline 3 & 0000001 & 0.4 \\
    \hline 4 & 0001001 & 0.6 \\
    \hline 5 & 0010000 & 0.8 \\
    \hline 6 & 0011000 & 1.0 \\
    \hline 7 & 0100000 & 1.1 \\
    \hline 8 & 0010001 & 1.2 \\
    \hline 9 & 0101000 & 1.3 \\
    \hline 10 & 0011001 & 1.4 \\
    \hline
\end{array}$\quad \quad \quad 
        $\begin{array}{|c|}
            \hline \mathcal{E} \\
            \hline 0000000  \\
            \hline 1000000  \\ 
            \hline 0100000  \\
            \hline 0010000  \\
            \hline 1100000  \\
            \hline 0001000  \\
            \hline 1010000  \\
            \hline 0000100  \\
            \hline 1001000  \\
            \hline 0110000  \\
            \hline
        \end{array}\  \begin{aligned} 
            \cdot & \left[\begin{array}{lllllll}
            0 & 0 & 0 & 1 & 0 & 0 & 0 \\
            0 & 0 & 0 & 0 & 0 & 0 & 1 \\
            0 & 0 & 1 & 0 & 0 & 0 & 0 \\
            0 & 1 & 0 & 0 & 0 & 0 & 0 \\
            1 & 0 & 0 & 0 & 0 & 0 & 0 \\
            0 & 0 & 0 & 0 & 1 & 0 & 0 \\
            0 & 0 & 0 & 0 & 0 & 1 & 0
            \end{array}\right]\\
        &\  \left(\begin{array}{lllllll}
            5 & 4 & 3 & 1 & 6 & 7 & 2
            \end{array}\right)\end{aligned}= \begin{array}{|c|c|c|c|}
                \hline \text{Time} & \text{EP } \underline{e} & \sum_{i:e_i=1} |\ell_i| & \sum_{i:e_i=1} r_i \\
                \hline 1 & 0000000 & 0 & 0\\
                \hline 2 & 0001000 & 0.2 & 1\\ 
                \hline 3 & 0000001 & 0.4 & 2\\
                \hline 4 & 0010000 & 0.8 & 3\\
                \hline 5 & 0001001 & 0.6 & 3\\
                \hline 6 & {0100000} & {1.1} & 4\\
                \hline 7 & {0011000} & {1.0} & 4\\
                \hline 8 & 1000000 & 2.5 & 5\\
                \hline 9 & 0101000 & 1.3 & 5\\
                \hline 10 & 0010001 & 1.2 & 5\\
                \hline
            \end{array}$\centering
            \caption{SGRAND and ORBGRAND orderings for the same received vector.}
            \label{fig:ORBvsS}
\end{figure*}
\end{example}

Motivated by the observation in Example~\ref{example:ORB}, several subsequent works have used more sophisticated, yet still ranking-based, choices of $\gamma_i$ to better approximate $|\ell_i|$~\cite{duffy2022ORBGRAND, liu2022orbgrand, condo2021high, wang2023improved, ji2024efficient, wan2024approaching}. As suggested in these works, algorithms whose ordering rule depends on the reliability ranking vector $\underline{r}$ rather than the raw LLR vector $\underline{\ell}$ are referred to as ORB-type GRAND algorithms. In this paper, we formalize this notion by slightly generalizing the description of ORBGRAND, as follows:

\begin{definition}[ORB-type GRAND]\label{def:orb_type}
    An ORB-type GRAND algorithm consists of three components: an ordered list of EPs $\mathcal{E} \subseteq \mathbb{F}_2^N$, a permutation law $\pi_{\underline{y}}$ over $\{1,\ldots, N\}$, and a codeword tester. Here $\mathcal{E}$ is a predetermined ordered list of EPs independent of the received vector $\underline Y$. We denote its $t$-th element as $\underline{e}(t)$, $t = 1, \ldots, T$. 
    The permutation $\pi_{\underline{y}}$ depends upon the received $\underline{y}$, and it can be equivalently represented as an $N \times N$ permutation matrix $\mathcal{P}$. The execution procedure is summarized in Algorithm~\ref{alg:ORBTYPE}.
    \begin{algorithm}[ht]
    \caption{ORB-type GRAND Algorithm}
    \label{alg:ORBTYPE}
    \begin{algorithmic}[1]
        \REQUIRE Received vector $\underline{y}$; ordered EP set $\mathcal{E}$; permutation law $\pi_{\underline{y}}$; maximum number of tests $T$.
        \ENSURE Estimated codeword $\hat{\underline{w}}$.
        \STATE Compute $\underline{\ell}$ from $\underline{y}$, and sort $|\underline{\ell}|$ to obtain $\underline{r}$.
        \STATE Determine $\pi$ and  $\mathcal{P}$ according to $\underline{r}$.
        \STATE Initialize $t \leftarrow 0$ and $\textit{flag} \leftarrow 0$.
        \WHILE{$\textit{flag} = 0$ \AND $t < T$}
            \STATE $t \leftarrow t + 1$
            \STATE $\pi(\underline{e}(t)) \leftarrow \underline{e}(t) \cdot \mathcal{P}$ \hfill // Permute the EP
            \IF{$\theta(\underline{y}) \oplus \pi(\underline{e}(t)) \in \mathcal{C}$} 
                \STATE $\textit{flag} \leftarrow 1$
            \ENDIF
        \ENDWHILE
        \IF{$\textit{flag} = 1$}
            \STATE $\hat{\underline{w}} \leftarrow \theta(\underline{y}) \oplus \pi(\underline{e}(t))$ \hfill // Declare decoded codeword
        \ELSE
            \STATE $\hat{\underline{w}} \leftarrow \emptyset$ \hfill // Decoding failure
        \ENDIF
    \end{algorithmic}
    \end{algorithm}
\end{definition}

In the sequel, whenever probabilities or expectations are taken over the channel output, we usually abbreviate the random permutation $\pi_{\underline{Y}}$ simply as $\pi$ whenever no ambiguity arises. For a fixed received vector $\underline{y}$, $\pi_{\underline{y}}$ denotes the corresponding realization; in algorithmic descriptions, $\pi$ denotes this realized permutation. Two representative examples of ORB-type GRAND are given below.

\begin{itemize}
    \item If the receiver only has hard decision results $\theta(\underline{y})$, we can set $\pi$ to be an identical mapping, and generate $\mathcal{E}$ according to the Hamming weight order. This is the original GRAND for hard decision channels \cite{duffy2019GRAND}. 
    \item For a soft-output channel, the rank-based permutation used above maps the $r_i$-th component of an unpermuted EP to channel coordinate $i$.
    Together with the ranking-weight ordered list $\mathcal{E}$ described before Example~\ref{example:ORB}, this specialization gives standard ORBGRAND~\cite{duffy2022ORBGRAND}.
    The same permutation law can also be combined with other ordered lists $\mathcal{E}$, which is the main degree of freedom behind general ORB-type designs~\cite{liu2022orbgrand, condo2021high}.
\end{itemize}

In Sections~\ref{Sec:random_code} and~\ref{Sec:linear_code}, we will study the performance of ORB-type GRAND algorithms for random code ensembles and fixed linear block codes, respectively.

\subsection{Reshuffled ORBGRAND}\label{SubSec:RS-ORB}

A specific ORB-type GRAND algorithm, called ReShuffled-ORBGRAND (RS-ORBGRAND), has been proposed in \cite{wan2024approaching}. 
The design is motivated by the observation in Proposition~\ref{prop:equal_ml} that the optimal GRAND algorithm generates a non-increasing sequence of guessing posteriors (see also Remark~\ref{remark:equal_ml}). However, sorting the guessing posterior sequence requires ``on-the-fly'' generation of EPs and thus does not lead to an ORB-type GRAND algorithm.

To retain the ORB-type structure, RS-ORBGRAND arranges the EPs in $\mathcal{E}$ such that their AGPs are non-increasing. In an ORB-type GRAND algorithm, the $t$-th tested EP is denoted by $\pi(\underline{e}(t))$, and hence the corresponding AGP is given by
\begin{equation}\label{eq:orb_type_agp_def}
    p_t := \mathbf{E}_{\underline{Y}} \left[P_{\underline{W}|\underline{Y}}(\theta(\underline{Y})\oplus \pi(\underline{e}(t)) \mid \underline{Y})\right].
\end{equation}
Similar to Proposition~\ref{prop:agp}, Proposition~\ref{prop:agp_pi} further shows that $p_t$ can be computed offline, thereby ensuring the practical feasibility of this design.

\begin{proposition}\label{prop:agp_pi}
    Consider an output-symmetric channel satisfying \eqref{eqn:output-symmetric-channel} and a permutation rule $\pi_{\underline y}$ that depends on $\underline y$ only through the reliability values $|\ell_1|,\ldots,|\ell_N|$. Then for any codebook $\mathcal C$, we have
    \begin{equation}\label{eq:agp_pi}
        \mathbf{E}_{\underline{Y}} \left[P_{\underline{W}|\underline{Y}}(\theta(\underline{Y})\oplus \pi(\underline{e}) \mid \underline{Y})\right] = \int_{\mathbb{R}^N}\prod_{i=1}^{N}p_{Y}(y_i) \left(\prod_{i:\pi_{\underline{y}}(\underline{e})_i = 1} \frac{1}{1+\exp(|\ell_i|)} \prod_{i:\pi_{\underline{y}}(\underline{e})_i = 0} \frac{\exp(|\ell_i|)} {1+\exp(|\ell_i|)}\right)\mathrm{d}\underline{y},
    \end{equation}
    
    In particular, for the rank-based permutation, we have:
    \begin{align}\label{eq:exact_pt}
        p_t = N!\int_{\underline{y}:|\ell_{1}|<|\ell_{2}|<\cdots<|\ell_{N}|}\prod_{i=1}^{N}p_{Y}(y_i)\left(\prod_{i:e_i(t) = 1} \frac{1}{1+\exp(|\ell_i|)} \prod_{i:e_i(t) = 0} \frac{\exp(|\ell_i|)}{1+\exp(|\ell_i|)}\right)\mathrm{d}\underline{y}.
    \end{align}
\end{proposition}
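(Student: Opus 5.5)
The plan is to reuse the argument of Proposition~\ref{prop:agp} with the fixed EP $\underline{e}$ replaced by the data-dependent EP $\pi_{\underline{y}}(\underline{e})$, and then to exploit the symmetry of the integrand under coordinate permutations to obtain \eqref{eq:exact_pt}.

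\textbf{Step 1: the codebook-free formula \eqref{eq:agp_pi}.} Write the expectation as an integral of $p_{\underline{Y}}(\underline{y})\,P_{\underline{W}|\underline{Y}}(\theta(\underline{y})\oplus\pi_{\underline{y}}(\underline{e})\mid\underline{y})$ over $\mathbb{R}^N$. Lemma~\ref{lemma:posterior_probability_codebook} applies pointwise for each fixed $\underline{y}$, with the EP $\pi_{\underline{y}}(\underline{e})$, since it holds for every vector in $\mathbb{F}_2^N$. Substituting \eqref{eqn:C-y} cancels $p_{\underline{Y}}(\underline{y})$. The integrand becomes $2^{N-K}\prod_i p_Y(y_i)$ times the product of logistic factors indexed by $\pi_{\underline{y}}(\underline{e})$, times $\mathbf{1}(\theta(\underline{y})\oplus\pi_{\underline{y}}(\underline{e})\in\mathcal{C})$.

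Now restrict to $\underline{y}\in\mathbb{R}_+^N$ and consider its $2^N$ sign flips. By \eqref{eqn:prop_symm_1} and \eqref{eqn:prop_symm_2}, every flip leaves $\prod_i p_Y(y_i)$ and all $|\ell_i|$ unchanged. Because $\pi_{\underline{y}}$ depends on $\underline{y}$ only through $|\ell_1|,\ldots,|\ell_N|$, it is also identical across the $2^N$ flips. Hence $\pi_{\underline{y}}(\underline{e})$ and the logistic product are common to all flips. This is the step that needs care, and the hypothesis on $\pi$ is used exactly here.

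Meanwhile $\theta(\underline{y})$ ranges over all of $\mathbb{F}_2^N$ as the signs vary. Points with $\ell_i=0$ must be handled, since $\theta$ maps both signs to $0$ there; this set has measure zero and can be ignored, given that $\underline{Y}$ has a density and $\ell$ is a function of $y$. So $\theta(\underline{y})\oplus\pi_{\underline{y}}(\underline{e})$ also ranges bijectively over $\mathbb{F}_2^N$, and exactly $2^K$ flips satisfy the codeword indicator. The factors $2^K\cdot 2^{N-K}$ then reassemble the integral over $\mathbb{R}^N$ without the indicator, exactly as in Proposition~\ref{prop:agp}, which yields \eqref{eq:agp_pi}.

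\textbf{Step 2: the rank-based case \eqref{eq:exact_pt}.} Take $\underline{e}=\underline{e}(t)$ and partition $\mathbb{R}^N$, up to the null set of ties in $|\ell_i|$, into the $N!$ regions $\{|\ell_{\sigma(1)}|<\cdots<|\ell_{\sigma(N)}|\}$ indexed by permutations $\sigma$. On each region the rank-based permutation is fixed: the $r$-th component of $\underline{e}$ is placed at the coordinate of rank $r$. So $\pi_{\underline{y}}(\underline{e})_{\sigma(r)}=e_r$, and the integrand equals $\prod_i p_Y(y_i)\prod_{r:e_r=1}\frac{1}{1+\exp(|\ell_{\sigma(r)}|)}\prod_{r:e_r=0}\frac{\exp(|\ell_{\sigma(r)}|)}{1+\exp(|\ell_{\sigma(r)}|)}$.

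Next change variables $y'_r=y_{\sigma(r)}$. This has unit Jacobian, the product density is invariant, and the map sends the region to $\{|\ell_1|<\cdots<|\ell_N|\}$. There the integrand becomes exactly the one in \eqref{eq:exact_pt}, with $e_i(t)$ attached to coordinate $i$. All $N!$ regions therefore contribute the same integral, which gives the factor $N!$.

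The only real subtlety is checking that the sign-flip orbit argument survives the $\underline{y}$-dependence of $\pi$. Beyond that, the ties and zero-LLR sets must be verified to be null; if $|\ell|$ can have atoms, the tie set would need a separate remark.
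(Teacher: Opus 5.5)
Your proposal is correct and follows the same route as the paper. You apply Lemma~\ref{lemma:posterior_probability_codebook} pointwise with the EP $\pi_{\underline{y}}(\underline{e})$ and rerun the sign-flip orbit argument of Proposition~\ref{prop:agp}, which works because $\pi_{\underline{y}}$ depends only on $|\ell_1|,\ldots,|\ell_N|$ and is therefore constant on each orbit. You then split $\mathbb{R}^N$ into the $N!$ reliability-order regions, and your explicit relabeling $y'_r=y_{\sigma(r)}$ makes precise the paper's appeal to permutation symmetry of the integrand.

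One side remark needs correcting: $\{\ell_i=0\}$ is not automatically null just because $\underline{Y}$ has a density. An output-symmetric channel may have $q_0=q_1$ on an interval, and then the $2^K$ orbit count, and even the formula itself, can fail. For example, with $\mathcal{C}=\{00,11\}$, $q_0\equiv q_1$ and identity $\pi$, the left side for $\underline{e}=00$ is $1/2$ while the right side is $1/4$. So $\Pr(L_i=0)=0$ is an extra standing assumption, one the paper's proof also uses without stating it.
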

\begin{proof}
    For an arbitrary codebook $\mathcal{C}$, following the steps in \eqref{eq:agp_expect2int} and \eqref{eqn:agp-all}, we have 
    \begin{align}
        & \mathbf{E}_{\underline{Y}} \left[P_{\underline{W}|\underline{Y}}(\theta(\underline{Y})\oplus \pi(\underline{e}) \mid \underline{Y})\right] = \int_{\mathbb{R}^N} p_{\underline{Y}}(\underline{y}) P_{\underline{W}|\underline{Y}}(\theta(\underline{y})\oplus \pi_{\underline{y}}(\underline{e}) \mid \underline{y})
        \mathrm{d}\underline{y}.  \\
        = &\int_{\mathbb{R}^N} 2^{N - K} \prod_{i=1}^{N}p_{Y}(y_i) \left(\prod_{i:\pi_{\underline{y}}(\underline{e})_i = 1} \frac{1}{1+\exp(|\ell_i|)} \prod_{i:\pi_{\underline{y}}(\underline{e})_i = 0} \frac{\exp(|\ell_i|)} {1+\exp(|\ell_i|)}\right) \mathbf{1}\left(\theta(\underline{y}) \oplus \pi_{\underline{y}}(\underline{e}) \in \mathcal{C}\right) \mathrm{d}\underline{y}. 
    \end{align}
    Then, similar to the derivation in Proposition~\ref{prop:agp}, we have:
    \begin{eqnarray}\label{eq:Long4.1}
        \mathbf{E}_{\underline{Y}} \left[P_{\underline{W}|\underline{Y}}(\theta(\underline{Y})\oplus \pi(\underline{e}) \mid \underline{Y})\right] = \int_{\mathbb{R}^N}\prod_{i=1}^{N}p_{Y}(y_i)\left(\prod_{i:\pi_{\underline{y}}(\underline{e})_i = 1 } \frac{1}{1+\exp(|\ell_i|)} \prod_{i:\pi_{\underline{y}}(\underline{e})_i = 0} \frac{\exp(|\ell_i|)} {1+\exp(|\ell_i|)}\right)\mathrm{d}\underline{y}
    \end{eqnarray}
    For the rank-based permutation $\pi$, we further have:
    \begin{align}
        & \mathbf{E}_{\underline{Y}} \left[P_{\underline{W}|\underline{Y}}(\theta(\underline{Y})\oplus \pi(\underline{e}(t)) \mid \underline{Y})\right] = \int_{\mathbb{R}^N}\prod_{i=1}^{N}p_{Y}(y_{i})\left(\prod_{i:e_{r_i} = 1} \frac{1}{1+\exp(|\ell_i|)} \prod_{i:e_{r_i} = 0} \frac{\exp(|\ell_i|)}{1+\exp(|\ell_i|)}\right)\mathrm{d}\underline{y}.\label{eq:Long4.2} \\
        = & N!\int_{\underline{y}:|\ell_{1}|<|\ell_{2}|<\cdots<|\ell_{N}|}\prod_{i=1}^{N}(p_{Y}(y_i))\left(\prod_{i:e_i(t) = 1} \frac{1}{1+\exp(|\ell_i|)} \prod_{i:e_i(t) = 0} \frac{\exp(|\ell_i|)}{1+\exp(|\ell_i|)}\right)\mathrm{d}\underline{y}.\label{eq:Long4.3}
    \end{align}

    To obtain the second equality in \eqref{eq:Long4.3} from \eqref{eq:Long4.2}, we use the symmetry of the integrand under coordinate permutations. Since the channel is memoryless, the density factorizes as $\prod_{i=1}^{N}p_Y(y_i)$, and the remaining product terms depend on $\underline{y}$ only through the multiset $\{|\ell_1|,\ldots,|\ell_N|\}$. Hence, permuting the coordinates of $\underline{y}$ leaves the integrand unchanged. Let $\mathcal{A}$ denote the order-statistics region $\mathcal{A}=\{\underline{y}:|\ell_1|<|\ell_2|<\cdots<|\ell_N|\}$. Assuming a continuous output distribution (so ties occur with probability zero), $\mathbb{R}^N$ can be partitioned into $N!$ disjoint regions obtained by permuting the inequalities in $\mathcal{A}$, and the integral over each region is identical. Therefore, the integral over $\mathbb{R}^N$ equals $N!$ times the integral over $\mathcal{A}$. Moreover, on $\mathcal{A}$ the ranking is $\underline{r}=(1,2,\ldots,N)$, i.e., $r_i=i$, so that $e_{r_i}(t)=e_i(t)$, which yields \eqref{eq:Long4.3}.
\end{proof}

\noindent\textit{Practical constructions.}
The AGP-ordering principle can be implemented offline in different ways; these implementations should be distinguished from the theoretical ordering results proved later. We highlight two possible schemes.

\begin{itemize}
    \item \textit{Scheme~1: AGP computation over a candidate list.}
    Start from a finite ordered EP list $\mathcal{E}$ generated by an existing ORB-type GRAND algorithm, compute the AGPs of its elements using Proposition~\ref{prop:agp_pi} or Monte Carlo integration, reshuffle the list in non-increasing AGP order, and then keep the first $T$ EPs. This scheme is directly tied to the AGP expression, but it requires evaluating $p_t$ for the candidate EPs and depends on the quality of the initial candidate list.
    \item \textit{Scheme~2: empirical target-EP ordering.}
    We repeatedly draw a transmitted codeword $\underline{W}$ and generate the corresponding channel output $\underline{Y}$ according to the channel. For each realization, we record the target EP before applying the rank-based permutation,
    \begin{equation}
        \underline{E}_{\mathrm{tar}}
        =
        \pi_{\underline{Y}}^{-1}\!\left(\theta(\underline{Y})\oplus \underline{W}\right).
    \end{equation}
    Conditioned on $\underline{Y}$, the event $\underline{E}_{\mathrm{tar}}=\underline{e}$ is equivalent to $\underline{W}=\theta(\underline{Y})\oplus\pi(\underline{e})$. Hence, by averaging over $\underline{Y}$, the empirical frequency of each unpermuted EP estimates its AGP:
    \begin{equation}
        \Pr(\underline{E}_{\mathrm{tar}}=\underline{e})
        =
        \mathbf{E}_{\underline{Y}}\!\left[
        P_{\underline{W}|\underline{Y}}\!\left(
        \theta(\underline{Y})\oplus \pi(\underline{e})\mid \underline{Y}
        \right)\right].
    \end{equation}
    This scheme avoids evaluating a closed-form expression for $p_t$ and may be useful when such an expression is difficult to obtain.
\end{itemize}

In practice, one may generate an auxiliary list of size $T_1>T$, compute the AGP of each EP in this list, sort the EPs in non-increasing AGP order, and then keep the first $T$ EPs. Algorithm~\ref{alg:E_RS} records this offline finite-list construction. The sorting permutation $\widetilde{\pi}$ is used only to construct the reshuffled list $\widetilde{\mathcal{E}}$; online decoding then follows Algorithm~\ref{alg:ORBTYPE} with this fixed list.

\begin{algorithm}[ht]
    \caption{Finite-list construction of $\widetilde{\mathcal{E}}$ for RS-ORBGRAND}
    \label{alg:E_RS}
    \begin{algorithmic}[1]
        \REQUIRE Initial ordered EP list $\mathcal{E}=\{\underline{e}(t)\}_{t=1}^{T_1}$; maximum number of tests $T<T_1$; permutation criterion $\pi_{\underline{Y}}$.
        \ENSURE Reshuffled EP list $\widetilde{\mathcal{E}}$.
        \FOR{$t = 1$ to $T_1$}
            \STATE Compute the AGP:
            \begin{equation}
                p_t \leftarrow \mathbf{E}_{\underline{Y}}\!\left[
                P_{\underline{W}|\underline{Y}}\!\left(
                \theta(\underline{Y}) \oplus \pi(\underline{e}(t)) \mid \underline{Y}
                \right)\right].
            \end{equation}
        \ENDFOR
        \STATE Sort $\{p_t\}_{t=1}^{T_1}$ in descending order to obtain $\widetilde{\pi}$.
        \STATE Reshuffle $\mathcal{E}$ according to $\widetilde{\pi}$ and keep the first $T$ EPs to form $\widetilde{\mathcal{E}}$.
        \STATE Use \textbf{Algorithm~\ref{alg:ORBTYPE}} with $(\widetilde{\mathcal{E}},\, \pi_{\underline{Y}})$ for decoding.
    \end{algorithmic}
\end{algorithm}

The subsequent sections analyze the AGP-based ordering rule used in the construction above. We first show that ordering EPs by non-increasing AGP is optimal for random code ensembles, in the sense that it simultaneously minimizes the ensemble-average decoding error probability and the ensemble-average number of tests over the EP set under consideration. For fixed linear block codes, the corresponding result is more limited but still useful: under an output-symmetric channel and the rank-based permutation, within any fixed $T$-element candidate EP set, some optimal ordering has a non-increasing AGP sequence $\{p_t\}$. The choice of the candidate set itself is part of the practical construction; in RS-ORBGRAND, this is handled by the finite-list procedure in Algorithm~\ref{alg:E_RS}.

\section{Random Code Ensemble} \label{Sec:random_code}

In this section, we analyze ORB-type GRAND under the random code ensemble. This setting provides a clean baseline in which the effect of EP ordering can be isolated: the target-miss component is determined by the AGPs, while the target-preemption component has an ensemble-averaged form independent of any particular code structure. We first describe the random codebook model under consideration.
We then study three key aspects of decoding performance:
\begin{enumerate}
    \item block error rate (BLER) of ORB-type GRAND algorithms;
    \item distribution of the number of decoding tests;
    \item AGP-based ordering principles for ORB-type GRAND algorithms.
\end{enumerate}
Based on these results, we further establish the ordering principle based on non-increasing AGP, which motivates RS-ORBGRAND and serves as a reference point for the fixed-code analysis in the next section.

\subsection{Random Code Ensemble}\label{SubSec:channel_model}

Starting with the channel model in Section~\ref{SubSec:basic_channel_model}, we further consider the codebook as a subset of size $M = 2^K$ uniformly drawn from all possible size-$2^K$ subsets of $\mathbb{F}_2^N$ at random. Denote the random codebook by $\mathbfcal{C}$, and its realization by $\mathcal{C}$.

Operationally, one may generate $\mathbfcal{C}$ via sampling without replacement. First, draw $\underline{W}(1)$ from $\mathbb{F}_2^N$ uniformly at random; then, draw $\underline{W}(2)$ from $\mathbb{F}_2^N \backslash \{\underline{W}(1)\}$ uniformly at random; continuing this procedure, draw $\underline{W}(m)$ from $\mathbb{F}_2^N \backslash \{\underline{W}(1), \ldots, \underline{W}(m-1)\}$ uniformly at random, until $m = 2^K$.

Note that the random code ensemble of $\mathbfcal{C}$ is different from the more commonly considered i.i.d. random code ensemble, which corresponds to sampling $\mathbb{F}_2^N$ with replacement. For $\mathbfcal{C}$, all the codewords are distinct, and this property facilitates our analysis. Also note that for $\underline{W}$ uniformly drawn from $\mathbfcal{C}$ at random, each of its $N$ positions is uniform over $\mathbb{F}_2$.

\subsection{Analysis of Block Error Rate}\label{SubSec:random_error_rate}

In this subsection, we analyze the BLER averaged over the random code ensemble. The result is stated in the following theorem.

\begin{theorem}[BLER of random code ensemble]\label{th:random_error_rate}
    For a given ORB-type GRAND algorithm, the BLER averaged over the random code ensemble is
    \begin{align}\label{eq:random_bler_main}
        P_{\text{err}} = 1 - P_{\text{succ}} = 1 - \sum_{t=1}^{T}p_t\left(\prod_{i=1}^{t-1}\frac{2^N-2^K+1-i}{2^N-i}\right),
    \end{align}
    where $p_t = \mathbf{E}_{\underline{Y}} \left[P_{\underline{W}|\underline{Y}}(\theta(\underline{Y})\oplus \pi(\underline{e}(t)) \mid \underline{Y})\right]$, $T$ denotes the maximum number of tests allowed by the decoder.
\end{theorem}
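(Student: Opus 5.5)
The plan is to decompose the success event by the position of the target EP in the tested sequence. Decoding succeeds if and only if, for some $t\le T$, the $t$-th tested EP $\pi(\underline{e}(t))$ equals the target EP $\underline{e}^*=\theta(\underline{Y})\oplus\underline{W}$, and none of the earlier candidates $\theta(\underline{Y})\oplus\pi(\underline{e}(i))$, $i<t$, lies in the codebook. Because $\mathcal{E}$ is a list of distinct EPs and $\pi_{\underline{y}}$ is a bijection on $\mathbb{F}_2^N$, the events $\{\pi(\underline{e}(t))=\underline{e}^*\}$ are disjoint over $t$. This gives
\begin{equation}
P_{\text{succ}}=\sum_{t=1}^T \Pr\Big(\underline{e}^*=\pi(\underline{e}(t)),\ \theta(\underline{Y})\oplus\pi(\underline{e}(i))\notin\mathbfcal{C}\ \forall i<t\Big),
\end{equation}
where the probability is joint over the random codebook, the transmitted codeword and the channel output.

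The key step is to factor each summand into the target term and a preemption term. I will condition on $\underline{Y}=\underline{y}$ and on the transmitted $\underline{W}=\underline{w}$, with $\underline{w}=\theta(\underline{y})\oplus\pi_{\underline{y}}(\underline{e}(t))$. By the exchangeability of the sampling-without-replacement ensemble, the remaining $2^K-1$ codewords, conditioned on $\underline{w}\in\mathbfcal{C}$, form a uniformly random $(2^K-1)$-subset of $\mathbb{F}_2^N\setminus\{\underline{w}\}$. I need to argue that this remains true after conditioning on $\underline{Y}$ as well. The channel output depends only on $\underline{W}$, so given $\underline{W}=\underline{w}$, the output $\underline{Y}$ is independent of the other codewords.

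The $t-1$ earlier candidates are distinct vectors of $\mathbb{F}_2^N\setminus\{\underline{w}\}$. The probability that a uniform $(2^K-1)$-subset of a $(2^N-1)$-set avoids $t-1$ fixed elements is
\begin{equation}
\binom{2^N-t}{2^K-1}\Big/\binom{2^N-1}{2^K-1}=\prod_{i=1}^{t-1}\frac{2^N-2^K+1-i}{2^N-i}.
\end{equation}
This is verified by sequentially requiring each excluded element to fall among the $2^N-2^K$ non-codewords. The factor does not depend on $\underline{y}$ or $\underline{w}$, so it pulls out of the expectation.

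What remains is $\Pr(\underline{W}=\theta(\underline{Y})\oplus\pi(\underline{e}(t)))$, averaged over the ensemble. Writing it as $\mathbf{E}_{\underline{Y}}[P_{\underline{W}|\underline{Y}}(\theta(\underline{Y})\oplus\pi(\underline{e}(t))\mid\underline{Y})]$ for each fixed codebook, Proposition~\ref{prop:agp_pi} shows it equals $p_t$ independently of $\mathcal{C}$. Hence the ensemble average is also $p_t$.

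The main obstacle is the conditioning in the preemption step. The posterior $P_{\underline{W}|\underline{Y}}$ depends on the codebook, so I cannot naively condition on $\underline{Y}$ alone. To handle this cleanly, I will order the randomness as follows: first draw $\underline{W}$ uniformly on $\mathbb{F}_2^N$ (valid, since the codeword uniform on a uniform random codebook is marginally uniform), then draw $\underline{Y}$ through the channel, then fill in the remaining codewords uniformly from $\mathbb{F}_2^N\setminus\{\underline{W}\}$. Under this construction the joint law matches the ensemble, and the independence claim becomes immediate. Summing the factored terms over $t$ then gives \eqref{eq:random_bler_main}.
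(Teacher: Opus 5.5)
Your proposal is correct and is essentially the paper's argument. The paper's Lemma~\ref{lemma:1} computes exactly your summand, $\Pr(\text{target at } t,\ \text{no earlier hit}) = p_t\prod_{i=1}^{t-1}\frac{2^N-2^K+1-i}{2^N-i}$, using the same Markov chain $\{\underline{W}(2),\ldots,\underline{W}(2^K)\}\leftrightarrow\underline{W}(1)\leftrightarrow\underline{Y}$ and sampling without replacement; it then presents the result as $P_{\mathrm{I}}+P_{\mathrm{II}}$ (Propositions~\ref{prop:search_problem} and~\ref{prop:random_pre_target_hit}) rather than summing the success events directly.

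One small point: routing the last step through Proposition~\ref{prop:agp_pi} brings in output symmetry, which the theorem does not assume. Your own construction, with $\underline{W}$ uniform on $\mathbb{F}_2^N$, already gives $\Pr(\underline{W}=\theta(\underline{Y})\oplus\pi(\underline{e}(t)))=p_t$ directly, exactly as in the proof of Proposition~\ref{prop:search_problem}.
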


The rest of this subsection is devoted to proving Theorem~\ref{th:random_error_rate}. Fig.~\ref{fig:tikz1} visually shows the logic flow chart of the proof. In a nutshell, we decompose $P_{\text{err}}$ into the target-miss and target-preemption components, as shown in Proposition~\ref{prop:error_division}. These two parts are then quantified in Propositions~\ref{prop:search_problem} and~\ref{prop:random_pre_target_hit}, respectively.

\begin{figure*}[htbp]
    \centering
    \begin{tikzpicture}[
        scale=0.95, >=latex,
        box/.style={
            draw,
            rounded corners=2pt,
            minimum height=5.2em,
            text width=7.8em,
            align=center,
            font=\small
        },
        arrow/.style={
            ->,
            line width=1.0pt
        }
    ]

        \node[box] (n2) at (4,0) {Theorem~\ref{th:random_error_rate} \\ Error Probability};
        \node[box] (n3) at (8,0) {Theorem~\ref{th:random_opti} \\ Optimal EP Ordering};
        \node[box] (n4) at (12,0) {Theorem~\ref{th:ave_guess_num} \\ Average Number of Tests};

        \node[box] (n5) at (0,-3.2) {Proposition~\ref{prop:error_division} \\ Error Event Decomposition};
        \node[box] (n6) at (4,-3.2) {Proposition~\ref{prop:search_problem} \\ Target Miss Error};
        \node[box] (n7) at (8,-3.2) {Proposition~\ref{prop:random_pre_target_hit} \\ Target Preemption Error};
        \node[box] (n8) at (12,-3.2) {Proposition~\ref{prop:stop_at_t} \\ Distribution of Stopping Time};

        \node[box] (n9) at (0,-6.4) {Proposition~\ref{prop:agp_pi} \\ AGP Calculation};
        \node[box] (n10) at (4,-6.4) {Corollary~\ref{coro:distribution_pt} \\ Target EP Distribution};
        \node[box] (n11) at (8,-6.4) {Lemma~\ref{lemma:1} \\ Pre-Target Hit Probability};
        \node[box] (n13) at (12,-6.4) {Corollary~\ref{coro:stop_succ_prob} \\ Conditional Error Probability};

        \draw[arrow] (n5) -- (n2);
        \draw[arrow] (n6) -- (n2);
        \draw[arrow] (n7) -- (n2);

        \draw[arrow] (n9) -- (n6);
        \draw[arrow] (n10) -- (n6);
        \draw[arrow] (n10) -- (n7);
        \draw[arrow] (n11) -- (n7);

        \draw[arrow] (n10) -- (n8);
        \draw[arrow] (n7) -- (n8);
        \draw[arrow] (n11) -- (n8);

        \draw[arrow] (n8) -- (n4);
        \draw[arrow] (n2) -- (n3);
        \draw[arrow] (n4) -- (n3);

        \draw[arrow] (n8) -- (n13);
        \draw[arrow] (n11) -- (n13);

    \end{tikzpicture}
    \caption{Main dependencies among the random code results.}
    \label{fig:tikz1}
\end{figure*}

\begin{proposition}[Error event decomposition]\label{prop:error_division}
The BLER averaged over the random code ensemble can be decomposed as
\begin{align}\label{eq:error-decomposition}
P_{\text{err}} = P_{\text{I}} + P_{\text{II}},
\end{align}
where $P_{\text{I}}$ represents the Type~I (target-miss) error and $P_{\text{II}}$ represents the Type~II (target-preemption) error, specified as follows:
\begin{align}
P_{\text{I}} &= 
\Pr\!\left(
\theta(\underline{Y})\oplus\pi(\underline{e}(t))
\neq
\underline{W}(1),\,
\forall\ t\le T
\right), \\
P_{\text{II}}
&=
\sum_{t=1}^{T}
\Pr\!\left(
\theta(\underline{Y})\oplus\pi(\underline{e}(t))
=
\underline{W}(1)
\right)
\Pr\!\left(
\theta(\underline{Y})\oplus\pi(\underline{e}(t'))
\in\mathbfcal{C},\,
\exists\ t'<t
\mid
\theta(\underline{Y})\oplus\pi(\underline{e}(t))=\underline{W}(1)
\right).
\end{align}
\end{proposition}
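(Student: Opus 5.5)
The decomposition follows from partitioning the error event according to whether, and where, the target EP appears among the $T$ tested candidates. First, by the symmetry of the random code ensemble $\mathbfcal{C}$, all codewords are exchangeable. Hence the ensemble-average BLER is unchanged if we condition on the transmitted codeword being $\underline{W}(1)$, the first codeword drawn in the sampling-without-replacement construction. We therefore write $\underline{W}=\underline{W}(1)$ throughout and take probabilities jointly over the codebook, the channel output $\underline{Y}$, and hence the permutation $\pi=\pi_{\underline{Y}}$.

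Next we define the events $A_t=\{\theta(\underline{Y})\oplus\pi(\underline{e}(t))=\underline{W}(1)\}$ for $t=1,\ldots,T$. Because the EPs in $\mathcal{E}$ are distinct and $\pi_{\underline{y}}$ is a bijection on $\mathbb{F}_2^N$ for each fixed $\underline{y}$, the candidates $\theta(\underline{y})\oplus\pi_{\underline{y}}(\underline{e}(t))$ are pairwise distinct. Consequently the $A_t$ are mutually disjoint. Let $A_0$ denote the complement of $\bigcup_{t\le T}A_t$, i.e., the target EP is never tested. These events form a partition, so
\begin{align*}
P_{\text{err}}=\Pr(\text{err}\cap A_0)+\sum_{t=1}^{T}\Pr(\text{err}\cap A_t).
\end{align*}

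We then identify each term. On $A_0$ the correct codeword is never produced by the decoder. The output is either a competing codeword or the failure symbol $\emptyset$, and in both cases an error occurs, so $\Pr(\text{err}\cap A_0)=\Pr(A_0)=P_{\text{I}}$. On $A_t$ the decoder reaches test $t$ and outputs $\underline{W}(1)$ unless it stopped earlier. Stopping earlier means some $t'<t$ passes the codeword test, i.e., $\theta(\underline{Y})\oplus\pi(\underline{e}(t'))\in\mathbfcal{C}$. Any such earlier candidate necessarily differs from $\underline{W}(1)$ by the distinctness established above, so the output is wrong. Conversely, if no earlier test passes, test $t$ passes and decoding succeeds. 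Hence $\text{err}\cap A_t=A_t\cap\{\exists\, t'<t:\ \theta(\underline{Y})\oplus\pi(\underline{e}(t'))\in\mathbfcal{C}\}$. Writing this probability as $\Pr(A_t)$ times the conditional probability gives the $t$-th summand of $P_{\text{II}}$. When $\Pr(A_t)=0$ the term is $0$ and the conditional probability can be defined arbitrarily.

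The only delicate point is the disjointness of the $A_t$ and the fact that a pre-target hit is always a wrong codeword. Both rest on $\pi_{\underline{y}}$ being a bijection and on $\mathcal{E}$ having no repeated entries. Everything else is bookkeeping with the law of total probability, so I expect no real obstacle.
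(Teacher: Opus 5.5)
Your proposal is correct and follows essentially the same route as the paper: the error event is partitioned by whether and where the target EP appears among the $T$ tested candidates, and the exchangeability of the codewords in $\mathbfcal{C}$ is used to reduce to $\underline{W}(1)$. The only differences are presentational. The paper writes the decomposition for each fixed $(\mathcal{C},\underline{w}(m))$ and then invokes symmetry, whereas you invoke symmetry first and explicitly justify two facts the paper leaves implicit: that the events are disjoint, and that an earlier hit is always a wrong codeword.
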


\begin{proof}\label{pf:prop:error_division} 
Consider the execution of an ORB-type GRAND algorithm.
Taking the ensemble average of the decoding error probability over the random codebook $\mathbfcal{C}$ yields
\begin{align}
    P_{\text{err}} & = \sum_{\mathcal{C}}\Pr(\mathbfcal{C} = \mathcal{C})\sum_{m=1}^{2^{K}}\Pr(\underline{W} = \underline{w}(m) \mid \mathbfcal{C} = \mathcal{C})\cdot \left[\Pr(\theta(\underline{Y})\oplus \pi(\underline{e}(t))\neq \underline{w}(m),\forall\ t \leq T \mid \underline{W} = \underline{w}(m)) + \right.  \notag \\
    & \left. \sum_{t=1}^{T}\Pr(\theta(\underline{Y})\oplus \pi(\underline{e}(t)) = \underline{w}(m) \mid \underline{W} = \underline{w}(m))\cdot \right. \notag\\
    & \left. \Pr(\theta(\underline{Y})\oplus \pi(\underline{e}(t'))\in \mathcal{C},\exists\ t'< t \mid \theta(\underline{Y})\oplus \pi(\underline{e}(t))=\underline{w}(m), \underline{W} = \underline{w}(m))\right], \label{eq:Long0.1}
\end{align}
which decomposes the decoding error event for given $\mathcal{C}$ and $\underline{w}(m)$ into the two types introduced in Section~\ref{SubSec:GRAND}: 
The first term corresponds to the event that the target EP $\underline{e}^*=\theta(\underline{Y})\oplus\underline{W}(m)$ does not appear in the first $T$ elements of the permuted EP sequence $\{\pi(\underline{e}(t))\}$.
The second term corresponds to the event that the target EP appears at position $t$ with $t\le T$, but another EP $\pi(\underline{e}(t'))$ with $t'<t$ already produces a valid codeword in $\mathcal{C}$, causing a target-preemption error.

Due to the symmetry of $\mathbfcal{C}$, similar to the standard argument in channel coding theorems, we can swap the summations over $\mathcal{C}$ and $m$ in \eqref{eq:Long0.1} and only consider $\underline{W}(1)$ drawn from $\mathbfcal{C}$ and its induced $\underline{Y}$, leading to
\begin{align}
    P_{\text{err}} & = \Pr(\theta(\underline{Y})\oplus \pi(\underline{e}(t))\neq \underline{W}(1),\forall\ t\leq  T) \notag \\
    & \quad \quad + \sum_{t=1}^{T}\Pr(\theta(\underline{Y})\oplus \pi(\underline{e}(t)) = \underline{W}(1)) \notag \\
    & \quad \quad \cdot\Pr(\theta(\underline{Y})\oplus \pi(\underline{e}(t'))\in \mathbfcal{C},\exists\ t'< t \mid \theta(\underline{Y})\oplus \pi(\underline{e}(t))=\underline{W}(1)), \label{eq:Long0.5}
\end{align}
thereby completing the proof.
\end{proof}

In the sequel, we proceed to calculate these two error probabilities respectively.

\subsubsection{Target-Miss Error}\label{SubSub:first_item}
The following proposition characterizes the target-miss probability.

\begin{proposition}[Target-miss error]\label{prop:search_problem}
    The target-miss probability is given by
    \begin{align}
        P_{\text{I}} = \Pr(\theta(\underline{Y})\oplus \pi(\underline{e}(t))\neq \underline{W}(1),\forall\ t \leq T) = 1 - \sum_{t=1}^{T} p_t, \label{eq:first_err}
    \end{align}
    where $p_t = \mathbf{E}_{\underline{Y}}\left[P_{\underline{W}|\underline{Y}}(\theta(\underline{Y})\oplus \pi(\underline{e}(t)) \mid \underline{Y})\right]$.
\end{proposition}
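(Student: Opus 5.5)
The proof works with the complementary event. The target EP $\underline{e}^*=\theta(\underline{Y})\oplus\underline{W}(1)$ is hit by some test $t\le T$ exactly when $\theta(\underline{Y})\oplus\pi(\underline{e}(t))=\underline{W}(1)$ for some $t\le T$.

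The first step is to show that these events are pairwise disjoint over $t$. This holds provided the permuted EPs $\pi_{\underline{y}}(\underline{e}(1)),\ldots,\pi_{\underline{y}}(\underline{e}(T))$ are distinct for every $\underline{y}$. That in turn follows because $\mathcal{E}$ is a list of distinct EPs and $\pi_{\underline{y}}$ is a bijection of coordinates, hence injective on $\mathbb{F}_2^N$. I regard this as an implicit assumption of Definition~\ref{def:orb_type}; if repeated EPs were allowed, the statement would only be an inequality. With disjointness,
\[
P_{\text{I}} = 1-\sum_{t=1}^{T}\Pr\bigl(\theta(\underline{Y})\oplus\pi(\underline{e}(t))=\underline{W}(1)\bigr).
\]

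The second step identifies each summand with $p_t$. I condition on $\underline{Y}$ and use the tower property:
\[
\Pr\bigl(\theta(\underline{Y})\oplus\pi(\underline{e}(t))=\underline{W}(1)\bigr)=\mathbf{E}_{\underline{Y}}\bigl[P_{\underline{W}|\underline{Y}}(\theta(\underline{Y})\oplus\pi(\underline{e}(t))\mid\underline{Y})\bigr].
\]
This uses that, given $\underline{Y}=\underline{y}$, the vector $\theta(\underline{y})\oplus\pi_{\underline{y}}(\underline{e}(t))$ is deterministic, so the event reduces to $\underline{W}$ taking a fixed value.

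The main subtlety is that the randomness here also includes the codebook $\mathbfcal{C}$, whereas $p_t$ is defined through the posterior for a given codebook. I would first condition on $\mathbfcal{C}=\mathcal{C}$. For each fixed $\mathcal{C}$, Proposition~\ref{prop:agp_pi} (output symmetry, and a permutation depending only on $|\ell_i|$) shows that the AGP equals the codebook-independent integral in \eqref{eq:agp_pi}. Averaging over $\mathcal{C}$ therefore leaves the value unchanged and gives exactly $p_t$.

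Finally, by the ensemble symmetry argument already used in Proposition~\ref{prop:error_division}, conditioning on the transmitted codeword being $\underline{W}(1)$ is the same as having $\underline{W}$ uniform over the codebook. This is the setting in which Proposition~\ref{prop:agp_pi} applies, and summing over $t$ yields \eqref{eq:first_err}.
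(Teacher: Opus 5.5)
Your proof is correct, but it handles the codebook randomness by a different route than the paper.

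The paper uses no per-codebook AGP identity. After \eqref{eq:Long1.1}, it notes that the target-hit event depends on the codebook only through $\underline{w}(1)$ and groups codebooks by this value. It then uses $\sum_{\mathcal{C}:\underline{w}(1)=\underline{w}}P_{\mathbfcal{C}}(\mathcal{C})=P_{\underline{W}}(\underline{w})=2^{-N}$, so $(\underline{W}(1),\underline{Y})$ is exactly the uniform-input model of Lemma~\ref{lemma:posterior_probability}. Bayes' rule, plus the fact that the indicator selects the single $\underline{w}=\theta(\underline{y})\oplus\pi_{\underline{y}}(\underline{e}(t))$, then gives $p_t$. The chain \eqref{eq:Long1.3}--\eqref{eq:Long1.6} is just your tower-property step written out under the ensemble joint law. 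It already completes the proof with no output-symmetry assumption and no restriction on $\pi_{\underline{y}}$ beyond being a coordinate permutation.

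Your additional step is valid: you condition on the unordered codebook, use exchangeability to make $\underline{W}(1)$ uniform over it, and invoke Proposition~\ref{prop:agp_pi}. However, it imports output symmetry and a reliability-only permutation, neither of which the statement assumes. In return it gives a stronger conclusion: the target-hit probability equals the codebook-independent $p_t$ for every codebook, not only on average over the ensemble. That is the form in which the result is later carried over to fixed linear codes in Section~\ref{Sec:linear_code}.

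Your remark that disjointness requires the EPs in $\mathcal{E}$ to be distinct, with $\pi_{\underline{y}}$ injective, is also more precise than the paper's appeal to uniqueness of the target EP.
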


\begin{proof} We have
    \begin{align}
    & \quad \Pr(\theta(\underline{Y})\oplus \pi(\underline{e}(t))\neq \underline{W}(1),\forall\ t \leq  T) \notag \\
    & = \sum_{\mathcal{C}} P_{\mathbfcal{C}}(\mathcal{C})\left[1-\sum_{t=1}^{T}\Pr(\theta(\underline{Y})\oplus \pi(\underline{e}(t))= \underline{w}(1))\right] \label{eq:Long1.1}\\
    & = 1 - \sum_{\underline{w}\in \mathbb{F}_2^N}\sum_{\mathcal{C}:\underline{w}(1)=\underline{w}} P_{\mathbfcal{C}}(\mathcal{C})\left[\sum_{t=1}^{T}\Pr(\theta(\underline{Y})\oplus \pi(\underline{e}(t))= \underline{w})\right] \label{eq:Long1.3}\\
    & = 1 - \sum_{\underline{w}\in \mathbb{F}_2^N} P_{\underline{W}}(\underline{w}) \left[\sum_{t=1}^{T} \int_{\mathbb{R}^N} p_{\underline{Y}|\underline{W}}(\underline{y} \mid \underline{w}) \mathbf{1}(\theta(\underline{y})\oplus \pi_{\underline{y}}(\underline{e}(t))= \underline{w})\mathrm{d}\underline{y}\right] \label{eq:Long1.4}\\
    & = 1 - \int_{\mathbb{R}^N}p_{\underline{Y}}(\underline{y})\left[ \sum_{t=1}^{T} \sum_{\underline{w}\in \mathbb{F}_2^N} P_{\underline{W}|\underline{Y}}(\underline{w} \mid \underline{y}) \mathbf{1}(\theta(\underline{y})\oplus \pi_{\underline{y}}(\underline{e}(t))= \underline{w})\right]\mathrm{d}\underline{y} \label{eq:Long1.5}\\
    & = 1 - \int_{\mathbb{R}^N}p_{\underline{Y}}(\underline{y})\left[ \sum_{t=1}^{T}P_{\underline{W}|\underline{Y}}(\theta(\underline{y})\oplus \pi_{\underline{y}}(\underline{e}(t)) \mid \underline{y})\right]\mathrm{d}\underline{y} \label{eq:Long1.6}\\
    & = 1 - \sum_{t=1}^{T}\left[ \int_{\mathbb{R}^N}p_{\underline{Y}}(\underline{y})P_{\underline{W}|\underline{Y}}(\theta(\underline{y})\oplus \pi_{\underline{y}}(\underline{e}(t)) \mid \underline{y})\mathrm{d}\underline{y}\right] \\
    & = 1 - \sum_{t=1}^{T}\mathbf{E}_{\underline{Y}} \left[P_{\underline{W}|\underline{Y}}(\theta(\underline{Y})\oplus \pi(\underline{e}(t)) \mid \underline{Y})\right] = 1 - \sum_{t=1}^{T} p_t .
    \end{align}
    Here,
    \begin{itemize}
        \item \eqref{eq:Long1.1} holds because under a given codebook $\mathcal{C}$, the transmitted codeword $\underline{w}(1)$ is fixed, and the target EP $\underline{e}^*=\theta(\underline{Y})\oplus\underline{w}(1)$ is unique. Hence the events $\{\theta(\underline{Y})\oplus\pi(\underline{e}(t))=\underline{w}(1)\}_{t=1}^{T}$ are mutually exclusive.
        \item \eqref{eq:Long1.3} is obtained by classifying all possible codebooks into $2^N$ categories according to the value of $\underline{w}(1)$.
        \item \eqref{eq:Long1.4} utilizes $\sum_{\mathcal{C}:\underline{w}(1)=\underline{w}} P_{\mathbfcal{C}}(\mathcal{C}) = P_{\underline{W}}(\underline{w})$ and expands $\Pr(\theta(\underline{Y})\oplus \pi(\underline{e}(t))= \underline{w})$ into integral form.
        \item \eqref{eq:Long1.5} swaps integral and summation, and rewrites the joint probability distribution of $(\underline{W}, \underline{Y})$.
        \item \eqref{eq:Long1.6} utilizes the key observation that, for given $\underline{y}$ and $\pi_{\underline{y}}(\underline{e}(t))$, there is exactly one $\underline{w}$, namely $\theta(\underline{y})\oplus \pi_{\underline{y}}(\underline{e}(t))$, for which the indicator function is one.
    \end{itemize}
\end{proof}

The following corollary further reveals the role of $p_t$.

\begin{corollary}[Distribution of the target EP position]
\label{coro:distribution_pt}
For a given ORB-type GRAND algorithm, the quantities $\{p_t\}_{t=1}^T$ describe the probability distribution of the position of the target EP in the ordered sequence $\{\pi(\underline{e}(t))\}$, i.e.,
\begin{equation}
p_t
=
\Pr(\theta(\underline{Y})\oplus\pi(\underline{e}(t))=\underline{W}(1)).
\end{equation}
\end{corollary}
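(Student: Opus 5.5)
The plan is to reuse the chain of equalities in the proof of Proposition~\ref{prop:search_problem}, but for a single index $t$ rather than the sum over $t\le T$. Concretely, we fix $t$ and write $\Pr(\theta(\underline{Y})\oplus\pi(\underline{e}(t))=\underline{W}(1))$ as an average over the random codebook $\mathbfcal{C}$. We then group codebooks by the value $\underline{w}$ of their first codeword, exactly as in \eqref{eq:Long1.3}. Because $\sum_{\mathcal{C}:\underline{w}(1)=\underline{w}}P_{\mathbfcal{C}}(\mathcal{C})=P_{\underline{W}}(\underline{w})$, the marginal of $\underline{W}(1)$ is uniform on $\mathbb{F}_2^N$. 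The probability therefore becomes
\begin{equation*}
\sum_{\underline{w}\in\mathbb{F}_2^N}P_{\underline{W}}(\underline{w})\int_{\mathbb{R}^N}p_{\underline{Y}|\underline{W}}(\underline{y}\mid\underline{w})\,\mathbf{1}\bigl(\theta(\underline{y})\oplus\pi_{\underline{y}}(\underline{e}(t))=\underline{w}\bigr)\,\mathrm{d}\underline{y}.
\end{equation*}

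Next, we exchange sum and integral and rewrite the joint law as $p_{\underline{Y}}(\underline{y})P_{\underline{W}|\underline{Y}}(\underline{w}\mid\underline{y})$, as in \eqref{eq:Long1.5}. For each fixed $\underline{y}$ the indicator selects the single vector $\underline{w}=\theta(\underline{y})\oplus\pi_{\underline{y}}(\underline{e}(t))$. The inner sum therefore collapses to the guessing posterior $P_{\underline{W}|\underline{Y}}(\theta(\underline{y})\oplus\pi_{\underline{y}}(\underline{e}(t))\mid\underline{y})$, and integrating against $p_{\underline{Y}}$ gives exactly $p_t$ as defined in \eqref{eq:orb_type_agp_def}.

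To interpret $\{p_t\}$ as a distribution of the target position, we note that the events $\{\theta(\underline{Y})\oplus\pi(\underline{e}(t))=\underline{W}(1)\}$ are mutually exclusive over $t$. This holds because the EPs $\underline{e}(t)$ in $\mathcal{E}$ are distinct, $\pi_{\underline{y}}$ is a bijection, and the target EP $\theta(\underline{Y})\oplus\underline{W}(1)$ is unique. Hence $\sum_t p_t\le 1$, with equality when $\mathcal{E}$ is all of $\mathbb{F}_2^N$. Consistently with Proposition~\ref{prop:search_problem}, the remaining mass $1-\sum_{t\le T}p_t$ is the probability that the target lies outside the first $T$ positions.

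There is essentially no hard step. The only point requiring care is the first one: we must justify that the per-$t$ term obtained under the codebook ensemble coincides with the expectation of the guessing posterior, which in \eqref{eq:orb_type_agp_def} is taken under a codebook-induced $\underline{Y}$. I would handle this by appealing to Proposition~\ref{prop:agp_pi}, which shows that this expectation does not depend on the codebook, so the averaging over $\mathbfcal{C}$ introduces no discrepancy.
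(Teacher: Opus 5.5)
Your proof is correct, but it is organized differently from the paper's. The paper does not recompute anything. It applies Proposition~\ref{prop:search_problem} with budgets $t-1$ and $t$ and subtracts, $\bigl(1-\sum_{i<t}p_i\bigr)-\bigl(1-\sum_{i\le t}p_i\bigr)=p_t$. This step implicitly uses that the position events are disjoint, so that ``target not among the first $t-1$ positions but at position $t$'' coincides with ``target at position $t$''.

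You instead run the chain of equalities from the proof of Proposition~\ref{prop:search_problem} for a single index. That route is self-contained and shows that the identity is just the tower property $\Pr\bigl(\underline{W}(1)=g_t(\underline{Y})\bigr)=\mathbf{E}_{\underline{Y}}\bigl[P_{\underline{W}|\underline{Y}}(g_t(\underline{Y})\mid\underline{Y})\bigr]$ with $g_t(\underline{y})=\theta(\underline{y})\oplus\pi_{\underline{y}}(\underline{e}(t))$. No disjointness is needed for the identity itself; disjointness enters only in your separate ``distribution'' step ($\sum_t p_t\le 1$). The paper's route is shorter once Proposition~\ref{prop:search_problem} is available.

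One adjustment to your final step: the appeal to Proposition~\ref{prop:agp_pi} is unnecessary. It would also import output symmetry and a permutation depending on $\underline{y}$ only through $|\ell_1|,\ldots,|\ell_N|$, neither of which is assumed in the random-ensemble section. There, $p_t$ is the expectation under the ensemble law, in which $\underline{W}=\underline{W}(1)$ is marginally uniform on $\mathbb{F}_2^N$. That is exactly the law your second step produces. It is also the identification the paper makes at the end of the proof of Proposition~\ref{prop:search_problem} and in the proof of Lemma~\ref{lemma:1}, where it writes $P_{\underline{W}(1)|\underline{Y}}$. Proposition~\ref{prop:agp_pi} is needed only if you also want this number to equal the AGP computed under a single fixed codebook.
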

\begin{proof}
From Proposition~\ref{prop:search_problem}, we have
\begin{equation}
    \Pr(\theta(\underline{Y})\oplus\pi(\underline{e}(i))\neq
    \underline{W}(1),\,\forall\, i \le t) = 1-\sum_{i=1}^{t}p_i .
\end{equation}
Therefore,
\begin{align}
    & \Pr(\theta(\underline{Y})\oplus \pi(\underline{e}(t)) = \underline{W}(1)) \\
    = & \Pr(\theta(\underline{Y})\oplus \pi(\underline{e}(i))\neq \underline{W}(1),\forall\ i \leq t-1) - \Pr(\theta(\underline{Y})\oplus \pi(\underline{e}(i))\neq \underline{W}(1),\forall\ i \leq t)\\
    = & \left(1 - \sum_{i=1}^{t-1}p_i\right) - \left(1 - \sum_{i=1}^{t}p_i\right) = p_t,
\end{align}
\end{proof}

From \eqref{eq:first_err}, together with
$p_t = \mathbf{E}_{\underline{Y}}\!\left[P_{\underline{W}|\underline{Y}}(\theta(\underline{Y})\oplus \pi(\underline{e}(t)) \mid \underline{Y})\right]$, we observe that the probability of target-miss error depends solely on the channel statistics through the quantities $\{p_t\}$ and is independent of the particular realization of the codebook. This reveals a useful separation between channel effects and codebook effects in the analysis. Intuitively, target-miss error occurs when the target EP appears late in the ordered EP sequence. When the channel causes the hard-decision vector $\theta(\underline{Y})$ to differ from the transmitted codeword in many positions, the corresponding target EP typically has low priority in the EP ordering, and thus a small testing budget $T$ leads to a high target-miss probability.

In our prior work \cite{wan2024approaching}, only this aspect was considered, where the decoding problem was effectively reduced to a search problem. However, when $T$ becomes sufficiently large, target-preemption error---arising from competing codewords encountered before the target EP---can become the dominant cause of decoding failure. This effect will be analyzed next.

\subsubsection{Target-Preemption Error}\label{SubSub:second_item}
Using the position-wise decomposition in Proposition~\ref{prop:error_division}, the target-preemption probability in \eqref{eq:error-decomposition} can be written as
\begin{align}
    P_{\text{II}}
    = & \sum_{t=1}^{T}\Pr(\theta(\underline{Y})\oplus \pi(\underline{e}(t)) = \underline{W}(1))\cdot\Pr(\theta(\underline{Y})\oplus \pi(\underline{e}(t'))\in \mathbfcal{C},\exists\ t'< t \mid \theta(\underline{Y})\oplus \pi(\underline{e}(t))=\underline{W}(1))\\
    = & \sum_{t=1}^{T}p_{t}\cdot\Pr(\theta(\underline{Y})\oplus \pi(\underline{e}(t'))\in \mathbfcal{C},\exists\ t'< t \mid \theta(\underline{Y})\oplus \pi(\underline{e}(t))=\underline{W}(1)),\label{eq:Type-II-simplified}
\end{align}
where we have utilized Corollary~\ref{coro:distribution_pt}.

For notational convenience, we introduce the pre-target codeword-hit probability:
\begin{equation}
    f(\mathbfcal{C},\mathcal{E},t) := \Pr(\theta(\underline{Y})\oplus \pi(\underline{e}(t'))\in \mathbfcal{C},\exists\ t'< t \mid \theta(\underline{Y})\oplus \pi(\underline{e}(t))=\underline{W}(1)), \label{def:pre_target_hit}
\end{equation}
which is the conditional probability that although the $t$-th test would identify the sent codeword, at least one earlier EP during the first $t-1$ tests leads to a competing codeword. Clearly we have $f(\mathcal{C}, \mathcal{E}, 1) = 0$. With the pre-target codeword-hit probability, the target-preemption probability in \eqref{eq:Type-II-simplified} is further rewritten as
\begin{equation}
    \sum_{t=1}^{T}p_{t} f(\mathbfcal{C},\mathcal{E},t).\label{eq:Type-II-simplified-f}
\end{equation}

The following lemma quantifies the pre-target codeword-hit probability, i.e., the probability that one of the first $t-1$ candidate words coincides with another codeword in the randomly generated codebook.

\begin{lemma}[Pre-target codeword-hit probability]\label{lemma:1}
    For the random code ensemble, the pre-target codeword-hit probability is given by:
    \begin{equation}\label{eq:random_function}
        f(\mathbfcal{C},\mathcal{E},t) = 1 - \prod_{t'=1}^{t-1}\frac{2^N-2^K+1-t'}{2^N-t'},
    \end{equation}
    for $t = 2, \ldots, T$, and $f(\mathbfcal{C},\mathcal{E},1) = 0$.
\end{lemma}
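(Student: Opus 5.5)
Our plan is to condition on everything except the identity of the $2^K-1$ non-transmitted codewords, and then use the sampling-without-replacement structure of $\mathbfcal{C}$. Fix a realization $\underline{y}$ of the channel output and a value $\underline{w}$ of $\underline{W}(1)$ such that $\theta(\underline{y})\oplus\pi_{\underline{y}}(\underline{e}(t))=\underline{w}$. The candidate words $\underline{v}(t') := \theta(\underline{y})\oplus\pi_{\underline{y}}(\underline{e}(t'))$ for $t'<t$ are then deterministic. Because $\pi_{\underline{y}}$ is a permutation and the EPs in $\mathcal{E}$ are distinct, these words are pairwise distinct. They are also all different from $\underline{w}$, since $\underline{e}(t')\neq\underline{e}(t)$.

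The key step is to show that, conditioned on $\underline{W}(1)=\underline{w}$ and $\underline{Y}=\underline{y}$, the remaining codewords $\{\underline{W}(2),\ldots,\underline{W}(2^K)\}$ form a uniformly random $(2^K-1)$-subset of $\mathbb{F}_2^N\setminus\{\underline{w}\}$. This holds because, in the sequential construction of Section~\ref{SubSec:channel_model}, $\underline{W}(2),\ldots,\underline{W}(2^K)$ are drawn uniformly without replacement from $\mathbb{F}_2^N\setminus\{\underline{w}\}$. Moreover, the channel output depends on the codebook only through the transmitted codeword $\underline{W}(1)$, so conditioning additionally on $\underline{Y}$ does not change the law of the other codewords. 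The conditioning event in \eqref{def:pre_target_hit} is a function of $(\underline{Y},\underline{W}(1))$ only, so the same uniformity survives after averaging over all such $(\underline{y},\underline{w})$. This independence of the non-transmitted codewords from $(\underline{Y},\underline{W}(1))$ is the one point needing care, and we would state it explicitly via the Markov chain $(\underline{W}(2),\ldots)\to\underline{W}(1)\to\underline{Y}$.

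It then remains to count. The complementary event is that none of the $t-1$ fixed distinct words $\underline{v}(1),\ldots,\underline{v}(t-1)$ in $\mathbb{F}_2^N\setminus\{\underline{w}\}$ belongs to the random subset. The subset has size $2^K-1$ and the ground set has size $2^N-1$, so this probability is
\begin{equation*}
\binom{2^N-1-(t-1)}{2^K-1}\Big/\binom{2^N-1}{2^K-1}.
\end{equation*}
Equivalently, avoiding $\underline{v}(1),\ldots,\underline{v}(t-1)$ one at a time gives the factors $\frac{2^N-2^K+1-t'}{2^N-t'}$, which can be checked by telescoping the binomial ratio. This product depends on neither $\underline{y}$ nor $\underline{w}$. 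Averaging over the conditioning therefore gives \eqref{eq:random_function}, and taking the complement yields $f$. For $t=1$ there are no earlier tests, so the product is empty and $f=0$.

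The main obstacle is only the rigorous justification of the conditional uniformity. The binomial telescoping is routine.
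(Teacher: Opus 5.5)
Your proposal is correct and follows essentially the same route as the paper: condition on $(\underline{Y},\underline{W}(1))$, use the Markov chain $\{\underline{W}(2),\ldots,\underline{W}(2^K)\}\leftrightarrow\underline{W}(1)\leftrightarrow\underline{Y}$ to drop the dependence on $\underline{Y}$, count the avoidance probability under sampling without replacement, and note that this constant factors out. The paper makes the factoring explicit by pulling the product out of the integral and cancelling $p_t$. The remaining differences are cosmetic. The paper counts sequentially over the $2^K-1$ drawn codewords, obtaining $\prod_{m=2}^{2^K}\frac{2^N+2-t-m}{2^N+1-m}$ before re-indexing, whereas you use the binomial ratio, which is equivalent to counting sequentially over the $t-1$ test words. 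You also state explicitly that the candidate words are distinct and differ from $\underline{w}$, which the paper's count uses implicitly.
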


\begin{proof}
    We have
    \begin{align}
        f(\mathbfcal{C},\mathcal{E},t) &= \Pr(\theta(\underline{Y})\oplus \pi(\underline{e}(t'))\in \mathbfcal{C},\exists\ t'< t \mid \theta(\underline{Y})\oplus \pi(\underline{e}(t))=\underline{W}(1))\\
        &= 1 - \Pr(\theta(\underline{Y})\oplus \pi(\underline{e}(t'))\notin \mathbfcal{C},\forall\ t'< t \mid \theta(\underline{Y})\oplus \pi(\underline{e}(t))=\underline{W}(1))\\
        &= 1 - \frac{\Pr(\theta(\underline{Y})\oplus \pi(\underline{e}(t'))\notin \mathbfcal{C},\forall\ t'< t, \theta(\underline{Y})\oplus \pi(\underline{e}(t))=\underline{W}(1))}{\Pr(\theta(\underline{Y})\oplus \pi(\underline{e}(t))=\underline{W}(1))}\\
        &= 1 - \frac{1}{p_{t}} \Pr(\theta(\underline{Y})\oplus \pi(\underline{e}(t'))\notin \mathbfcal{C},\forall\ t'< t, \theta(\underline{Y})\oplus \pi(\underline{e}(t))=\underline{W}(1)),\label{eqn:pre-target-hit-prob-joint}
    \end{align}
    where the last equality is from Corollary~\ref{coro:distribution_pt}.

    We then manipulate the joint probability in \eqref{eqn:pre-target-hit-prob-joint} accordingly:
    \begin{align}
        & \Pr(\theta(\underline{Y})\oplus \pi(\underline{e}(t'))\notin \mathbfcal{C},\forall\ t'< t, \theta(\underline{Y})\oplus \pi(\underline{e}(t))=\underline{W}(1)) \notag\\
        &= \Pr(\{\underline{W}(2),\underline{W}(3),\ldots,\underline{W}(2^K)\}\cap\{\theta(\underline{Y})\oplus \pi(\underline{e}(1)),\ldots,\theta(\underline{Y})\oplus \pi(\underline{e}(t - 1))\} = \emptyset, \theta(\underline{Y})\oplus \pi(\underline{e}(t))=\underline{W}(1))\notag\\
        &= \sum_{\underline{w} \in \mathbb{F}_2^N} \int_{\mathbb{R}^N} \mathrm{d}\underline{y} p_{\underline{Y}}(\underline{y}) P_{\underline{W}(1)|\underline{Y}}(\underline{w} \mid \underline{y}) \mathbf{1}(\theta(\underline{y})\oplus \pi_{\underline{y}}(\underline{e}(t))=\underline{w}) \times\notag\\
        &\quad\quad\ \Pr(\{\underline{W}(2),\underline{W}(3),\ldots,\underline{W}(2^K)\}\cap\{\theta(\underline{y})\oplus \pi_{\underline{y}}(\underline{e}(1)),\ldots,\theta(\underline{y})\oplus \pi_{\underline{y}}(\underline{e}(t - 1))\} = \emptyset \mid \underline{Y} = \underline{y}, \underline{W}(1) = \underline{w}).\label{eqn:emptyset-y-W1}
    \end{align}
    Using the Markov chain $\{\underline{W}(2),\underline{W}(3),\ldots,\underline{W}(2^K)\} \leftrightarrow \underline{W}(1) \leftrightarrow \underline{Y}$, we have
    \begin{align}
        &\Pr(\{\underline{W}(2),\underline{W}(3),\ldots,\underline{W}(2^K)\}\cap\{\theta(\underline{y})\oplus \pi_{\underline{y}}(\underline{e}(1)),\ldots,\theta(\underline{y})\oplus \pi_{\underline{y}}(\underline{e}(t - 1))\} = \emptyset \mid \underline{Y} = \underline{y}, \underline{W}(1) = \underline{w}) \notag\\
        &= \Pr(\{\underline{W}(2),\underline{W}(3),\ldots,\underline{W}(2^K)\}\cap\{\theta(\underline{y})\oplus \pi_{\underline{y}}(\underline{e}(1)),\ldots,\theta(\underline{y})\oplus \pi_{\underline{y}}(\underline{e}(t - 1))\} = \emptyset \mid \underline{W}(1) = \underline{w}),
    \end{align}
    for which, we can follow the generation of $\mathbfcal{C}$ via sampling without replacement from $\mathbb{F}_2^N$ to obtain a closed-form expression as
    \begin{align}
        &\Pr(\{\underline{W}(2),\underline{W}(3),\ldots,\underline{W}(2^K)\}\cap\{\theta(\underline{y})\oplus \pi_{\underline{y}}(\underline{e}(1)),\ldots,\theta(\underline{y})\oplus \pi_{\underline{y}}(\underline{e}(t - 1))\} = \emptyset \mid \underline{W}(1) = \underline{w}) \notag\\
        &= \prod_{m=2}^{2^K}\frac{2^N+2-t-m}{2^N+1-m}.\label{eqn:P-emptyset-1}
    \end{align}
    This can be explained as follows: to ensure $\underline{W}(2) \notin \{\theta(\underline{y})\oplus \pi_{\underline{y}}(\underline{e}(1)),\ldots,\theta(\underline{y})\oplus \pi_{\underline{y}}(\underline{e}(t - 1))\}$ there are $(2^N - 1) - (t - 1) = 2^N - t$ choices among $2^N - 1$ possibilities (the condition $\underline{W}(1) = \underline{w}$ dictates that we can only sample $\underline{W}(2)$ from $2^N - 1$ possible vectors in $\mathbb{F}_2^N$), and continuing this, to ensure $\underline{W}(m) \notin \{\theta(\underline{y})\oplus \pi_{\underline{y}}(\underline{e}(1)),\ldots,\theta(\underline{y})\oplus \pi_{\underline{y}}(\underline{e}(t - 1))\}$ there are $2^N - (m - 1) - (t - 1) = 2^N - t - m + 2$ choices among $2^N - (m - 1)$ possibilities, until $m = 2^K$. By canceling out common denominators and numerators in \eqref{eqn:P-emptyset-1}, we can further rewrite it as
    \begin{align}
        &\Pr(\{\underline{W}(2),\underline{W}(3),\ldots,\underline{W}(2^K)\}\cap\{\theta(\underline{y})\oplus \pi_{\underline{y}}(\underline{e}(1)),\ldots,\theta(\underline{y})\oplus \pi_{\underline{y}}(\underline{e}(t - 1))\} = \emptyset  \mid  \underline{W}(1) = \underline{w}) \notag\\
        &= \prod_{t'=1}^{t-1}\frac{2^N-2^K+1-t'}{2^N-t'}.\label{eqn:P-emptyset-2}
    \end{align}
    Applying \eqref{eqn:P-emptyset-2} back to \eqref{eqn:emptyset-y-W1}, we obtain
    \begin{align}
        & \Pr(\theta(\underline{Y})\oplus \pi(\underline{e}(t'))\notin \mathbfcal{C},\forall\ t'< t, \theta(\underline{Y})\oplus \pi(\underline{e}(t))=\underline{W}(1)) \notag\\
        &= \left(\prod_{t'=1}^{t-1}\frac{2^N-2^K+1-t'}{2^N-t'} \right) \sum_{\underline{w} \in \mathbb{F}_2^N} \int_{\mathbb{R}^N} \mathrm{d}\underline{y} p_{\underline{Y}}(\underline{y}) P_{\underline{W}(1)|\underline{Y}}(\underline{w} \mid \underline{y}) \mathbf{1}(\theta(\underline{y})\oplus \pi_{\underline{y}}(\underline{e}(t))=\underline{w}) \\
        &= \left(\prod_{t'=1}^{t-1}\frac{2^N-2^K+1-t'}{2^N-t'} \right) \int_{\mathbb{R}^N} \mathrm{d}\underline{y} p_{\underline{Y}}(\underline{y}) P_{\underline{W}(1)|\underline{Y}}(\theta(\underline{y})\oplus \pi_{\underline{y}}(\underline{e}(t)) \mid \underline{y}) \\
        &= \left(\prod_{t'=1}^{t-1}\frac{2^N-2^K+1-t'}{2^N-t'} \right) \mathbf{E}_{\underline{Y}}\left[P_{\underline{W}|\underline{Y}}(\theta(\underline{Y})\oplus \pi(\underline{e}(t)) \mid \underline{Y})\right] \\
        &= \left(\prod_{t'=1}^{t-1}\frac{2^N-2^K+1-t'}{2^N-t'} \right) p_{t}.
    \end{align}
    Therefore, \eqref{eqn:pre-target-hit-prob-joint} is
    \begin{align}
        f(\mathbfcal{C},\mathcal{E},t) &= 1 - \frac{1}{p_{t}} \Pr(\theta(\underline{Y})\oplus \pi(\underline{e}(t'))\notin \mathbfcal{C},\forall\ t'< t, \theta(\underline{Y})\oplus \pi(\underline{e}(t))=\underline{W}(1)) \\
        &= 1 - \frac{1}{p_{t}} \left(\prod_{t'=1}^{t-1}\frac{2^N-2^K+1-t'}{2^N-t'} \right) p_{t} \\
        &= 1 - \prod_{t'=1}^{t-1}\frac{2^N-2^K+1-t'}{2^N-t'},
    \end{align}
    completing the proof.    
\end{proof}

We are now ready to establish the probability of target-preemption error, as given by the following proposition.

\begin{proposition}[Target-preemption error]\label{prop:random_pre_target_hit}
    The target-preemption probability is given by:
    \begin{align}
        P_{\mathrm{II}} = \sum_{t=1}^{T} p_{t} \left(1 - \prod_{t' = 1}^{t-1} \frac{2^N-2^K+1-t'}{2^N-t'}\right). \label{eq:second_err}
    \end{align}
\end{proposition}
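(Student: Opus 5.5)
The claim follows directly from results already established, so the plan is to assemble them rather than to compute anything new. I would start from the position-wise form of the Type~II term given by Proposition~\ref{prop:error_division}:
\[
P_{\mathrm{II}}=\sum_{t=1}^{T}\Pr\bigl(\theta(\underline{Y})\oplus\pi(\underline{e}(t))=\underline{W}(1)\bigr)\, f(\mathbfcal{C},\mathcal{E},t),
\]
where $f$ is the pre-target codeword-hit probability defined in \eqref{def:pre_target_hit}. This form is legitimate because the events that the target EP sits at position $t$, for $t=1,\ldots,T$, are mutually exclusive. That exclusivity is the same fact used in \eqref{eq:Long1.1}: for a fixed $\underline{y}$, distinct EPs in $\mathcal{E}$ remain distinct after the permutation $\pi_{\underline{y}}$, so at most one of them can equal the target EP. Conditioning on which position holds the target therefore splits $P_{\mathrm{II}}$ cleanly by the law of total probability.

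Next, I would replace the first factor using Corollary~\ref{coro:distribution_pt}, which identifies it as $p_t$. This gives exactly \eqref{eq:Type-II-simplified-f}.

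Then I would substitute the closed form of $f(\mathbfcal{C},\mathcal{E},t)$ from Lemma~\ref{lemma:1}. For $t\ge 2$ that closed form is the expression inside the parentheses in \eqref{eq:second_err}. For $t=1$ we have $f=0$, and this is consistent with the empty product convention: when $t=1$ the product $\prod_{t'=1}^{0}(\cdot)$ equals $1$, so the summand $p_1(1-1)$ vanishes. Keeping this convention lets me write the sum uniformly over $t=1,\ldots,T$, which yields \eqref{eq:second_err}.

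There is one subtle point. Lemma~\ref{lemma:1} divides by $p_t$, so for any $t$ with $p_t=0$ the conditional probability $f$ is undefined. In that case the summand is still zero, since $p_t$ multiplies it, so the identity holds regardless. Beyond this, I do not expect a real obstacle. The substantive work, namely the sampling-without-replacement count and the Markov chain argument showing that the codebook remainder is independent of $\underline{Y}$ given $\underline{W}(1)$, has already been done in Lemma~\ref{lemma:1}.

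As a consistency check, I would combine the result with Proposition~\ref{prop:search_problem} via Proposition~\ref{prop:error_division}:
\[
P_{\mathrm{I}}+P_{\mathrm{II}} = 1-\sum_{t=1}^{T}p_t\prod_{t'=1}^{t-1}\frac{2^N-2^K+1-t'}{2^N-t'},
\]
which recovers Theorem~\ref{th:random_error_rate}.
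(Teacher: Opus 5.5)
Your proposal is correct and follows the paper's own argument: the paper likewise obtains \eqref{eq:second_err} immediately by substituting the closed form of Lemma~\ref{lemma:1} into the position-wise expression \eqref{eq:Type-II-simplified-f}, which itself comes from Proposition~\ref{prop:error_division} together with Corollary~\ref{coro:distribution_pt}. Your remarks on the empty-product convention at $t=1$ and on summands with $p_t=0$ are small points of care that the paper leaves implicit.
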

\begin{proof}
This is immediate from \eqref{eq:Type-II-simplified-f} and Lemma~\ref{lemma:1}.
\end{proof}

Combining Propositions~\ref{prop:error_division}, \ref{prop:search_problem}, and~\ref{prop:random_pre_target_hit}, we establish Theorem~\ref{th:random_error_rate}.

\subsection{Analysis of Number of Tests}\label{SubSec:random_requests_number}

In this subsection, we study the distribution of the number of tests conducted by the codeword tester. As before, $t$ indexes the ordered EP sequence, and $f(t)=f(\mathbfcal{C}, \mathcal{E}, t)$ is the pre-target codeword-hit probability from Lemma~\ref{lemma:1}. We let $P_{\text{stop}}(t)$ denote the probability that the codeword tester stops after exactly $t$ tests.

Proposition~\ref{prop:stop_at_t} below characterizes the probability distribution of the number of tests before declaring the decoded codeword.


\begin{proposition}\label{prop:stop_at_t}
     For a given ORB-type GRAND algorithm, over the random code ensemble, the probability that the codeword tester declares the decoded codeword after $t$ tests is given by:
     \begin{eqnarray}
         P_{\text{stop}}(t) &=& p_t (1 - f(t)) + \left(1-\sum_{i=1}^{t}p_i\right)(f(t+1)-f(t)), \quad\quad t = 1, \ldots, T - 1;\label{eq:end_time}\\
         P_{\text{stop}}(T) &=& \left(1-\sum_{i=1}^{T-1}p_i\right)(1-f(T)).\label{eq:end_time_T}
     \end{eqnarray}
\end{proposition}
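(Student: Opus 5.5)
The plan is to decompose the event ``the tester stops after exactly $t$ tests'' according to where the target EP sits in the ordered sequence, using Corollary~\ref{coro:distribution_pt} for the target position and Lemma~\ref{lemma:1} for competing hits. Let $\tau$ be the (random) position of the target EP, so $\Pr(\tau=s)=p_s$ for $s\le T$ and $\Pr(\tau>T)=1-\sum_{s\le T}p_s$. For $t<T$ there are three cases.
\begin{itemize}
\item If $\tau<t$, the decoder must have stopped at or before $\tau<t$, so these realizations contribute nothing.
\item If $\tau=t$, the decoder stops at $t$ exactly when none of the first $t-1$ candidates lies in $\mathbfcal{C}$. By the definition of $f$ this has conditional probability $1-f(t)$, which gives the term $p_t(1-f(t))$.
\item If $\tau>t$, the decoder stops at $t$ exactly when candidates $1,\ldots,t-1$ are not codewords and candidate $t$ is.
\end{itemize}

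The key step is to handle the third case. I would extend the counting in the proof of Lemma~\ref{lemma:1} to conditioning on $\tau=s$ for any $s>t$. Given $\underline{W}(1)=\underline{w}$ and $\underline{Y}=\underline{y}$ with target at position $s$, the first $t$ candidates are distinct vectors, all different from $\underline{w}$. Sampling $\underline{W}(2),\ldots,\underline{W}(2^K)$ without replacement from $\mathbb{F}_2^N\setminus\{\underline{w}\}$ depends only on how many forbidden vectors there are, not on $\underline{y}$. So the probability that none of the first $j$ candidates is a codeword is $g(j):=\prod_{t'=1}^{j}\frac{2^N-2^K+1-t'}{2^N-t'}=1-f(j+1)$, regardless of $s$.

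The Markov chain argument of Lemma~\ref{lemma:1} lets this factor pull out of the integral. Hence, conditioned on $\tau=s>t$, the probability of stopping exactly at $t$ is
\[
\Pr(\text{none of first } t-1 \text{ hit}) - \Pr(\text{none of first } t \text{ hit}) = g(t-1)-g(t) = f(t+1)-f(t).
\]
Summing over $s\in\{t+1,\ldots,T\}$ and over the event $\tau>T$ (where the same counting applies, since the target is not among the first $t$ candidates) gives total weight $1-\sum_{i\le t}p_i$. This yields \eqref{eq:end_time}.

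For $t=T$ I would use the same decomposition. The case $\tau=T$ contributes $p_T(1-f(T))$. The case $\tau>T$ contributes $(1-\sum_{i\le T}p_i)(g(T-1)-g(T))$, and in addition the decoder halts at test $T$ with failure when no candidate hits, which contributes $(1-\sum_{i\le T}p_i)\,g(T)$. Adding these, $g(T)$ cancels and $g(T-1)=1-f(T)$, giving $(1-\sum_{i\le T-1}p_i)(1-f(T))$, which is \eqref{eq:end_time_T}. So $P_{\text{stop}}(T)$ must be read as ``the tester runs exactly $T$ tests'', including abandonment; I will state this convention explicitly, because without the failure term the formula does not match.

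The main obstacle is justifying that the codebook-hit probability conditioned on $\tau=s$ (including $\tau>T$) factorizes as $g(\cdot)$ uniformly in $\underline{y}$. This needs the candidates to be distinct and to exclude $\underline{W}(1)$, which holds because the EPs in $\mathcal{E}$ are distinct and the permutation is a bijection. I would write this carefully once, mirroring \eqref{eqn:emptyset-y-W1}--\eqref{eqn:P-emptyset-2}.
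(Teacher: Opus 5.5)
Your proposal is correct and follows essentially the same route as the paper: you decompose by the target position, use $p_t$ for the case $\tau=t$, and for $\tau>t$ use the fact that the no-hit probability for a prefix of length $j$ is $1-f(j+1)$ regardless of where the target sits, then take the difference over two prefix lengths (the paper sums $i$ up to $2^N$, which is your $s\le T$ together with $\tau>T$). The paper only asserts this position-independence as ``exchangeability of the random code ensemble,'' and it uses the same reading of $P_{\text{stop}}(T)$ as ``runs exactly $T$ tests,'' including abandonment; your plan to rederive the former from the counting in Lemma~\ref{lemma:1} and to state the latter convention explicitly makes the same argument more careful.
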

\begin{proof}
     For $t = 1, \ldots, T-1$, there are two situations under which the codeword tester stops after exactly $t$ tests: either the target EP is correctly identified at the $t$-th index in the ordered list and no earlier candidate produces a competing codeword, or the target EP is not among the first $t$ EPs in that list and the $t$-th test is the first pre-target codeword hit. So we have
     \begin{align}
        P_{\text{stop}}(t)& = p_t\Pr(\theta(\underline{Y})\oplus \pi(\underline{e}(t'))\notin \mathbfcal{C},\forall\ t'< t \mid \theta(\underline{Y})\oplus \pi(\underline{e}(t))=\underline{W}(1)) \notag \\
        &\quad\quad   +\sum_{i=t+1}^{2^N}\left( p_i  \left[\Pr(\theta(\underline{Y})\oplus \pi(\underline{e}(t'))\notin \mathbfcal{C},\forall\ t'< t \mid \theta(\underline{Y})\oplus \pi(\underline{e}(i)) =\underline{W}(1))\right.\right. \notag \\
        & \quad \quad \quad\quad \left.\left.- \Pr(\theta(\underline{Y})\oplus \pi(\underline{e}(t'))\notin \mathbfcal{C},\forall\ t'< t+1 \mid \theta(\underline{Y})\oplus \pi(\underline{e}(i))=\underline{W}(1))\right]\right) \label{eq:6.1}\\ 
        & = p_t(1 - f(t)) + \sum_{i=t+1}^{2^N}p_i[(1 - f(t)) - (1- f(t+1))], \label{eq:6.2}\\
        & = p_t(1 - f(t)) + \left(1 - \sum_{i=1}^{t}p_i\right)(f(t+1)-f(t)).\label{eq:6.3}
     \end{align}
    In \eqref{eq:6.1}, Corollary~\ref{coro:distribution_pt} is applied. The transition to \eqref{eq:6.2} follows from the exchangeability of the random code ensemble: conditioned on the target appearing at any index $i\ge t$, the probability that no competing codeword appears in the first $t-1$ tests depends only on the prefix length. Hence, for any $i\ge t$,
    \begin{equation}
    \Pr(\theta(\underline{Y})\oplus \pi(\underline{e}(t'))\notin \mathbfcal{C},\forall\, t'< t \mid \theta(\underline{Y})\oplus \pi(\underline{e}(i))=\underline{W}(1)) = 1-f(t),
    \end{equation}
    and similarly the corresponding probability with prefix length $t$ equals $1-f(t+1)$.

     The decoder may also stop only after the budget is reached: if the target has not been tested by the end of the first $T-1$ positions and there is no pre-target codeword hit in those tests, the tester stops after $T$ tests. So
     \begin{align}
        P_{\text{stop}}(T) & = \left(1 - \sum_{i=1}^{T-1} p_i \right) \Pr(\theta(\underline{Y})\oplus \pi(\underline{e}(t'))\notin \mathbfcal{C},\forall\, t'< T \mid \theta(\underline{Y})\oplus \pi(\underline{e}(j))=\underline{W}(1), j \geq T) \\
        &= \left(1 - \sum_{i=1}^{T-1} p_i \right)(1 - f(T)).
     \end{align}

\end{proof}

Based on Proposition~\ref{prop:stop_at_t}, we have the following theorem characterizing the average number of tests.

\begin{theorem}\label{th:ave_guess_num}
For a given ORB-type GRAND algorithm, the average number of tests conducted by the codeword tester over the random code ensemble, denoted by $Q$, is given by
\begin{align}
Q = T-\sum_{t=1}^{T}f(t) - \sum_{t=1}^{T-1}p_t \left( T-t-\sum_{i=t+1}^{T}f(i) \right),
\end{align}
where $p_t$ is defined in Section~\ref{SubSec:random_error_rate} and
$f(t)$ denotes the pre-target codeword-hit probability defined in
Lemma~\ref{lemma:1}.
\end{theorem}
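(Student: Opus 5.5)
We start from the stopping-time law of Proposition~\ref{prop:stop_at_t} and write $Q=\sum_{t=1}^{T} t\,P_{\text{stop}}(t)$. The algorithm always performs between $1$ and $T$ tests. The sum therefore covers every outcome, including budget exhaustion, which Proposition~\ref{prop:stop_at_t} assigns to $t=T$. Rather than summing $t\,P_{\text{stop}}(t)$ term by term, we will use the tail form $Q=\sum_{t=1}^{T}\Pr(\text{stopping time}\ge t)=T-\sum_{t=1}^{T-1}\Pr(\text{stopping time}\le t)$. This avoids the messy telescoping of the products $t\,(f(t+1)-f(t))$.

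The key step is a closed form for the complementary event $\{\text{stopping time}\ge t+1\}$, i.e.\ the decoder has not stopped during the first $t$ tests. This happens exactly when the target EP is not among the first $t$ positions and none of the first $t$ candidate words is a competing codeword. By Corollary~\ref{coro:distribution_pt}, the target lies beyond position $t$ with probability $1-\sum_{i=1}^{t}p_i$. By the exchangeability argument already used in the proof of Proposition~\ref{prop:stop_at_t}, the conditional probability of no pre-target hit in a prefix of length $t$ is $1-f(t+1)$, whatever the target index beyond $t$. Hence
\begin{equation}
\Pr(\text{stopping time}\ge t+1)=\Big(1-\sum_{i=1}^{t}p_i\Big)\big(1-f(t+1)\big),\qquad t=0,\ldots,T-1,
\end{equation}
with $f(1)=0$ covering $t=0$. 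As a consistency check, we will verify that successive differences of this expression reproduce \eqref{eq:end_time} and \eqref{eq:end_time_T}. That confirms the expression agrees with Proposition~\ref{prop:stop_at_t}, so one could alternatively derive it from that proposition directly.

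Summing gives
\begin{equation}
Q=\sum_{t=0}^{T-1}\Big(1-\sum_{i=1}^{t}p_i\Big)\big(1-f(t+1)\big).
\end{equation}
The remaining work is algebraic:
\begin{itemize}
\item Expand the product. The unit terms contribute $\sum_{s=1}^{T}(1-f(s))=T-\sum_{t=1}^{T}f(t)$.
\item The cross terms are $-\sum_{t=0}^{T-1}\sum_{i=1}^{t}p_i(1-f(t+1))$. Interchange the order of summation: for fixed $i\le T-1$, the index $t$ ranges over $i,\ldots,T-1$, i.e.\ $s=t+1$ ranges over $i+1,\ldots,T$.
\item The coefficient of $p_i$ is then $\sum_{s=i+1}^{T}(1-f(s))=T-i-\sum_{s=i+1}^{T}f(s)$, which matches the stated formula after renaming $i\to t$.
\end{itemize}

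The main obstacle is justifying the joint "not stopped by $t$" probability rigorously, in particular that the no-hit conditional probability is the same $1-f(t+1)$ for every target index beyond $t$, and also when the target is outside the list altogether. For the latter case, we will argue as in Lemma~\ref{lemma:1}: given $\underline{W}(1)$, the other codewords are sampled without replacement independently of $\underline{Y}$, and the $t$ prefix candidates are distinct and all differ from $\underline{W}(1)$. The product \eqref{eqn:P-emptyset-2} with $t$ factors therefore applies regardless of where the target sits.
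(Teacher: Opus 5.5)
Your proof is correct, and it reaches the formula by a somewhat different route than the paper. The paper writes $Q=\sum_{t=1}^{T}t\,P_{\text{stop}}(t)$ using the pmf from Proposition~\ref{prop:stop_at_t} and leaves the algebra implicit. You instead use the tail-sum identity $Q=\sum_{t=1}^{T}\Pr(\tau\ge t)$, writing $\tau$ for the number of tests. You derive the survival function $\Pr(\tau\ge t+1)=\bigl(1-\sum_{i=1}^{t}p_i\bigr)\bigl(1-f(t+1)\bigr)$ directly from Corollary~\ref{coro:distribution_pt} and the without-replacement product of Lemma~\ref{lemma:1}.

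This survival function is exactly the telescoped form of Proposition~\ref{prop:stop_at_t}, as your differencing check confirms. The two computations are therefore related by summation by parts and must agree.

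Your route has two advantages. First, the remaining algebra is a single interchange of summation, with no need to collect the $t\bigl(f(t+1)-f(t)\bigr)$ terms. Second, your direct no-hit argument treats the case where the target lies outside the tested list on its own terms. It needs only that the $t$ prefix candidates are distinct and differ from $\underline{W}(1)$, and it avoids the device, used in the paper's proof of Proposition~\ref{prop:stop_at_t}, of summing $p_i$ up to $i=2^N$ as if $\mathcal{E}$ were a full ordering of $\mathbb{F}_2^N$.

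The paper's route has the advantage of reusing an already established proposition, so it requires no new probabilistic argument beyond Proposition~\ref{prop:stop_at_t}.
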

\begin{proof}
    Using the expressions of $P_{\text{stop}}(t)$ in Proposition~\ref{prop:stop_at_t}, we can obtain, after some algebraic manipulations:
    \begin{align}
        Q & = \sum_{t=1}^{T}t\, P_{\text{stop}}(t) \\
        & = T- \sum_{t=1}^{T} f(t) - \sum_{t=1}^{T-1}p_t\left(T -t - \sum_{i=t+1} ^{T}f(i)\right). \label{eq:length.5}
    \end{align}
\end{proof}
    Substituting $f(t) = 1 - \prod_{i=1}^{t-1}\frac{2^N-2^K+1-i}{2^N-i}$ given by Lemma~\ref{lemma:1} into \eqref{eq:length.5}, we have
    \begin{equation}
        Q = \sum_{t=1}^{T}\left(\prod_{i=1}^{t-1}\frac{2^N-2^K+1-i}{2^N-i}\right) - \sum_{t=1}^{T-1}p_t\left(\sum_{i=t+1}^{T}\prod_{j=1}^{i-1}\frac{2^N-2^K+1-j}{2^N-j}\right).
    \end{equation}

From Proposition~\ref{prop:stop_at_t} we can also quantify the probability of decoding correctly, conditioned on the event that the codeword tester stops after $t$ tests.

\begin{corollary}[Conditional decoding success probability]\label{coro:stop_succ_prob}
    For a given ORB-type GRAND algorithm, conditioned on the event that the codeword tester stops after the $t$-th test, the probability that the declared codeword equals the transmitted codeword is
    \begin{align}
        P_{\text{succ}\mid\text{stop}}(t) &= \frac{p_t}{p_t + (1 - \sum_{i=1}^{t}p_i)\left( \frac{2^K-1}{2^N-t}\right)}, \quad\quad t = 1, \ldots, T - 1;\\
        P_{\text{succ}\mid\text{stop}}(T) &= \frac{p_T}{1 - \sum_{i=1}^{T-1}p_i}.
    \end{align}
\end{corollary}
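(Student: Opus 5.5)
The plan is to write $P_{\text{succ}\mid\text{stop}}(t)$ as a ratio of two joint probabilities and read both off the proof of Proposition~\ref{prop:stop_at_t}. The denominator is $P_{\text{stop}}(t)$ itself. For the numerator, I need the probability that the tester stops after exactly $t$ tests \emph{and} the declared codeword equals $\underline{W}(1)$.

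First, I identify the numerator event. Codewords in $\mathbfcal{C}$ are distinct, and the candidate $\theta(\underline{Y})\oplus\pi(\underline{e}(t))$ equals $\underline{W}(1)$ exactly when $\pi(\underline{e}(t))$ is the target EP. So "stop at $t$ with correct output" is the same event as "target at position $t$ and no competing codeword among the first $t-1$ tests." This is precisely the first of the two situations split off in \eqref{eq:6.1}. By Corollary~\ref{coro:distribution_pt} and the definition \eqref{def:pre_target_hit} of $f$, its probability is $p_t(1-f(t))$. This holds for every $t\le T$, including $t=T$.

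Second, I form the ratios.

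For $t\le T-1$, \eqref{eq:end_time} gives
\[
P_{\text{succ}\mid\text{stop}}(t)=\frac{p_t(1-f(t))}{p_t(1-f(t))+\bigl(1-\sum_{i=1}^t p_i\bigr)\bigl(f(t+1)-f(t)\bigr)}.
\]
I divide numerator and denominator by $1-f(t)$, which is positive as long as $t-1\le 2^N-2^K$; this is implicit in a meaningful budget. Lemma~\ref{lemma:1} gives $1-f(t)=\prod_{t'=1}^{t-1}\frac{2^N-2^K+1-t'}{2^N-t'}$, so consecutive terms differ by a single factor:
\[
\frac{f(t+1)-f(t)}{1-f(t)}=1-\frac{1-f(t+1)}{1-f(t)}=1-\frac{2^N-2^K+1-t}{2^N-t}=\frac{2^K-1}{2^N-t}.
\]
Substituting this yields the first formula.

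For $t=T$, \eqref{eq:end_time_T} gives $P_{\text{stop}}(T)=\bigl(1-\sum_{i=1}^{T-1}p_i\bigr)(1-f(T))$. The common factor $1-f(T)$ cancels, leaving $p_T/\bigl(1-\sum_{i=1}^{T-1}p_i\bigr)$.

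The only real subtlety is the numerator identification in the first step. At $t=T$, "stopping" also includes the failure declaration, and there the correct output occurs only when the $T$-th candidate is the target. I will state this explicitly, using the distinctness of codewords in the ensemble and the fact that $P_{\text{stop}}(T)$ counts all terminations at the budget. Everything after that is the telescoping-ratio algebra above.
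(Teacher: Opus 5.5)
Your proposal is correct and matches the paper's proof. You identify the success event at stop time $t$ as the first branch of the decomposition in Proposition~\ref{prop:stop_at_t}, with probability $p_t(1-f(t))$, divide by $P_{\text{stop}}(t)$, and use the single-factor ratio $(1-f(t+1))/(1-f(t))$ from Lemma~\ref{lemma:1}, with the common factor $1-f(T)$ cancelling at $t=T$. You also make explicit a point the paper leaves implicit: dividing by $1-f(t)$ requires $t\le 2^N-2^K+1$.
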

\begin{proof}
    From the proof of Proposition~\ref{prop:stop_at_t}, conditional on the event that the codeword tester stops after the $t$-th test, there are two mutually exclusive possibilities: the target EP is encountered at index~$t$ in the ordered list, and no earlier candidate produces a competing codeword, corresponding to correct decoding; or the target EP is not among the first $t$ EPs, and the $t$-th test is a pre-target codeword hit, corresponding to incorrect decoding. The probabilities of these two situations are respectively
    \begin{equation}
        \Pr(\text{correct at }t) = p_t(1-f(t)) \text{ and } \Pr(\text{incorrect at }t) =  \left(1 - \sum_{i=1}^{t}p_i\right)(f(t+1)-f(t)).
    \end{equation}
    Therefore, under the condition that the codeword tester stops after the $t$-th test, the probability of decoding correctly is:
    \begin{align}
        \frac{\Pr(\text{correct at }t)}{P_{\text{stop}}(t)} & = \frac{p_t(1-f(t))}{p_t(1-f(t)) + (1 - \sum_{i=1}^{t}p_i)(f(t+1)-f(t))} \\
        & = \frac{p_t}{p_t + (1 - \sum_{i=1}^{t}p_i)\left(1 - \frac{1 - f(t+1)}{1 - f(t)}\right)} = \frac{p_t}{p_t + (1 - \sum_{i=1}^{t}p_i)\left( \frac{2^K-1}{2^N-t}\right)},
    \end{align}
    for $t = 1, \ldots, T - 1$, and
    \begin{equation}
        \frac{\Pr(\text{correct at }T)}{P_{\text{stop}}(T)} = \frac{p_T(1-f(T))}{(1-\sum_{i=1}^{T-1}p_i)(1 - f(T))} = \frac{p_T}{1 - \sum_{i=1}^{T-1}p_i},
    \end{equation}
    for stopping after $T$ tests.
\end{proof}

\subsection{Optimal AGP Ordering for the Random Code Ensemble}\label{SubSec:random_conclusion}

By the analysis of the average BLER and the average number of tests in the preceding two subsections, we now establish the optimal ordering principle for ORB-type GRAND over the random code ensemble.

\begin{theorem}[Optimal AGP ordering]\label{th:random_opti}
For ORB-type decoding applied to the random code ensemble, impose a maximum number of tests $T$ and consider the EP set available to the decoder. The decoding success probability $P_{\mathrm{succ}}$ is maximized by selecting the tested EPs so that their AGPs dominate those of the untested EPs and ordering the tested EPs in non-increasing AGP order, i.e.,
 \begin{equation}
     p_1\geq p_2\geq \cdots\geq p_T, \text{ and }\ p_i\geq p_j,\ \forall \ i\leq T<j.
 \end{equation}
This ordering also minimizes the average number of tests.
\end{theorem}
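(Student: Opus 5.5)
The success probability from Theorem~\ref{th:random_error_rate} and the average number of tests from Theorem~\ref{th:ave_guess_num} both have the structure ``weights $p_t$ times position-dependent coefficients''. The plan is to show that in each case the coefficients are monotone in the position index. A rearrangement (exchange) argument then yields the claimed ordering.

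\textbf{Success probability.} Write $g_t := \prod_{i=1}^{t-1}\frac{2^N-2^K+1-i}{2^N-i} = 1-f(t)$, so that $P_{\mathrm{succ}}=\sum_{t=1}^T p_t g_t$.

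First, I will check that $g_t$ does not depend on which EPs occupy the positions. By Lemma~\ref{lemma:1} it depends only on the index $t$, and is therefore fixed once $N$, $K$ are fixed. Each factor $\frac{2^N-2^K+1-i}{2^N-i}$ lies in $[0,1]$, so $g_1=1\ge g_2\ge\cdots\ge g_T\ge 0$.

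Second, I will note that the AGP attached to an EP is intrinsic to that EP. By Proposition~\ref{prop:agp_pi}, $p_t$ depends only on $\underline{e}(t)$ and on the permutation law, not on its position in $\mathcal{E}$. Choosing and ordering $\mathcal{E}$ thus amounts to assigning AGP values from a fixed multiset to positions carrying nonnegative, non-increasing weights $g_t$.

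Third, two exchange steps finish this part.
\begin{itemize}
\item If $p_i<p_j$ for some $i<j\le T$, swapping the two EPs changes $P_{\mathrm{succ}}$ by $(p_j-p_i)(g_i-g_j)\ge 0$.
\item If a tested EP at position $i\le T$ has AGP smaller than that of an untested EP, replacing it increases $P_{\mathrm{succ}}$ by $(p_j-p_i)g_i\ge0$.
\end{itemize}
Finitely many such moves reach the non-increasing top-$T$ configuration without ever decreasing $P_{\mathrm{succ}}$. Hence that configuration is optimal.

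\textbf{Average number of tests.} Rewrite $Q$ from Theorem~\ref{th:ave_guess_num} as $Q = A - \sum_{t=1}^{T-1} p_t h_t$, where $A=\sum_{t=1}^T g_t$ is ordering-independent and $h_t := \sum_{i=t+1}^{T} g_i$. Here I use $T-t-\sum_{i=t+1}^T f(i) = \sum_{i=t+1}^T g_i$. Since $g_i\ge0$, $h_t$ is non-increasing in $t$, and setting $h_T:=0$ extends the sum to $t=T$ consistently. The same two exchange steps then apply.
\begin{itemize}
\item Swapping an out-of-order pair $i<j$ changes $Q$ by $-(p_j-p_i)(h_i-h_j)\le 0$.
\item Replacing a tested EP by an untested one with larger AGP changes $Q$ by $-(p_j-p_i)h_i\le0$.
\end{itemize}
So the same configuration minimizes $Q$, which gives simultaneous optimality.

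\textbf{Main obstacle.} The delicate point is justifying that the coefficients $g_t$ (and hence $h_t$) are truly position-only quantities, unaffected by which EPs sit at the earlier positions. This is exactly the exchangeability content of Lemma~\ref{lemma:1} and of the proof of Proposition~\ref{prop:stop_at_t}, so I will cite them explicitly. A second, minor point is that the EPs in $\mathcal{E}$ must be distinct for Lemma~\ref{lemma:1} to apply. This is implicit in the definition of $\mathcal{E}$ as a list of EPs, and I will state it as an assumption. Ties in AGP do not affect either objective, so arbitrary tie-breaking is allowed.
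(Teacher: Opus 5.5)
Your proposal is correct and follows the same route as the paper: both write $P_{\mathrm{succ}}$ and $Q$ as $\sum_t p_t$ times position-only coefficients, namely $1-f(t)$ and $T-t-\sum_{i=t+1}^{T}f(i)=h_t$, that are non-increasing in $t$, and both conclude by assigning larger AGPs to larger coefficients; your explicit exchange steps and the convention $h_T:=0$ make rigorous what the paper states in words. One small inaccuracy is that the factors $\frac{2^N-2^K+1-i}{2^N-i}$ are not all in $[0,1]$, since they turn negative for $i>2^N-2^K+1$. This is harmless: the factor at $i=2^N-2^K+1$ vanishes, so $g_t=0$ from that point on, and your nonnegativity and monotonicity claims for $g_t$ (and hence $h_t$) still hold.
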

\begin{proof}
    For the probability of decoding correctly, Theorem~\ref{th:random_error_rate} gives
     \begin{equation}\label{eq:prob_succ}
         P_{\text{succ}} = \sum_{t=1}^{T}p_t(1 - f(t)) = \sum_{t=1}^{T}p_t \left(\prod_{i=1}^{t-1}\frac{2^N-2^K+1-i}{2^N-i}\right),
     \end{equation}
     notice that $\left(\prod_{i=1}^{t-1}\frac{2^N-2^K+1-i}{2^N-i}\right)$ decreases monotonically with $t$. Therefore, in order to maximize $P_{\text{succ}}$, the tested EPs should be selected and ordered according to:
     \begin{equation}\label{eq:opti_condition_2}
        p_1\geq p_2\geq \cdots\geq p_T, \text{ and }\ p_i\geq p_j,\ \forall\ i\leq T<j.
     \end{equation}
     Meanwhile, Theorem~\ref{th:ave_guess_num} gives the average number of tests as
     \begin{equation}
         Q = T- \sum_{t=1}^{T} f(t) - \sum_{t=1}^{T-1}p_t\left(T -t - \sum_{i=t+1} ^{T}f(i)\right),
     \end{equation}
     notice that $\left(T -t - \sum_{i=t+1}^{T}f(i)\right)$ decreases monotonically with $t$ since any $f(i)$ is less than 1. Therefore, minimizing $Q$ is equivalent to assigning larger AGPs to larger coefficients, and the EPs should be selected and ordered according to:
     \begin{equation}\label{eq:opti_condition_3}
         p_1\geq p_2\geq \cdots\geq p_{T-1}, \text{ and }\ p_i\geq p_j,\ \forall\ i\leq T-1<j,
     \end{equation}
     which is slightly weaker than \eqref{eq:opti_condition_2}.

     Since \eqref{eq:opti_condition_2} implies the weaker condition \eqref{eq:opti_condition_3}, arranging the EPs according to non-increasing AGP simultaneously maximizes the decoding success probability and minimizes the average number of tests.

\end{proof}

Theorem~\ref{th:random_opti} is an ordering result rather than an algorithmic claim. Algorithm~\ref{alg:E_RS} implements this principle within the ORB-type structure by estimating the AGPs of a finite candidate EP list and then reshuffling the list in non-increasing AGP order. Thus, for the random code ensemble, RS-ORBGRAND can be viewed as a practical realization of the optimal AGP ordering over the candidate list used for construction.

\subsection{Illustrations}
\label{SubSec:random_simulation}

We present some numerical results to illustrate the analysis in this section. In the numerical illustrations we consider the random code ensemble described in Section~\ref{SubSec:channel_model} with $N = 127$ and $K = 113$. We plot in Fig.~\ref{fig:err_1and2} the average BLER and its target-miss and target-preemption components, according to Theorem~\ref{th:random_error_rate} and Propositions~\ref{prop:search_problem} and~\ref{prop:random_pre_target_hit}. We use the original ORBGRAND algorithm~\cite{duffy2022ORBGRAND}, and let the maximum number of tests $T$ change from $1$ to $10^4$. We see that the target-miss probability dominates for small $T$ and decreases with $T$, while the target-preemption probability increases with $T$ and becomes dominant for large $T$. This behavior is consistent with our intuition.
\begin{figure}[htbp]
    \centering
    \includegraphics[width = 0.72\textwidth]{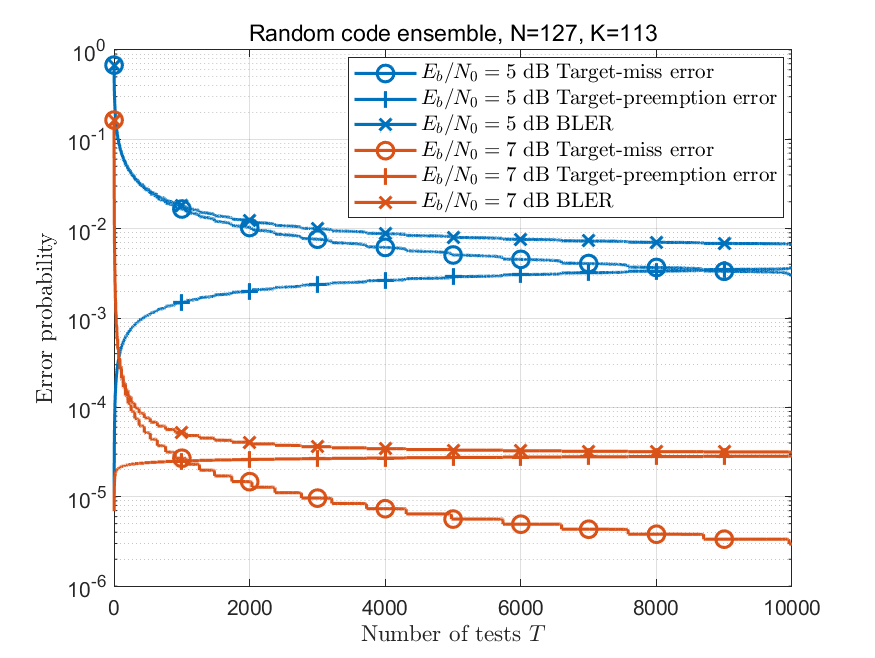}
    \caption{Random code ensemble: error-probability decomposition.}
    \label{fig:err_1and2}
\end{figure}

To gain further intuition on ORB-type GRAND algorithms, Fig.~\ref{fig:pt_stop} plots the AGP sequence $\{p_{t}\}_{t=1,\ldots,T}$ from Proposition~\ref{prop:agp_pi}, with $T = 10^4$, and the stopping probability sequence $\{P_{\text{stop}}(t)\}_{t=1,\ldots,T}$ from Proposition~\ref{prop:stop_at_t}. For ORBGRAND, both sequences show an overall downward trend with $t$ but have highly patterned variations. For RS-ORBGRAND, in contrast, both sequences decrease monotonically with $t$.
\begin{figure}[htbp]
    \centering
    \includegraphics[width = 0.72\textwidth]{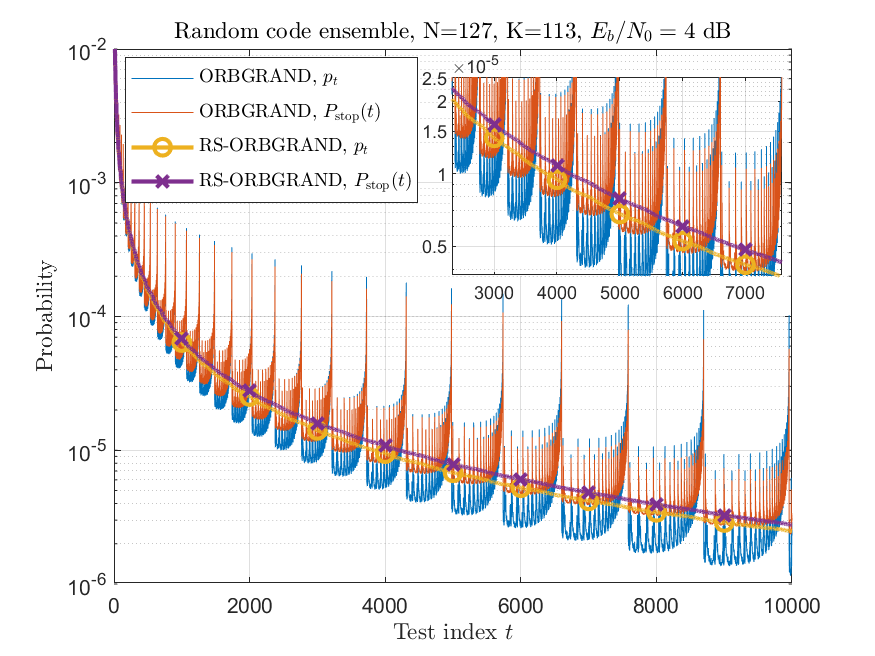}
    \caption{Random code ensemble: $p_{t}$ and $P_{\text{stop}}(t)$ for ORBGRAND and RS-ORBGRAND.}
    \label{fig:pt_stop}
\end{figure}

To illustrate the improvement of RS-ORBGRAND in decoding performance, we compare the BLER of ORBGRAND and RS-ORBGRAND in Fig.~\ref{fig:errprob} for the random code ensemble. We plot BLER curves for different choices of $T$. Starting from $T = 100$, RS-ORBGRAND visibly outperforms ORBGRAND, showing its advantage of prioritizing EPs with larger AGPs. We also observe that, for a fixed $T$, as $E_b/N_0$ grows sufficiently large, the downward trend of BLER curves gradually stalls. This is because ORB-type GRAND performance is ultimately limited by EPs outside the tested set of size $T$, unless $T = 2^N$.
\begin{figure}[htbp]
    \centering
    \includegraphics[width = 0.72\textwidth]{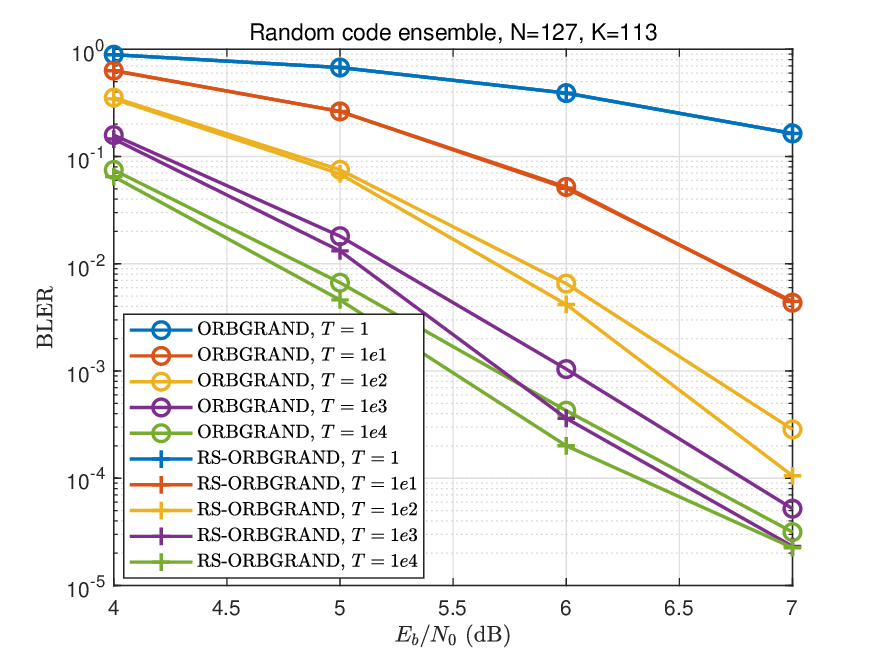}
    \caption{Random code ensemble: BLER comparison for different test budgets.}
    \label{fig:errprob}
\end{figure}

We also examine the performance improvement of RS-ORBGRAND from another perspective, as shown in Fig.~\ref{fig:complete}. The bars show the stopping probability $P_{\text{stop}}(t)$ over different test intervals, obtained from Proposition~\ref{prop:stop_at_t}, while the curves show the corresponding conditional error probability $1-P_{\text{succ}\mid\text{stop}}(t)$, obtained from Corollary~\ref{coro:stop_succ_prob}. Compared with ORBGRAND, RS-ORBGRAND has a higher chance to stop decoding with $10$--$10^2$ tests, and lower chance with $10^2$--$10^3$ and $10^3$--$10^4$ tests. These results further corroborate the ordering principle based on non-increasing AGP in Theorem~\ref{th:random_opti}.
\begin{figure}[htbp]
    \centering
    \includegraphics[width = 0.72\textwidth]{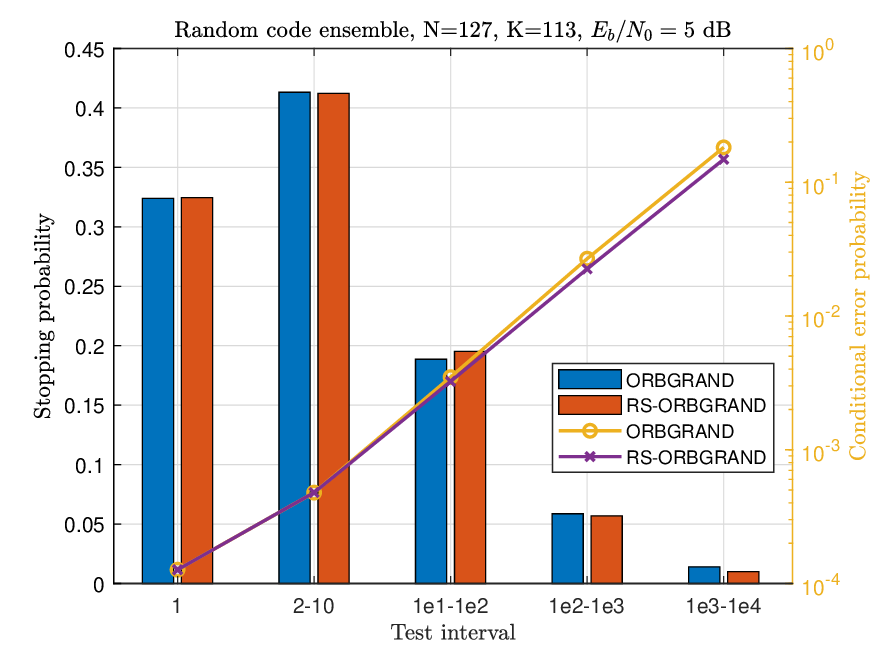}
    \caption{Stopping and conditional error probabilities by test interval.}
    \label{fig:complete}
\end{figure}

\section{Linear Block Codes}\label{Sec:linear_code}

In this section, we extend the ORB-type GRAND analysis from the random code ensemble to fixed linear block codes. The same target-miss/target-preemption decomposition remains useful, but the preemption term is no longer governed only by the number of earlier tests. Instead, it depends on algebraic relations among the tested EP differences and the codewords of the underlying code.

We first formulate the fixed-code model in Section~\ref{SubSec:linear_channel_model}. Section~\ref{SubSec:linear_error_rate} derives the BLER expression by reusing the random-ensemble decomposition where it remains applicable and isolating the code-dependent preemption term. Section~\ref{SubSec:linear_code_conclusion} studies AGP-ordered EP sequences in this setting, and Section~\ref{SubSec:pre_target_hit_linear} characterizes the remaining code-dependent term through code-weight relationships.

\subsection{Linear Block Code Model}\label{SubSec:linear_channel_model}

Consider a binary linear block code with rate $R=K/N$ and generator matrix 
$G\in\mathbb{F}_2^{K\times N}$. The corresponding codebook is
\begin{equation}
\mathcal{C}=\{G^T\underline{u}:\underline{u}\in\mathbb{F}_2^K\},
\end{equation}
which contains $2^K$ distinct codewords.

To facilitate a notation parallel to that used for the random code ensemble in Section~\ref{Sec:random_code}, we introduce a randomized indexing of the codewords in $\mathcal{C}$. Specifically, we regard the codewords in $\mathcal{C}$ as arranged in a random order, chosen uniformly from all $2^K!$ permutations, and denote the resulting ordered random codebook by $\mathbfcal{C}=\{\underline{W}(1),\underline{W}(2),\ldots,\underline{W}(2^K)\}$. Here, $\mathbfcal{C}$ is random only through the ordering of the codewords, whereas the underlying codebook $\mathcal{C}$ itself remains fixed.

Equivalently, this construction can be viewed as sampling all codewords in $\mathcal{C}$ without replacement and recording them in the order in which they are drawn. This randomized ordering is introduced only for analytical convenience, so that the transmitted codeword can be represented as $\underline{W}(1)$ without loss of generality. This randomized indexing does not affect the decoding rule or the performance, but enables a unified probabilistic representation consistent with the random code ensemble.
In the analysis below, we keep $\mathbfcal{C}$ when invoking this randomized indexing argument. Once membership in the underlying code is considered, we write the fixed codebook as $\mathcal{C}$.

\subsection{Analysis of Block Error Rate}\label{SubSec:linear_error_rate}
Unlike the random code ensemble, where the decoding performance depends solely on the EP ordering through $\{p_t\}$, the performance of ORB-type GRAND over linear block codes is also influenced by the structure of the code.

In particular, interactions between different EPs may lead to decoding errors even when the target EP is included in the tested sequence, giving rise to a non-negligible target-preemption component. To capture this effect, we establish a unified analytical expression that explicitly relates the BLER to both the EP ordering and the code structure.

\begin{theorem}[BLER of a linear block code]\label{th:linear_error_rate}
    For a given linear block code $\mathcal{C}$ and a given ORB-type GRAND algorithm, the BLER is given by
    \begin{equation}\label{eq:err_linear}
        P_{\text{err}} = 1 - P_{\text{succ}} =  1 - \sum_{t=1}^{T} p_t \Pr\!\left( \pi(\underline{e}(t') \oplus \underline{e}(t)) \notin \mathcal{C}, \forall\, t' < t \,\middle|\, \theta(\underline{Y})\oplus \pi(\underline{e}(t))=\underline{W}(1) \right)
    \end{equation}
    where $p_t = \mathbf{E}_{\underline{Y}} \left[P_{\underline{W}|\underline{Y}}(\theta(\underline{Y})\oplus \pi(\underline{e}(t))\mid\underline{Y})\right]$.
    
    If the channel is output-symmetric and the permutation $\pi$ is the rank-based permutation, the expression simplifies to
    \begin{equation}\label{eq:err_linear_pi}
        P_{\text{err}} = 1 - \sum_{t=1}^{T}p_t\Pr(\pi(\underline{e}(t') \oplus \underline{e}(t)) \notin \mathcal{C},\forall\, t'< t).
    \end{equation}
\end{theorem}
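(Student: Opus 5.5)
We follow the random-ensemble argument but keep the code-dependent term explicit. Using the randomized indexing of Section~\ref{SubSec:linear_channel_model}, the transmitted codeword can be taken as $\underline{W}(1)$. Decoding succeeds exactly when the target EP appears at some position $t\le T$ and no earlier permuted EP yields a codeword. These events are disjoint in $t$, because the target EP $\theta(\underline{Y})\oplus\underline{W}(1)$ is unique. Hence
\begin{equation*}
P_{\text{succ}}=\sum_{t=1}^{T}\Pr\bigl(\theta(\underline{Y})\oplus\pi(\underline{e}(t))=\underline{W}(1)\bigr)\Pr\bigl(\theta(\underline{Y})\oplus\pi(\underline{e}(t'))\notin\mathcal{C},\forall t'<t \,\big|\, \theta(\underline{Y})\oplus\pi(\underline{e}(t))=\underline{W}(1)\bigr).
\end{equation*}
The first factor equals $p_t$ by the argument of Proposition~\ref{prop:search_problem} and Corollary~\ref{coro:distribution_pt}. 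That argument only used that, for fixed $\underline{y}$, exactly one $\underline{w}$ satisfies the indicator, together with the definition of the posterior, so it holds for any codebook.

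For the conditional factor we use linearity. On the conditioning event, $\theta(\underline{Y})=\underline{W}(1)\oplus\pi(\underline{e}(t))$. Since $\pi$ acts linearly as a coordinate permutation, this gives
\begin{equation*}
\theta(\underline{Y})\oplus\pi(\underline{e}(t'))=\underline{W}(1)\oplus\pi(\underline{e}(t')\oplus\underline{e}(t)).
\end{equation*}
Because $\mathcal{C}$ is linear and $\underline{W}(1)\in\mathcal{C}$, this word lies in $\mathcal{C}$ iff $\pi(\underline{e}(t')\oplus\underline{e}(t))\in\mathcal{C}$. This yields \eqref{eq:err_linear} directly.

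For \eqref{eq:err_linear_pi}, we must show that the event $A_t=\{\pi(\underline{e}(t')\oplus\underline{e}(t))\notin\mathcal{C},\forall t'<t\}$ is independent of the event $B_t=\{\theta(\underline{Y})\oplus\pi(\underline{e}(t))=\underline{W}(1)\}$. Under the rank-based permutation, $A_t$ depends on $\underline{Y}$ only through the ranking of $|\underline{\ell}|$, i.e., through the permutation $\pi_{\underline{Y}}$. We therefore compute $\Pr(B_t\mid \pi_{\underline{Y}}=\sigma)$ for each permutation $\sigma$. Mimicking Proposition~\ref{prop:agp_pi}, we write this as an integral over the region where the ranking equals $\sigma$, using the posterior formula of Lemma~\ref{lemma:posterior_probability_codebook}. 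The sign-flip symmetry from Proposition~\ref{prop:agp} removes the codebook indicator while preserving all $|\ell_i|$ and hence the region. The permutation symmetry of the memoryless integrand then maps the region for $\sigma$ onto the order-statistics region $\mathcal{A}$. The resulting expression equals $p_t$ for every $\sigma$. Thus $B_t$ is independent of $\pi_{\underline{Y}}$, and hence of $A_t$, so the conditional probability reduces to the unconditional $\Pr(A_t)$.

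The main obstacle is this last independence step. One must verify that, after conditioning on a specific ranking $\sigma$, the sign-flip argument still applies and is uniform in $\sigma$. The flips preserve $|\ell_i|$, and exactly $2^K$ of the $2^N$ sign patterns satisfy the codeword indicator, with no dependence on $\sigma$. One must also check that relabeling coordinates by $\sigma^{-1}$ sends $\pi_{\underline{y}}(\underline{e}(t))$ to $\underline{e}(t)$ with the same density, which relies on i.i.d. marginals and on ties having probability zero. Once this is done, $\Pr(B_t\mid\pi_{\underline{Y}}=\sigma)=p_t$ for all $\sigma$, and the factorization follows.
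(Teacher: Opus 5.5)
Your proposal is correct and follows essentially the same route as the paper. You factorize by the unique target position with $\Pr(\theta(\underline{Y})\oplus\pi(\underline{e}(t))=\underline{W}(1))=p_t$. You use linearity of $\mathcal{C}$ to replace $\theta(\underline{Y})\oplus\pi(\underline{e}(t'))\in\mathcal{C}$ by $\pi(\underline{e}(t')\oplus\underline{e}(t))\in\mathcal{C}$. You then show that $\pi_{\underline{Y}}$ is independent of the target event via the sign-flip and coordinate-permutation symmetries. The only difference is cosmetic: you establish $\Pr(B_t\mid\pi_{\underline{Y}}=\sigma)=p_t$ for every $\sigma$ (dividing by $\Pr(\pi_{\underline{Y}}=\sigma)=1/N!$), whereas the paper uses Bayes' rule to show the equivalent $\Pr(\pi_{\underline{Y}}=\sigma\mid B_t)=1/N!$, and your sign-flip step inside each ranking region is spelled out more carefully than in the paper.
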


The rest of this subsection is devoted to proving Theorem~\ref{th:linear_error_rate}. Fig.~\ref{fig:tikz2} summarizes the main conclusions of this section and outlines the logical flow of the derivations for fixed linear block codes.

In the analysis of the BLER, we continue to decompose the overall error event into target-miss and target-preemption components. It is shown that the characterization of the target-miss component remains identical to that under the random code ensemble, whereas the treatment of the target-preemption component requires additional consideration that accounts for the structure of linear block codes.

\begin{figure*}[htbp]
    \centering
    \begin{tikzpicture}[
        scale=0.95, >=latex,
        box/.style={
            draw,
            rounded corners=2pt,
            minimum height=5.2em,
            text width=7.8em,
            align=center,
            font=\small
        },
        arrow/.style={
            ->,
            line width=1.0pt
        }
    ]

        \node[box] (n2) at (4,0) {Theorem~\ref{th:linear_error_rate}\\BLER};
        \node[box] (n3) at (8,0) {Theorem~\ref{th:linear_opti}\\AGP-Ordered EPs};

        \node[box] (n5) at (0,-3.2) {Proposition~\ref{prop:error_division}\\Error Event Decomposition};
        \node[box] (n6) at (4,-3.2) {Proposition~\ref{prop:search_problem}\\Target Miss Error};
        \node[box] (n7) at (8,-3.2) {Theorem~\ref{th:linear_pre_target_hit}\\Target Preemption Error};

        \node[box] (n9) at (0,-6.4) {Proposition~\ref{prop:agp_pi}\\AGP Calculation};
        \node[box] (n10) at (4,-6.4) {Corollary~\ref{coro:distribution_pt}\\Target EP Distribution};
        \node[box] (n11) at (8,-6.4) {Lemma~\ref{lem:linear_pre_target_hit}\\Pre-Target Hit: Event Expansion};
        \node[box] (n12) at (12,-6.4) {Propositions~\ref{prop:rearrange_number} and~\ref{prop:Bj_recursion}\\Pre-Target Hit: Structural Counting};

        \draw[arrow] (n5) -- (n2);
        \draw[arrow] (n6) -- (n2);
        \draw[arrow] (n7) -- (n2);

        \draw[arrow] (n2) -- (n3);

        \draw[arrow] (n9) -- (n6);
        \draw[arrow] (n10) -- (n6);

        \draw[arrow] (n10) -- (n7);
        \draw[arrow] (n11) -- (n7);
        \draw[arrow] (n12) -- (n7);

    \end{tikzpicture}
    \caption{Main dependencies among the fixed linear block code results.}
    \label{fig:tikz2}
\end{figure*}
\begin{proof}
    Based on the randomized indexing introduced in Section~\ref{SubSec:linear_channel_model} and the reasoning of Proposition~\ref{prop:error_division}, the total decoding error probability can be expressed as
    \begin{align}
        P_{\text{err}} & = \Pr(\theta(\underline{Y})\oplus \pi(\underline{e}(t))\neq \underline{W}(1),\forall\ t\leq  T) \notag\\
        & \quad \quad + \sum_{t=1}^{T}\Pr(\theta(\underline{Y})\oplus \pi(\underline{e}(t)) = \underline{W}(1)) \notag\\
        & \quad \quad \cdot\Pr(\theta(\underline{Y})\oplus \pi(\underline{e}(t'))\in \mathbfcal{C},\exists\ t'< t \mid \theta(\underline{Y})\oplus \pi(\underline{e}(t))=\underline{W}(1)). \label{eq:err_div_linear}
    \end{align}
    The first term represents the event that the target EP $\theta(\underline{Y}) \oplus \underline{W}(1)$ is not contained in the tested EP set $\pi(\mathcal{E})$ (target-miss error), whereas the second term corresponds to the event that $\theta(\underline{Y}) \oplus \underline{W}(1)\in\pi(\mathcal{E})$ but at least another codeword in $\mathcal{C}$ also satisfies the verification condition before the target is reached (target-preemption error).

    For the target-miss error, the analysis follows the same reasoning as in Section~\ref{SubSec:random_error_rate}. Since its characterization depends only on the ordering of tested EPs and the channel statistics, Proposition~\ref{prop:search_problem}, Corollary~\ref{coro:distribution_pt}, and Proposition~\ref{prop:agp_pi} remain applicable to fixed linear block codes.

    We next consider the target-preemption error. Since $\mathbfcal{C}$ is only a randomized indexing of the fixed codebook $\mathcal{C}$, the codeword-membership event in the second term of \eqref{eq:err_div_linear} can be written with $\mathcal{C}$. We then have
    \begin{align}
        & \sum_{t=1}^{T}\Pr(\theta(\underline{Y})\oplus \pi(\underline{e}(t)) = \underline{W}(1))\Pr(\theta(\underline{Y})\oplus \pi(\underline{e}(t'))\in \mathcal{C},\exists\ t'< t \mid \theta(\underline{Y})\oplus \pi(\underline{e}(t))=\underline{W}(1)) \notag\\
        = & \sum_{t=1}^{T}p_{t}\Pr(\theta(\underline{Y})\oplus \pi(\underline{e}(t'))\in \mathcal{C},\exists\ t'< t \mid \theta(\underline{Y})\oplus \pi(\underline{e}(t))=\underline{W}(1)) \label{eq:Long5.1}\\
        = & \sum_{t=1}^{T}p_{t}\Pr(\theta(\underline{Y})\oplus \pi(\underline{e}(t')) \oplus\theta(\underline{Y})\oplus \pi(\underline{e}(t)) \in \mathcal{C},\exists\ t'< t \mid \theta(\underline{Y})\oplus \pi(\underline{e}(t))=\underline{W}(1)) \label{eq:Long5.2}\\
        = & \sum_{t=1}^{T}p_{t}\Pr(\pi(\underline{e}(t') \oplus \underline{e}(t)) \in \mathcal{C},\exists\ t'< t \mid \theta(\underline{Y})\oplus \pi(\underline{e}(t))=\underline{W}(1)), \label{eq:Long5.3}
    \end{align}
    where the transition from~\eqref{eq:Long5.1} to~\eqref{eq:Long5.2} follows from the fact that $\mathcal{C}$ forms a vector space over $\mathbb{F}_2$ and is closed under modulo-two addition. In particular, under the condition $\theta(\underline{Y})\oplus \pi(\underline{e}(t))=\underline{W}(1)\in \mathcal{C}$, we have
    \begin{equation}
    \theta(\underline{Y})\oplus \pi(\underline{e}(t')) \in \mathcal{C}
    \;\Longleftrightarrow\;
    \pi(\underline{e}(t')) \oplus \pi(\underline{e}(t)) \in \mathcal{C},
    \end{equation}
    which leads to~\eqref{eq:Long5.2}.
    
    In summary,~\eqref{eq:err_linear} is obtained by decomposing the overall decoding error into two parts according to Proposition~\ref{prop:error_division}. Specifically, the target-miss and target-preemption errors are analyzed using Proposition~\ref{prop:search_problem} and the above derivations, respectively, thereby yielding the following unified expression:
    \begin{align}
        P_{\text{err}} & = \underbrace{1 - \sum_{t=1}^{T}p_t}_{\text{target-miss error}} +  \underbrace{\sum_{t=1}^{T}p_{t}\Pr(\pi(\underline{e}(t') \oplus \underline{e}(t)) \in \mathcal{C},\exists\ t'< t \mid \theta(\underline{Y})\oplus \pi(\underline{e}(t))=\underline{W}(1))}_{\text{target-preemption error}} 
    \end{align}

    For an output-symmetric channel and the rank-based random permutation $\pi_{\underline{Y}}$, we show that
    \begin{equation}
        \Pr(\pi_{\underline{Y}}(\underline{e}(t') \oplus \underline{e}(t)) \in \mathcal{C},\exists\, t'< t \mid \theta(\underline{Y})\oplus \pi_{\underline{Y}}(\underline{e}(t))=\underline{W}(1)) = \Pr(\pi_{\underline{Y}}(\underline{e}(t') \oplus \underline{e}(t)) \in \mathcal{C},\exists\, t'< t).
    \end{equation}
    The key observation is that, under an output-symmetric channel, the distribution of $\pi_{\underline{Y}}$ depends only on the order statistics of $|\ell_i|$ and is invariant under permutations of the channel outputs. As a result, $\pi_{\underline{Y}}$ is independent of the event $\theta(\underline{Y})\oplus \pi_{\underline{Y}}(\underline{e}(t))=\underline{W}(1)$.

    To formally establish this, consider any fixed coordinate permutation $\sigma$ with $\Pr(\pi_{\underline{Y}}=\sigma)=1/N!$. By Bayes' rule,
    \begin{equation}\label{eq:pt_with_pi_0}
         \Pr(\pi_{\underline{Y}} = \sigma \mid \theta(\underline{Y})\oplus \pi_{\underline{Y}}(\underline{e}(t))=\underline{W}(1)) = \frac{ \Pr(\pi_{\underline{Y}} = \sigma) \Pr(\theta(\underline{Y})\oplus \pi_{\underline{Y}}(\underline{e}(t))=\underline{W}(1) \mid \pi_{\underline{Y}} = \sigma)}{\Pr(\theta(\underline{Y})\oplus \pi_{\underline{Y}}(\underline{e}(t))=\underline{W}(1))}.
    \end{equation}
    By Corollary~\ref{coro:distribution_pt}, the denominator in \eqref{eq:pt_with_pi_0} equals $p_t$, and for the numerator, following Propositions~\ref{prop:search_problem} and~\ref{prop:agp_pi}, we have:
    \begin{align}
        & \Pr(\pi_{\underline{Y}} = \sigma) \Pr(\theta(\underline{Y})\oplus \pi_{\underline{Y}}(\underline{e}(t))=\underline{W}(1) \mid \pi_{\underline{Y}} = \sigma) \notag \\
        =  & \int_{\mathbb{R}^N}p_{\underline{Y}}(\underline{y})\left(\prod_{i:e_{r_i} = 1 } \frac{1}{1+\exp(|\ell_i|)} \prod_{i:e_{r_i} = 0} \frac{\exp(|\ell_i|)} {1+\exp(|\ell_i|)}\right)\mathbf{1}(\pi_{\underline{y}}= \sigma)\mathrm{d}\underline{y}\label{eq:pt_with_pi_1}
    \end{align}
    
    In~\eqref{eq:pt_with_pi_1}, consider any $\underline{y}$ and all $N!$ permutations $\tilde{\underline{y}}$ obtained by shuffling its elements. The corresponding $\tilde{\underline{r}}$ and $\tilde{\underline{\ell}}$ satisfy $p_{\underline{Y}}(\underline{y}) = p_{\underline{Y}}(\tilde{\underline{y}})$, and
    \begin{equation}
        \prod_{i:e_{r_i} = 1 } \frac{1}{1+\exp(|\ell_i|)} \prod_{i:e_{r_i} = 0} \frac{\exp(|\ell_i|)} {1+\exp(|\ell_i|)} = \prod_{i:e_{\Tilde{r}_i} = 1 } \frac{1}{1+\exp(|\Tilde{\ell}_i|)} \prod_{i:e_{\Tilde{r}_i} = 0} \frac{\exp(|\Tilde{\ell}_i|)} {1+\exp(|\Tilde{\ell}_i|)}.
    \end{equation}
    That is, these $N!$ terms have identical values, but only one of the permutations $\tilde{\underline{y}}$ yields $\pi_{\tilde{\underline{y}}} = \sigma$. Hence,
    \begin{equation}\label{eq:pt_with_pi_2}
        \eqref{eq:pt_with_pi_1} = \frac{1}{N!}\int_{\mathbb{R}^N}p_{\underline{Y}}(\underline{y})\left(\prod_{i:e_{r_i} = 1 } \frac{1}{1+\exp(|\ell_i|)} \prod_{i:e_{r_i} = 0} \frac{\exp(|\ell_i|)} {1+\exp(|\ell_i|)}\right)\mathrm{d}\underline{y} = \frac{p_t}{N!}
    \end{equation}
    Substituting~\eqref{eq:pt_with_pi_2} into~\eqref{eq:pt_with_pi_0} gives $\Pr(\pi_{\underline{Y}} = \sigma \mid \theta(\underline{Y})\oplus \pi_{\underline{Y}}(\underline{e}(t))=\underline{W}(1)) = \frac{1}{N!}= \Pr(\pi_{\underline{Y}} = \sigma)$; that is, $\pi_{\underline{Y}}$ and the event $\theta(\underline{Y})\oplus \pi_{\underline{Y}}(\underline{e}(t))=\underline{W}(1)$ are independent. So we have:
    \begin{equation}
        \eqref{eq:Long5.3} = \sum_{t=1}^{T}p_{t}\Pr(\pi_{\underline{Y}}(\underline{e}(t') \oplus \underline{e}(t)) \in \mathcal{C},\exists\ t'< t). \label{eq:Long5.4}
    \end{equation}
    
\end{proof}


\subsection{Optimal ORB-type GRAND under Linear Block Codes}\label{SubSec:linear_code_conclusion}

This subsection studies the ordering of a fixed candidate EP set for a fixed linear block code. The candidate set itself is assumed to be given; hence the result below should be read as a structural optimality statement for arranging the selected EPs, not as a guarantee that the selected set is globally optimal among all $2^N$ EPs.

Recall from~\eqref{def:pre_target_hit} that $f(\mathcal{C},\mathcal{E},t)$ denotes the pre-target codeword-hit probability when the target EP is $\pi(\underline{e}(t))$. Theorem~\ref{th:linear_error_rate} shows that, for an output-symmetric channel and the rank-based permutation $\pi$, this quantity reduces to
\begin{equation}
    f(\mathcal{C},\mathcal{E},t)
    = \Pr\!\big(\pi(\underline{e}(t') \oplus \underline{e}(t)) \in \mathcal{C},\ \exists\, t' < t\big),
    \label{eq:mif_linear}
\end{equation}
and accordingly,
\begin{equation}
    P_{\text{err}}
    = 1 - \sum_{t=1}^{T} p_t \!\left(1 - \Pr\!\big(\pi(\underline{e}(t') \oplus \underline{e}(t)) \in \mathcal{C},\ \exists\, t' < t\big)\right)
    = 1 - \sum_{t=1}^{T} p_t \big(1 - f(\mathcal{C},\mathcal{E},t)\big),
    \quad
    P_{\text{succ}}
    = \sum_{t=1}^{T} p_t \big(1 - f(\mathcal{C},\mathcal{E},t)\big).
\end{equation}

The output-symmetric channel assumption and the rank-based permutation are the standing conditions under which the conditional expression in Theorem~\ref{th:linear_error_rate} reduces to the unconditional form above. In what follows, we present the corresponding results for a fixed linear block code. 
Compared with the case of the random code ensemble in Theorem~\ref{th:random_opti}, the optimality structure becomes more involved due to the presence of code-induced dependencies among EPs. We therefore state the result in the following form.

\begin{theorem}\label{th:linear_opti}
For any linear block code and any output-symmetric channel, consider ORB-type GRAND decoding with $\pi$ the rank-based permutation. Fix a $T$-element candidate EP set and consider all its orderings. Among the orderings that maximize the decoding success probability, at least one has an associated sequence $\{p_t\}_{t=1}^{T}$ that is non-increasing, i.e.,
\begin{equation}\label{eq:opti_linear_condition}
    p_1 \ge p_2 \ge \cdots \ge p_{T},
\end{equation}
where $p_t=\mathbf{E}_{\underline{Y}}\!\left[P_{\underline{W}|\underline{Y}}(\theta(\underline{Y})\oplus\pi(\underline{e}(t))\mid\underline{Y})\right].$

\end{theorem}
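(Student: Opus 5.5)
The plan is an adjacent-swap (exchange) argument applied to the success expression $P_{\text{succ}}=\sum_{t=1}^{T}p_t\bigl(1-f(\mathcal{C},\mathcal{E},t)\bigr)$. Its unconditional form, with $f(\mathcal{C},\mathcal{E},t)=\Pr\bigl(\pi(\underline{e}(t')\oplus\underline{e}(t))\in\mathcal{C},\ \exists\, t'<t\bigr)$, is available from Theorem~\ref{th:linear_error_rate} under exactly the standing assumptions (output-symmetric channel, rank-based $\pi$). By Proposition~\ref{prop:agp_pi}, each $p_t$ depends only on the EP $\underline{e}(t)$ and not on its position in the list. So reordering a fixed candidate set only permutes the AGP values and changes the prefix sets appearing in the $f$-terms.

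For a set $S$ of EPs and an EP $\underline{x}$, define the event $E_{\underline{x}}^{S}=\{\pi(\underline{s}\oplus\underline{x})\notin\mathcal{C},\ \forall\,\underline{s}\in S\}$. Then the position-$t$ term is $p_{\underline{e}(t)}\Pr\bigl(E^{\{\underline{e}(1),\ldots,\underline{e}(t-1)\}}_{\underline{e}(t)}\bigr)$. Now fix an ordering, let $A=\{\underline{e}(1),\ldots,\underline{e}(t-1)\}$, and write $\underline{a}=\underline{e}(t)$, $\underline{b}=\underline{e}(t+1)$; consider swapping $\underline{a}$ and $\underline{b}$.
\begin{itemize}
\item Terms at positions $<t$ are unchanged.
\item Terms at positions $>t+1$ are unchanged, since their predecessor \emph{sets} are unchanged.
\item Only the two middle terms change. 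Let $D=\{\pi(\underline{a}\oplus\underline{b})\in\mathcal{C}\}$, which is symmetric in $\underline{a},\underline{b}$. Then $E_{\underline{b}}^{A\cup\{\underline{a}\}}=E_{\underline{b}}^{A}\cap D^{c}$ and $E_{\underline{a}}^{A\cup\{\underline{b}\}}=E_{\underline{a}}^{A}\cap D^{c}$.
\end{itemize}
Hence the change in $P_{\text{succ}}$ (after minus before) is
\begin{equation*}
p_{\underline{b}}\Pr(E_{\underline{b}}^{A}\cap D)-p_{\underline{a}}\Pr(E_{\underline{a}}^{A}\cap D).
\end{equation*}

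The key step is to show that $E_{\underline{a}}^{A}\cap D=E_{\underline{b}}^{A}\cap D$. This uses linearity in two ways. First, $\pi$ acts as a coordinate permutation, so it commutes with $\oplus$: $\pi(\underline{s}\oplus\underline{b})=\pi(\underline{s}\oplus\underline{a})\oplus\pi(\underline{a}\oplus\underline{b})$. Second, $\mathcal{C}$ is closed under addition. So on $D$, $\pi(\underline{s}\oplus\underline{b})\in\mathcal{C}$ holds if and only if $\pi(\underline{s}\oplus\underline{a})\in\mathcal{C}$, for every $\underline{s}\in A$. The change therefore equals $(p_{\underline{b}}-p_{\underline{a}})\Pr(E_{\underline{a}}^{A}\cap D)$, which is $\ge 0$ whenever $p_{\underline{a}}<p_{\underline{b}}$. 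In words, swapping an adjacent inversion of the AGP sequence never decreases the success probability.

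To conclude, start from any ordering that maximizes $P_{\text{succ}}$ over the finite set of orderings of the candidate set. Run bubble sort on it, each time swapping an adjacent pair with $p_t<p_{t+1}$. Each swap keeps $P_{\text{succ}}$ at least as large, so the ordering remains a maximizer. Bubble sort terminates after finitely many swaps in an ordering satisfying \eqref{eq:opti_linear_condition}. Ties among AGPs cause no issue, because we only swap strict inversions.

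The main obstacle is the cancellation step. The naive swap difference has no evident sign, because the two pre-target hit probabilities genuinely depend on which EP is tested first. It becomes tractable only once one observes that the events agree on $D$, which is where linearity of the code and of $\pi$ enter. I would also double-check that the conditional-to-unconditional reduction of Theorem~\ref{th:linear_error_rate} applies to each $t$ individually, so that the per-position terms may be rearranged freely.
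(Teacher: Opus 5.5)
Your proposal is correct and matches the paper's proof essentially step for step. The paper likewise swaps an adjacent strict AGP inversion and reduces the change in $P_{\mathrm{succ}}$ to $p_{t^*+1}\Pr(B\cap A_2^c)-p_{t^*}\Pr(B\cap A_1^c)$, with $B$ your event $D$; it then uses linearity of $\mathcal{C}$ and of the coordinate permutation to show $B\cap A_1^c=B\cap A_2^c$, and finishes with repeated swaps starting from a maximizer. Your differences are cosmetic: you phrase things through no-hit events $E^{S}_{\underline{x}}$ rather than hit events, and you state explicitly a point the paper leaves implicit, namely that the terms at positions beyond $t+1$ are unchanged because $f$ depends only on the predecessor set.
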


\begin{proof}
We prove the theorem by a pairwise exchange argument. In the same ORB-type GRAND setting as above, only finitely many orderings of the fixed candidate EP set are possible, so $P_{\mathrm{succ}}$ attains its maximum over this finite set; let $\mathcal{E}$ be any ordering that achieves this maximum. If $\{p_t\}$ is not non-increasing, then there exists at least one adjacent pair $(t^{*}, t^{*}+1)$ such that
\begin{equation}\label{eq:th_7_prove_0}
    p_{t^{*}} < p_{t^{*}+1}.
\end{equation}
Exchange these two EPs and denote the resulting ordered set by $\mathcal{E}'$, where $\underline{e}'(i)=\underline{e}(i)$ for $i\neq t^{*},t^{*}+1$, $\underline{e}'(t^{*})=\underline{e}(t^{*}+1)$, and $\underline{e}'(t^{*}+1)=\underline{e}(t^{*})$. Correspondingly, let $p'_i=p_i$ for $i\neq t^{*},t^{*}+1$, $p'_{t^{*}}=p_{t^{*}+1}$, and $p'_{t^{*}+1}=p_{t^{*}}$. The decoding success probabilities before and after the exchange are, respectively,
    \begin{gather}
    P_{\mathrm{succ}}=\sum_{t=1}^{T}p_t-\sum_{t=1}^{T}p_tf(\mathcal{C},\mathcal{E},t),\\
    P_{\mathrm{succ}}'=\sum_{t=1}^{T}p'_t-\sum_{t=1}^{T}p'_tf(\mathcal{C},\mathcal{E}',t).
\end{gather}
We next prove that $P_{\mathrm{succ}}'\ge P_{\mathrm{succ}}$, equivalently that $P_{\mathrm{succ}}'-P_{\mathrm{succ}}\ge 0$. We expand this difference, express the relevant $f(\cdot)$ terms using auxiliary events, and show that the resulting expression is nonnegative under \eqref{eq:th_7_prove_0}. Writing out the difference, we obtain
\begin{align}
    P_{\mathrm{succ}}' - P_{\mathrm{succ}} ={}& \sum_{t=1}^{T}p'_t - \sum_{t=1}^{T}p'_t f(\mathcal{C},\mathcal{E}',t) - \sum_{t=1}^{T}p_t + \sum_{t=1}^{T}p_tf(\mathcal{C},\mathcal{E},t) \notag \\
    ={}& p_{t^{*}}f(\mathcal{C},\mathcal{E},t^{*}) + p_{t^{*}+1}f(\mathcal{C},\mathcal{E},t^{*}+1) - p'_{t^{*}}f(\mathcal{C},\mathcal{E}',t^{*}) - p'_{t^{*}+1}f(\mathcal{C},\mathcal{E}',t^{*}+1) \label{eq:th_7_prove_1}\\
    ={}& p_{t^{*}}f(\mathcal{C},\mathcal{E},t^{*}) + p_{t^{*}+1}f(\mathcal{C},\mathcal{E},t^{*}+1) - p_{t^{*}+1}f(\mathcal{C},\mathcal{E}',t^{*}) - p_{t^{*}}f(\mathcal{C},\mathcal{E}',t^{*}+1) \notag \\
    ={}& p_{t^{*}+1}(f(\mathcal{C},\mathcal{E},t^{*}+1) - f(\mathcal{C},\mathcal{E}',t^{*})) - p_{t^{*}}(f(\mathcal{C},\mathcal{E}',t^{*}+1) - f(\mathcal{C},\mathcal{E},t^{*})). \label{eq:th_7_prove_2}
\end{align}

Define the events
\begin{equation}
B \triangleq \{\pi(\underline{e}(t^{*})\oplus \underline{e}(t^{*}+1))\in\mathcal{C}\},
\end{equation}
\begin{equation}
A_1 \triangleq \{\pi(\underline{e}(t')\oplus \underline{e}(t^{*}))\in\mathcal{C},\ \exists\, t'<t^{*}\},\quad
A_2 \triangleq \{\pi(\underline{e}(t')\oplus \underline{e}(t^{*}+1))\in\mathcal{C},\ \exists\, t'<t^{*}\}.
\end{equation}
As illustrated in Fig.~\ref{fig:linear_pairwise_exchange}, $A_1$ and $A_2$ describe the prefix-related codeword-hit events associated with $\underline{e}(t^{*})$ and $\underline{e}(t^{*}+1)$, respectively, whereas $B$ describes the codeword-hit event induced by the adjacent pair. Hence the adjacent-pair contribution is attached to the later EP in the current ordering: before the exchange it appears in the term for $\underline{e}(t^{*}+1)$, while after the exchange it appears in the term for $\underline{e}(t^{*})$.
Then, by construction,
\begin{align}
    f(\mathcal{C},\mathcal{E},t^{*}) &= \Pr(A_1),\ f(\mathcal{C},\mathcal{E},t^{*}+1) = \Pr(A_2 \cup B), \\
    f(\mathcal{C},\mathcal{E}',t^{*}) &= \Pr(A_2),\ f(\mathcal{C},\mathcal{E}',t^{*}+1) = \Pr(A_1 \cup B).
\end{align}

Therefore, we have
\begin{align}
    f(\mathcal{C},\mathcal{E},t^{*}+1)-f(\mathcal{C},\mathcal{E}',t^{*})
    &= \Pr(A_2\cup B)-\Pr(A_2) = \Pr(B\cap A_2^c), \label{eq:th_7_clean_2} \\
    f(\mathcal{C},\mathcal{E}',t^{*}+1)-f(\mathcal{C},\mathcal{E},t^{*})
    &= \Pr(A_1\cup B)-\Pr(A_1) = \Pr(B\cap A_1^c). \label{eq:th_7_clean_3}
\end{align}

\begin{figure}[t]
\centering
\resizebox{0.98\columnwidth}{!}{%
\begin{tikzpicture}[x=0.88cm,y=0.66cm,>=stealth,every node/.style={font=\scriptsize}]
    \node[anchor=east] at (-0.15,4.2) {Before exchange};
    \draw (0,3.85) rectangle (8.95,4.55);
    \draw (4.4,3.85) -- (4.4,4.55);
    \draw (5.75,3.85) -- (5.75,4.55);
    \draw (7.15,3.85) -- (7.15,4.55);
    \node at (2.2,4.2) {Previous EPs};
    \node at (5.08,4.2) {$\underline{e}(t^{*})$};
    \node at (6.48,4.2) {$\underline{e}(t^{*}+1)$};
    \node at (8.05,4.2) {$\underline{e}(t^{*}+2)\cdots$};

    \draw (0.55,3.76) to[out=-28,in=-152] node[pos=0.57,above=-1pt] {$A_1$} (5.08,3.76);
    \draw (0.35,3.58) to[out=-30,in=-150] node[pos=0.6,above=-1pt] {$A_2$} (6.48,3.58);
    \draw[dashed] (4.33,3.78) rectangle (7.22,4.62);
    \node at (5.78,4.85) {$B$};

    \node[anchor=east] at (-0.15,1.45) {After exchange};
    \draw (0,1.1) rectangle (8.95,1.8);
    \draw (4.4,1.1) -- (4.4,1.8);
    \draw (5.75,1.1) -- (5.75,1.8);
    \draw (7.15,1.1) -- (7.15,1.8);
    \node at (2.2,1.45) {Previous EPs};
    \node at (5.08,1.45) {$\underline{e}(t^{*}+1)$};
    \node at (6.48,1.45) {$\underline{e}(t^{*})$};
    \node at (8.05,1.45) {$\underline{e}(t^{*}+2)\cdots$};

    \draw (0.55,1.01) to[out=-28,in=-152] node[pos=0.57,above=-1pt] {$A_2$} (5.08,1.01);
    \draw (0.35,0.83) to[out=-30,in=-150] node[pos=0.6,above=-1pt] {$A_1$} (6.48,0.83);
    \draw[dashed] (4.33,1.03) rectangle (7.22,1.87);
    \node at (5.78,2.1) {$B$};

    \draw[->,thick] (5.78,3.22) -- (5.78,2.58) node[midway,right] {Exchange};
\end{tikzpicture}
}
\caption{Adjacent exchange of two EPs.}
\label{fig:linear_pairwise_exchange}
\end{figure}

We now show that
\begin{equation}\label{eq:th_7_clean_4}
    B\cap A_1^c = B\cap A_2^c,
\end{equation}
so as to prove $\eqref{eq:th_7_clean_2} = \eqref{eq:th_7_clean_3}$. Under the event $B$, we have
\begin{equation}
\pi(\underline{e}(t^{*})\oplus \underline{e}(t^{*}+1))\in\mathcal{C}.
\end{equation}
For any $t'<t^{*}$,
\begin{equation}
\pi(\underline{e}(t')\oplus \underline{e}(t^{*}))
= \pi(\underline{e}(t')\oplus \underline{e}(t^{*}+1)) \oplus \pi(\underline{e}(t^{*}+1)\oplus \underline{e}(t^{*})).
\end{equation}
Since $\mathcal{C}$ is linear, the modulo-two sum of a codeword and a non-codeword is a non-codeword. Hence, under $B$, the condition
\begin{equation}
\pi(\underline{e}(t')\oplus \underline{e}(t^{*}+1))\notin\mathcal{C},\ \forall\, t'<t^{*}
\end{equation}
holds if and only if
\begin{equation}
\pi(\underline{e}(t')\oplus \underline{e}(t^{*}))\notin\mathcal{C},\ \forall\, t'<t^{*}.
\end{equation}
This proves \eqref{eq:th_7_clean_4}, and therefore
\begin{equation}
    \Pr(B\cap A_1^c)=\Pr(B\cap A_2^c).
\end{equation}
Substituting \eqref{eq:th_7_clean_2} and \eqref{eq:th_7_clean_3} into \eqref{eq:th_7_prove_2}, we obtain
\begin{align}
    P_{\mathrm{succ}}'-P_{\mathrm{succ}}
    &= p_{t^{*}+1}\Pr(B\cap A_2^c)-p_{t^{*}}\Pr(B\cap A_1^c) = \Pr(B\cap A_1^c)\left(p_{t^{*}+1}-p_{t^{*}}\right)\overset{(a)}{\geq}0,\label{eq:th_7_succ_diff_nonneg}
\end{align}
where $(a)$ follows from \eqref{eq:th_7_prove_0}. Hence, whenever an adjacent pair satisfies $p_{t^{*}}<p_{t^{*}+1}$, exchanging these two EPs cannot decrease $P_{\mathrm{succ}}$. The same reasoning applies to any index pair $(t,t+1)$ with $p_t<p_{t+1}$: the corresponding adjacent exchange does not decrease $P_{\mathrm{succ}}$.

Therefore, starting from any ordering of the fixed candidate EP set, we may repeatedly apply such adjacent exchanges (whenever some adjacent pair violates $p_t\ge p_{t+1}$) until the associated $\{p_t\}_{t=1}^{T}$ is non-increasing, and the decoding success probability never decreases along this process. In particular, let $\mathcal{E}$ maximize $P_{\mathrm{succ}}$ over all orderings of this candidate set, which exists because only finitely many such orderings exist. Performing the above exchanges from $\mathcal{E}$ preserves the value of $P_{\mathrm{succ}}$ at its global maximum, and after finitely many steps we obtain an ordering whose associated $\{p_t\}_{t=1}^{T}$ satisfies
\begin{equation}
p_1 \ge p_2 \ge \cdots \ge p_T.
\end{equation}
This exhibits an ordering of the fixed candidate EP set that maximizes the decoding success probability and whose associated sequence $\{p_t\}_{t=1}^{T}$ is non-increasing, as stated in the theorem.
\end{proof}

\begin{remark}
The formulation of Theorem~\ref{th:linear_opti} differs from that of Theorem~\ref{th:random_opti}. 
For linear block codes, code-induced dependencies may lead to degenerated cases in which exchanging two adjacent EPs does not affect the decoding success probability. 
Such cases occur when $\Pr(B\cap A_1^c)=0$, implying that several distinct orderings may achieve the same optimal performance.
\end{remark}

\begin{remark}
Theorem~\ref{th:linear_opti} characterizes a structural property of optimal orderings within a given set of $T$ tested EPs, but it does not provide a complete ordering of the entire EP space. 
In particular, it does not guarantee that every EP selected among the first $T$ tests is preferable to every untested EP. 
At the boundary between the $T$-th and $(T+1)$-th EPs, the impact of exchanging the two depends intricately on the code structure and generally cannot be determined by a simple ordering rule.
\end{remark}

Despite the non-uniqueness of optimal orderings and the lack of a precise characterization at the boundary between the $T$-th and $(T+1)$-th EPs, arranging EPs in a non-increasing order of $\{p_t\}$ remains a natural and well-motivated design principle. 
Theorem~\ref{th:linear_opti} shows that such an ordering is always attainable without loss of optimality within any fixed candidate set, thereby providing a systematic construction guideline with a clear fixed-code scope.

Building upon the structural characterization of optimal EP ordering in Theorem~\ref{th:linear_opti}, we next turn to a more explicit characterization of the decoding error probability for linear block codes. In particular, we aim to characterize the remaining code-dependent term in the BLER and reveal how it is governed by the EP ordering and higher-order weight relationships of the code.

\subsection{Pre-Target Codeword-Hit Probability for Linear Block Codes}\label{SubSec:pre_target_hit_linear}

In this subsection, we further characterize the pre-target codeword-hit probability $f(\mathcal{C},\mathcal{E},t)$ for linear block codes. This is the only code-dependent term in the BLER expression of Theorem~\ref{th:linear_error_rate}. Thus the purpose of this subsection is narrow: we explain how the structure of a fixed code enters this term.

Unlike the random code case, where the pre-target codeword-hit probability depends only on the target position, the linear block code case involves dependencies among multiple EP differences. These dependencies arise because several permuted EP differences may simultaneously fall into the fixed code. The main tool below is therefore a combinatorial representation based on the weight relationships among ordered codeword tuples.

The analysis in this subsection is technically self-contained. It mainly supports the first two parts of the simulation section, where the BLER expression and the impact of code structure are validated. Readers primarily interested in the RS-ORBGRAND performance comparison may skip the derivation on a first reading and return to it when examining those validation results.

The main idea is to separate each joint pre-target hit probability into a code-structure factor and a coordinate-permutation factor. The former counts codeword tuples in $\mathcal{C}$ with a prescribed weight relationship, while the latter measures how likely the rank-based permutation is to realize a matching tuple.

Under a memoryless output-symmetric channel, the coordinate LLRs, and hence the reliabilities $|\ell_1|,\ldots,|\ell_N|$, are i.i.d. Therefore, for the rank-based permutation used for ORB-type GRAND above, each reliability ordering occurs with probability $1/N!$ (assuming ties occur with probability zero, or are broken uniformly). Accordingly, $\pi$ may be regarded as a uniform random permutation in the expressions below. For $i<j$, denote the EP difference by $\underline{e}(i,j)\triangleq\underline{e}(i)\oplus\underline{e}(j)$, and define the event $A_{ij}\triangleq \{\pi(\underline{e}(i,j))\in\mathcal{C}\}$. Then the pre-target codeword-hit probability can be written as $f(\mathcal{C},\mathcal{E},t)
= \Pr\!\left(\bigcup_{i=1}^{t-1}A_{it}\right)$. Applying the inclusion–exclusion principle, we obtain
\begin{equation}\label{eq:pre_target_hit_IEP}
f(\mathcal{C},\mathcal{E},t)
= \sum_{j=1}^{t-1}(-1)^{j-1}
\sum_{1\le i_1<\cdots<i_j<t}
\Pr\!\left(\bigcap_{k=1}^{j}A_{i_k t}\right).
\end{equation}
In particular, retaining only the first-order term in \eqref{eq:pre_target_hit_IEP} gives the union bound, equivalently the first Bonferroni inequality~\cite{feller1968probability},
\begin{equation}\label{eq:first_order_f_bound}
f(\mathcal{C},\mathcal{E},t)
\le \sum_{i=1}^{t-1}\Pr(A_{it}).
\end{equation}
Since the BLER expression in Theorem~\ref{th:linear_error_rate} is increasing in $f(\mathcal{C},\mathcal{E},t)$ for each $t$, this directly yields the computable upper bound
\begin{equation}\label{eq:first_order_bler_bound}
P_{\mathrm{err}}
\le
1-\sum_{t=1}^{T}p_t + 
\sum_{t=1}^{T}p_t\sum_{i=1}^{t-1}\Pr(A_{it}).
\end{equation}
This bound only requires the single-event probabilities $\Pr(A_{it})$, given explicitly in Corollary~\ref{coro:single_event_prob}, and will be used as a simple analytical benchmark in the numerical experiments.

The first-order bound already reflects the code structure through the ordinary weight distribution. The remaining joint probabilities in \eqref{eq:pre_target_hit_IEP} capture finer dependencies: whether several events $A_{i_k t}$ occur simultaneously depends on the algebraic relationships among the corresponding EP differences. These relationships can be fully described by the weights of all XOR combinations of the involved vectors. This motivates the weight-relationship notation below.

Specifically, let $\{0,1\}^t$ denote the set of all binary vectors of length $t$, ordered in lexicographic order. The next two definitions provide the bookkeeping needed to count the joint events in \eqref{eq:pre_target_hit_IEP}: $\mathcal{F}(\cdot)$ records the relevant weight relationships, while $Z(\cdot)$ counts how often a given relationship appears in the codebook.

\begin{definition}[Weight relationship of an ordered codeword tuple]
For an ordered codeword tuple $[\underline{u}(1),\ldots,\underline{u}(t)]$, define its weight relationship as the collection
\begin{equation}\label{eq:def_weight_relationship}
\mathcal{F}(\underline{u}(1),\ldots,\underline{u}(t))
\triangleq
\bigl(g(\underline{b})\bigr)_{\underline{b}\in\{0,1\}^t},
\end{equation}
where $g(\underline{0}) = N$, and for $\underline{b}=(b_1,\ldots,b_t)\neq \underline{0}$,
\begin{equation}
g(\underline{b})
= w_{\mathrm{H}}\!\left(
\bigoplus_{i=1}^{t} b_i\,\underline{u}(i)
\right),
\label{eq:def_g}
\end{equation}
where $w_{\mathrm{H}}(\underline{v})$ denotes the Hamming weight of $\underline{v}\in\{0,1\}^N$.
\end{definition}

\begin{definition}[Multiplicity of a weight relationship]
For any target weight-relationship vector $\mathbf{g}=\bigl(g(\underline{b})\bigr)_{\underline{b}\in\{0,1\}^t}$, define
\begin{equation}\label{eq:def_Z}
Z(\mathbf{g})
=
\sum_{[\underline{u}(1),\ldots,\underline{u}(t)]\in \mathcal{C}^t}
\mathbf{1}\!\left(
\mathcal{F}(\underline{u}(1),\ldots,\underline{u}(t))=\mathbf{g}
\right),
\end{equation}
which counts the number of ordered codeword tuples in $\mathcal{C}^t$ having the weight relationship $\mathbf{g}$.
\end{definition}

The conventional weight distribution is recovered as the special case $t=1$. More generally, $\mathcal{F}(\cdot)$ and $Z(\cdot)$ characterize higher-order structural relationships among multiple codewords. For example, when $t=3$, the collection $\bigl(g(\underline{b})\bigr)_{\underline{b}\in\{0,1\}^3,\,\underline{b}\neq \underline{0}}$ consists of the Hamming weights of all nonempty XOR combinations of $\underline{u}(1),\underline{u}(2),\underline{u}(3)$, and $Z\bigl((g(\underline{b}))_{\underline{b}\in\{0,1\}^3}\bigr)$ counts the number of ordered triples having that prescribed relationship.

The following theorem formalizes this decomposition exactly for all joint terms in the inclusion--exclusion expansion.

\begin{theorem}\label{th:linear_pre_target_hit}
For any linear block code $\mathcal{C}$, the pre-target codeword-hit probability admits the representation
\begin{equation}\label{eq:th_line_mif_main}
f(\mathcal{C},\mathcal{E},t)
= \sum_{j=1}^{t-1} (-1)^{j-1}
\sum_{1\le i_1<\cdots<i_j<t}
\frac{1}{N!}\,
Z(\mathbf{g})\,
\mathcal{H}(\mathbf{g}),
\end{equation}
where $\mathbf{g}=\mathcal{F}(\underline{e}(i_1,t),\ldots,\underline{e}(i_j,t))$, and $\mathcal{H}(\mathbf{g})$ is the coordinate-permutation factor characterized in Proposition~\ref{prop:rearrange_number}.
\end{theorem}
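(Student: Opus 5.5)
Starting from the inclusion--exclusion expansion \eqref{eq:pre_target_hit_IEP}, which is already established, it suffices to show that for every fixed index set $1\le i_1<\cdots<i_j<t$,
\begin{equation*}
\Pr\!\left(\bigcap_{k=1}^{j}A_{i_k t}\right)=\frac{1}{N!}\,Z(\mathbf{g})\,\mathcal{H}(\mathbf{g}),\qquad \mathbf{g}=\mathcal{F}(\underline{e}(i_1,t),\ldots,\underline{e}(i_j,t)).
\end{equation*}
The plan rests on the fact noted before \eqref{eq:pre_target_hit_IEP}: under the output-symmetric memoryless channel with the rank-based permutation, $\pi$ is uniform over the $N!$ coordinate permutations. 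The joint probability therefore equals $\frac{1}{N!}$ times the number of permutations $\sigma$ with $\sigma(\underline{e}(i_k,t))\in\mathcal{C}$ for all $k$. The theorem reduces to showing that this count equals $Z(\mathbf{g})\mathcal{H}(\mathbf{g})$.

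To count, I will classify each admissible $\sigma$ by the ordered tuple $[\sigma(\underline{e}(i_1,t)),\ldots,\sigma(\underline{e}(i_j,t))]\in\mathcal{C}^j$, giving a double count over pairs (tuple, $\sigma$). Because a coordinate permutation preserves Hamming weight and commutes with XOR, every $\underline{b}\in\{0,1\}^j$ satisfies $w_{\mathrm H}(\bigoplus_k b_k\sigma(\underline{e}(i_k,t)))=w_{\mathrm H}(\bigoplus_k b_k\underline{e}(i_k,t))$. Hence any image tuple has weight relationship exactly $\mathbf{g}$. Conversely, I must show that for two tuples $[\underline{u}(k)]$ and $[\underline{v}(k)]$ with equal weight relationships, the number of $\sigma$ mapping one onto the other depends only on $\mathbf{g}$; this common number is $\mathcal{H}(\mathbf{g})$ of Proposition~\ref{prop:rearrange_number}.

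The key step is this invariance, and I would argue it through column types. View the tuple as a $j\times N$ binary matrix. A permutation maps $[\underline{u}(k)]$ to $[\underline{v}(k)]$ iff it sends each column pattern $\underline{c}\in\{0,1\}^j$ to a coordinate carrying the same pattern. So a mapping permutation exists iff the column-type counts $n_{\underline{c}}$ agree, and in that case the number of such permutations is $\prod_{\underline{c}}n_{\underline{c}}!$. It remains to show that the $n_{\underline{c}}$ are determined by $\mathbf{g}$. This holds because $g(\underline{b})=\sum_{\underline{c}}n_{\underline{c}}\,(\underline{b}\cdot\underline{c}\bmod 2)$ together with $g(\underline{0})=N=\sum_{\underline{c}} n_{\underline{c}}$ is a Hadamard-type linear system. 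Using $\mathbf{1}(\underline{b}\cdot\underline{c}\text{ odd})=\tfrac12(1-(-1)^{\underline{b}\cdot\underline{c}})$, it becomes an invertible Walsh--Hadamard transform, giving $n_{\underline{c}}=2^{-j}\sum_{\underline{b}}(-1)^{\underline{b}\cdot\underline{c}}(N-2g(\underline{b}))$ (with the $\underline{b}=\underline{0}$ term equal to $N$). Thus $\mathcal{H}(\mathbf{g})=\prod_{\underline{c}}n_{\underline{c}}(\mathbf{g})!$. The main care point is this inversion and making sure it matches the definition of $\mathcal{H}$ in Proposition~\ref{prop:rearrange_number}. A second point is checking that tuples in $\mathcal{C}^j$ with relationship $\mathbf{g}$ are exactly the realizable images, which follows because equal $\mathbf{g}$ forces equal column types.

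Summing over the $Z(\mathbf{g})$ tuples then gives the count $Z(\mathbf{g})\mathcal{H}(\mathbf{g})$. Dividing by $N!$ and substituting into \eqref{eq:pre_target_hit_IEP} yields \eqref{eq:th_line_mif_main}. Repeated or linearly dependent differences need no separate treatment, since they are absorbed into $\mathbf{g}$ (e.g., $g(\underline{b})=0$ for dependent combinations).
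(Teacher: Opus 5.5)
Your proposal is correct and follows the paper's proof essentially step for step: $\pi$ is uniform, the joint event is decomposed over image tuples in $\mathcal{C}^j$ that share the weight relationship $\mathbf{g}$ (Lemma~\ref{lem:linear_pre_target_hit}), and each such tuple is reached by exactly $\prod_{\underline{s}}J(\underline{s})!$ column-type-preserving permutations (Proposition~\ref{prop:rearrange_number}). The only difference is cosmetic: you recover the column counts from $\mathbf{g}$ by an explicit Walsh--Hadamard inversion, whereas the paper proves that $\mathbf{B}_j$ is invertible by relating its rows to the same Hadamard sign matrix (Proposition~\ref{prop:Bj_recursion} and the appendix).
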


Theorem~\ref{th:linear_pre_target_hit} is an exact fixed-code expression when the full inclusion--exclusion sum is retained and the codeword-tuple counts $Z(\mathbf{g})$ are evaluated exactly. In practical low-BLER evaluation, two distinct approximations may be introduced: truncating the inclusion--exclusion sum to a finite order, and replacing unavailable higher-order values of $Z(\mathbf{g})$ by a structural model. These approximations are not part of the theorem itself and will be stated explicitly when used in the numerical section.

Before proving Theorem~\ref{th:linear_pre_target_hit}, we separate the argument into two simple counting steps. First, a joint pre-target hit can occur only through codeword tuples whose weight relationship matches that of the EP-difference tuple. Second, once such a tuple is fixed, the number of coordinate permutations that realize the match depends only on this common weight relationship.

\begin{lemma}\label{lem:linear_pre_target_hit}
For any $1\le i_1<i_2<\cdots<i_j<t$, the joint probability 
$\Pr\!\left(\bigcap_{k=1}^{j}A_{i_k t}\right)$ can be expressed as
\begin{equation}
\sum_{\substack{
[\underline{u}(1),\ldots,\underline{u}(j)] \in \mathcal{C}^{j}:\\
\mathcal{F}(\underline{u}(1),\ldots,\underline{u}(j)) = \mathbf{g}
}}
\Pr\!\left(
\bigcap_{k=1}^{j}\{\pi(\underline{e}(i_k,t))=\underline{u}(k)\}
\right),
\label{eq:linear_pre_target_hit_prop}
\end{equation}
where $\mathbf{g} = \mathcal{F}\big(\underline{e}(i_1,t),\ldots,\underline{e}(i_j,t)\big)$.
\end{lemma}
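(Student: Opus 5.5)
The lemma follows from decomposing the joint event according to the exact codeword values hit, and then observing that coordinate permutations preserve Hamming weights of all XOR combinations. The argument is purely combinatorial. The only probabilistic input is that $\pi$ is a (uniform) random coordinate permutation acting linearly, i.e., $\pi(\underline{a}\oplus\underline{b})=\pi(\underline{a})\oplus\pi(\underline{b})$.

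\emph{Step 1: partition by codeword values.} For each $k$, the event $A_{i_k t}=\{\pi(\underline{e}(i_k,t))\in\mathcal{C}\}$ is the disjoint union over $\underline{u}\in\mathcal{C}$ of $\{\pi(\underline{e}(i_k,t))=\underline{u}\}$. Taking the intersection over $k=1,\ldots,j$ gives
\begin{equation*}
\bigcap_{k=1}^{j}A_{i_k t}=\bigsqcup_{[\underline{u}(1),\ldots,\underline{u}(j)]\in\mathcal{C}^j}\ \bigcap_{k=1}^{j}\{\pi(\underline{e}(i_k,t))=\underline{u}(k)\}.
\end{equation*}
This union is disjoint because $\pi$ is a single realization, so each $\pi(\underline{e}(i_k,t))$ takes exactly one value. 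Countable additivity then expresses $\Pr(\bigcap_k A_{i_k t})$ as the sum over all of $\mathcal{C}^j$.

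\emph{Step 2: restrict to tuples with matching weight relationship.} We show that every tuple with $\mathcal{F}(\underline{u}(1),\ldots,\underline{u}(j))\neq\mathbf{g}$ contributes zero. Suppose $\pi(\underline{e}(i_k,t))=\underline{u}(k)$ for all $k$. For any $\underline{b}\in\{0,1\}^j\setminus\{\underline{0}\}$, linearity of the coordinate permutation gives
\begin{equation*}
\bigoplus_k b_k\underline{u}(k)=\pi\Bigl(\bigoplus_k b_k\underline{e}(i_k,t)\Bigr).
\end{equation*}
Since a coordinate permutation preserves Hamming weight, the two sides have equal weight. The $\underline{b}=\underline{0}$ entries equal $N$ on both sides by convention. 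Hence $\mathcal{F}(\underline{u}(1),\ldots,\underline{u}(j))=\mathcal{F}(\underline{e}(i_1,t),\ldots,\underline{e}(i_j,t))=\mathbf{g}$. Thus the joint event is empty unless the tuple's weight relationship equals $\mathbf{g}$, and dropping these zero terms yields exactly \eqref{eq:linear_pre_target_hit_prop}.

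The only point needing care is the linearity $\pi(\underline{a}\oplus\underline{b})=\pi(\underline{a})\oplus\pi(\underline{b})$. This holds because $\pi_{\underline y}$ acts as right multiplication by the permutation matrix $\mathcal{P}$ over $\mathbb{F}_2$, and the identity holds for every realization $\underline{y}$, so no independence or symmetry assumptions are used. Linearity of $\mathcal{C}$ is also not needed for this lemma. It matters only later, when counting the permutations realizing a given tuple, which Proposition~\ref{prop:rearrange_number} handles.
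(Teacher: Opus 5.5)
Your proposal is correct and follows the paper's proof: you partition the joint event by the ordered codeword tuple $[\underline{u}(1),\ldots,\underline{u}(j)]\in\mathcal{C}^j$ that is hit, and then drop every tuple whose weight relationship differs from $\mathbf{g}$, because a coordinate permutation preserves the Hamming weights of all XOR combinations. Your explicit appeal to the $\mathbb{F}_2$-linearity of $\underline{e}\mapsto\underline{e}\cdot\mathcal{P}$ simply spells out a step the paper states in one line.
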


\begin{proof}
By definition, $\Pr\!\left(\bigcap_{k=1}^{j}A_{i_k t}\right)=\Pr\!\left(\bigcap_{k=1}^{j}\{\pi(\underline{e}(i_k,t))\in\mathcal{C}\}\right).$ Partition the event on the right-hand side according to the ordered codeword tuple $[\underline{u}(1),\ldots,\underline{u}(j)]\in\mathcal{C}^j$ satisfying $\pi(\underline{e}(i_k,t))=\underline{u}(k)$ for all $k$. Then
\begin{equation}
\Pr\!\left(\bigcap_{k=1}^{j}A_{i_k t}\right)
=
\sum_{[\underline{u}(1),\ldots,\underline{u}(j)]\in\mathcal{C}^j}
\Pr\!\left(\bigcap_{k=1}^{j}\{\pi(\underline{e}(i_k,t))=\underline{u}(k)\}\right).
\label{eq:linear_pre_target_hit_p2_raw}
\end{equation}
Now consider any term in \eqref{eq:linear_pre_target_hit_p2_raw} with nonzero probability. Then there exists a coordinate permutation $\sigma$ such that $\sigma(\underline{e}(i_k,t))=\underline{u}(k)$ for all $k$. Since coordinate permutations preserve the Hamming weights of all XOR combinations, we must have
\begin{equation}
\mathcal{F}(\underline{u}(1),\ldots,\underline{u}(j))
=
\mathcal{F}(\underline{e}(i_1,t),\ldots,\underline{e}(i_j,t))
=
\mathbf{g}.
\end{equation}
Therefore, every term with $\mathcal{F}(\underline{u}(1),\ldots,\underline{u}(j))\neq \mathbf{g}$ is zero, and \eqref{eq:linear_pre_target_hit_prop} follows.
\end{proof}

Lemma~\ref{lem:linear_pre_target_hit} reduces the joint probability to a permutation-counting problem. The following proposition solves this counting problem by grouping coordinates according to the binary column that they form across an ordered tuple.

\begin{proposition}\label{prop:rearrange_number}
For any two ordered binary vector tuples 
$[\underline{x}(1),\ldots,\underline{x}(j)]$ and 
$[\underline{u}(1),\ldots,\underline{u}(j)]$, suppose that
\begin{equation}
\mathcal{F}(\underline{x}(1),\ldots,\underline{x}(j))
=
\mathcal{F}(\underline{u}(1),\ldots,\underline{u}(j))
=
\mathbf{g}.
\end{equation}
For each coordinate $k$, define the state of $[\underline{x}(1),\ldots,\underline{x}(j)]$ by
\begin{equation}
\underline{S}_x(k)=[x_k(1),\ldots,x_k(j)]\in\{0,1\}^j,
\end{equation}
and define $\underline{S}_u(k)$ similarly. Let
\begin{equation}
J(\underline{s})=\bigl|\{k:\underline{S}_x(k)=\underline{s}\}\bigr|,
\qquad \underline{s}\in\{0,1\}^j.
\end{equation}
Then the number of coordinate permutations $\sigma$ satisfying
$\sigma(\underline{x}(i))=\underline{u}(i)$ for all $i=1,\ldots,j$ is
\begin{equation}\label{eq:H_by_J}
\mathcal{H}(\mathbf{g})
=
\prod_{\underline{s}\in\{0,1\}^j} J(\underline{s})!.
\end{equation}
\end{proposition}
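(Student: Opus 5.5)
The plan is to view each ordered $j$-tuple of length-$N$ binary vectors as a $j\times N$ binary matrix whose $k$-th column is the state $\underline{S}_x(k)$ (resp. $\underline{S}_u(k)$). A coordinate permutation $\sigma$ acts by permuting columns. The condition $\sigma(\underline{x}(i))=\underline{u}(i)$ for all $i$ is then the single requirement that $\sigma$ send column $k$ of the $x$-matrix to a position where the $u$-matrix has the same column, i.e., $\underline{S}_u(\sigma(k))=\underline{S}_x(k)$ for every $k$. The count is therefore the number of column-label-preserving bijections between two multisets of columns.

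The first and main step is to show that equality of weight relationships forces equality of column-type counts: if $\mathcal{F}(\underline{x}(1),\ldots,\underline{x}(j))=\mathcal{F}(\underline{u}(1),\ldots,\underline{u}(j))$, then $J_x(\underline{s})=J_u(\underline{s})$ for all $\underline{s}\in\{0,1\}^j$. I would prove this by a Fourier/Hadamard inversion. For $\underline{b}\neq\underline{0}$, the coordinate $k$ contributes to $g(\underline{b})=w_{\mathrm H}(\bigoplus_i b_i\underline{x}(i))$ exactly when $\langle\underline{b},\underline{S}_x(k)\rangle=1$ over $\mathbb{F}_2$. Hence
\begin{equation}
N-2g(\underline{b})=\sum_{\underline{s}\in\{0,1\}^j}J_x(\underline{s})(-1)^{\langle\underline{b},\underline{s}\rangle},
\end{equation}
and the same identity with $\underline{b}=\underline{0}$ reads $N=\sum_{\underline{s}}J_x(\underline{s})$, which is where the convention $g(\underline{0})=N$ enters. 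The matrix $\bigl((-1)^{\langle\underline{b},\underline{s}\rangle}\bigr)$ is a $2^j\times 2^j$ Hadamard matrix and hence invertible. The vector $(J_x(\underline{s}))_{\underline{s}}$ is thus uniquely determined by $\mathbf{g}$, giving $J_x(\underline{s})=2^{-j}\sum_{\underline{b}}(N-2g(\underline{b}))(-1)^{\langle\underline{b},\underline{s}\rangle}$, and the same formula holds for $J_u$. This also justifies writing the count as a function $\mathcal{H}(\mathbf{g})$ of $\mathbf{g}$ alone.

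The counting step follows. A valid $\sigma$ must map the set $\{k:\underline{S}_x(k)=\underline{s}\}$ bijectively onto $\{k:\underline{S}_u(k)=\underline{s}\}$ for each $\underline{s}$. These sets have equal size $J(\underline{s})$ by the first step, and together they partition $\{1,\ldots,N\}$ on each side. Conversely, any choice of bijections on each class assembles into a permutation satisfying all $j$ equalities. The choices are independent across classes, so the count is $\prod_{\underline{s}}J(\underline{s})!$, which is \eqref{eq:H_by_J}.

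The main obstacle is the first step, specifically the invertibility argument and the role of the $\underline{b}=\underline{0}$ entry. Once that is in place, the rest is routine bookkeeping. I would also fix the convention for $\sigma(\underline{x})$, meaning whether it moves coordinate $k$ to $\sigma(k)$ or the reverse. Either convention gives the same count, since the class-wise bijection structure is symmetric.
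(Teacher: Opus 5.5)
Your proposal is correct and follows essentially the same route as the paper: you count class-wise bijections between equal-size state classes, and you get equality of the multiplicities $J(\underline{s})$ by inverting the linear map from $\mathbf{J}$ to $\mathbf{g}$ --- the paper writes this map as $\mathbf{g}=\mathbf{B}_j\mathbf{J}$ with a $0/1$ matrix and proves invertibility by reducing to the same Hadamard matrix you use directly. One small fix: under the paper's convention $g(\underline{0})=N$, your expression $N-2g(\underline{b})$ equals $-N$ at $\underline{b}=\underline{0}$, so that entry must be taken as $N$ (the equation $\sum_{\underline{s}}J(\underline{s})=N$); this corrects the $\underline{b}=\underline{0}$ term of your explicit inversion formula and does not affect the uniqueness conclusion.
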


\begin{proof}
For a state $\underline{s}\in\{0,1\}^j$, define
\begin{equation}
\mathcal{I}_x(\underline{s})=\{k:\underline{S}_x(k)=\underline{s}\},
\qquad
\mathcal{I}_u(\underline{s})=\{k:\underline{S}_u(k)=\underline{s}\}.
\end{equation}
A permutation $\sigma$ satisfies $\sigma(\underline{x}(i))=\underline{u}(i)$ for all $i$ if and only if, for every state $\underline{s}$, it maps the coordinates in $\mathcal{I}_x(\underline{s})$ bijectively onto the coordinates in $\mathcal{I}_u(\underline{s})$.

This correspondence can be visualized by arranging the vectors into a matrix form, where each column represents a coordinate and its associated state:
\begin{align}
    &\left[\begin{array}[pos]{cccc}
        x_1(1) & x_2(1) & \cdots & x_N(1)\\
        x_1(2) & x_2(2) & \cdots & x_N(2)\\
        \vdots & \vdots & \ddots & \vdots\\
        x_1(j) & x_2(j) & \cdots & x_N(j)
    \end{array}\right]
    \overset{\sigma}{\longrightarrow}
    \left[\begin{array}[pos]{cccc}
        \sigma(\underline{x}(1))_1 & \sigma(\underline{x}(1))_2 & \cdots & \sigma(\underline{x}(1))_N\\
        \sigma(\underline{x}(2))_1 & \sigma(\underline{x}(2))_2 & \cdots & \sigma(\underline{x}(2))_N\\
        \vdots & \vdots & \ddots & \vdots\\
        \sigma(\underline{x}(j))_1 & \sigma(\underline{x}(j))_2 & \cdots & \sigma(\underline{x}(j))_N
    \end{array}\right],
    \label{eq:matrix}
\end{align}
where each column is a state vector and must be mapped to a column with the same state in the target tuple.

Because the two tuples have the same weight relationship $\mathbf{g}$, their state multiplicities agree for all $\underline{s}$, as made explicit in Proposition~\ref{prop:Bj_recursion}. Hence, for each state $\underline{s}$ there are $J(\underline{s})!$ possible bijections between $\mathcal{I}_x(\underline{s})$ and $\mathcal{I}_u(\underline{s})$. Multiplying over all states gives \eqref{eq:H_by_J}.
\end{proof}

It remains only to justify that the state multiplicities in Proposition~\ref{prop:rearrange_number} are indeed determined by the weight relationship $\mathbf{g}$. The following proposition gives this relation explicitly; its recursion and invertibility details are deferred to the appendix.

\begin{proposition}\label{prop:Bj_recursion}
Let $\mathbf{g}^{(j)}=\bigl(g(\underline{b})\bigr)_{\underline{b}\in\{0,1\}^j}$ and $\mathbf{J}^{(j)}=\bigl(J(\underline{s})\bigr)_{\underline{s}\in\{0,1\}^j}$, where both vectors are ordered according to the lexicographic order on $\{0,1\}^j$. Then
\begin{equation}\label{eq:relation_g_and_j_matrix}
\mathbf{g}^{(j)}=\mathbf{B}_j\,\mathbf{J}^{(j)},
\end{equation}
where the matrix-vector product is taken over the integers, and $\mathbf{B}_j$ is the $2^j\times 2^j$ matrix indexed by $\{0,1\}^j\times\{0,1\}^j$ with entries
\begin{equation}\label{eq:Bj_entry_new}
\bigl[\mathbf{B}_j\bigr]_{\underline{b},\underline{s}}
=
\begin{cases}
1, & \underline{b}=\underline{0},\\
\underline{b}\cdot \underline{s}, & \underline{b}\neq \underline{0},
\end{cases}
\qquad
\underline{b},\underline{s}\in\{0,1\}^j,
\end{equation}
and $\underline{b}\cdot \underline{s}=\bigoplus_{i=1}^j b_i s_i$.
Moreover, $\mathbf{B}_j$ admits the recursion
\begin{equation}\label{eq:Bj_recursion_new}
\mathbf{B}_{j+1}
=
\begin{bmatrix}
\mathbf{B}_j & \mathbf{B}_j\\
\mathbf{E}^{2^j\times 2^j}_{1,1:2^j}\oplus \mathbf{B}_j &
\mathbf{E}^{2^j\times 2^j}_{1,1:2^j}\oplus \mathbf{1}^{2^j\times 2^j}\oplus \mathbf{B}_j
\end{bmatrix},
\qquad
\mathbf{B}_1=
\begin{bmatrix}
1 & 1\\
0 & 1
\end{bmatrix}.
\end{equation}
Furthermore, $\mathbf{B}_j$ is invertible over $\mathbb{R}$, and hence $\mathbf{J}^{(j)}$ can be uniquely recovered from $\mathbf{g}^{(j)}$ via $\mathbf{J}^{(j)}=\mathbf{B}_j^{-1}\mathbf{g}^{(j)}$.
\end{proposition}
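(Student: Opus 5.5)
The plan has three parts, in this order: verify the linear relation entrywise, derive the block recursion by splitting on the last bit, and then prove invertibility of $\mathbf{B}_j$ by exhibiting its inverse.

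\textbf{Step 1: the relation $\mathbf{g}^{(j)}=\mathbf{B}_j\mathbf{J}^{(j)}$.} We check it row by row, partitioning the $N$ coordinates by their state $\underline{S}_x(k)$.
\begin{itemize}
\item For $\underline{b}=\underline{0}$: $\sum_{\underline{s}}J(\underline{s})=N=g(\underline{0})$, since every coordinate has exactly one state. This matches the all-ones first row.
\item For $\underline{b}\neq\underline{0}$: coordinate $k$ of $\bigoplus_i b_i\underline{x}(i)$ equals $\bigoplus_i b_i x_k(i)=\underline{b}\cdot\underline{S}_x(k)$. The Hamming weight therefore counts coordinates with $\underline{b}\cdot\underline{S}_x(k)=1$, which is $\sum_{\underline{s}}(\underline{b}\cdot\underline{s})J(\underline{s})$.
\end{itemize}
The same computation applies to $\underline{u}$. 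Once invertibility is shown, equal $\mathbf{g}$ forces equal state multiplicities, which is the fact used in Proposition~\ref{prop:rearrange_number}.

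\textbf{Step 2: the block recursion.} Write $\underline{b}=(\underline{b}',b_{j+1})$ and $\underline{s}=(\underline{s}',s_{j+1})$, with the last bit being the most significant in the lexicographic block split. Then $\underline{b}\cdot\underline{s}=\underline{b}'\cdot\underline{s}'\oplus b_{j+1}s_{j+1}$, and we check the four blocks.
\begin{itemize}
\item $b_{j+1}=0$: the entry depends only on $(\underline{b}',\underline{s}')$, giving $\mathbf{B}_j$ in both top blocks, including the all-ones row for $\underline{b}'=\underline{0}$.
\item $b_{j+1}=1$, $s_{j+1}=0$: the entry is $\underline{b}'\cdot\underline{s}'$. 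This agrees with $\mathbf{B}_j$ except in row $\underline{b}'=\underline{0}$, where it must be $0$ rather than $1$. XORing with $\mathbf{E}_{1,1:2^j}$ (ones in the first row only) makes exactly this correction.
\item $b_{j+1}=1$, $s_{j+1}=1$: the entry is $1\oplus\underline{b}'\cdot\underline{s}'$, i.e.\ $\mathbf{1}\oplus\mathbf{B}_j$ with the same first-row correction. In the first row this gives $1\oplus1\oplus1=1$, as required.
\end{itemize}
The base case $\mathbf{B}_1$ is checked directly.

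\textbf{Step 3: invertibility over $\mathbb{R}$.} This is the main obstacle. Over $\mathbb{F}_2$ the matrix is far from invertible, so we work with real $\pm1$ values. For $\underline{b}\neq\underline{0}$, use $\underline{b}\cdot\underline{s}=\tfrac12\bigl(1-(-1)^{\underline{b}\cdot\underline{s}}\bigr)$. So row $\underline{b}$ of $\mathbf{B}_j$ is $\tfrac12(\mathbf{1}^T-\mathbf{h}_{\underline{b}}^T)$, where $\mathbf{h}_{\underline{b}}$ is row $\underline{b}$ of the Sylvester--Hadamard matrix $\mathbf{H}_j$. Row $\underline{0}$ of $\mathbf{H}_j$ is $\mathbf{1}^T$, which is row $\underline{0}$ of $\mathbf{B}_j$. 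Hence $\mathbf{B}_j=\mathbf{D}\mathbf{H}_j$ for the lower-triangular matrix $\mathbf{D}$ with
\begin{itemize}
\item $\mathbf{D}_{\underline{0},\underline{0}}=1$;
\item $\mathbf{D}_{\underline{b},\underline{0}}=\tfrac12$ and $\mathbf{D}_{\underline{b},\underline{b}}=-\tfrac12$ for $\underline{b}\neq\underline{0}$.
\end{itemize}
Both $\mathbf{D}$ (nonzero diagonal) and $\mathbf{H}_j$ (since $\mathbf{H}_j^2=2^j\mathbf{I}$) are invertible. This yields an explicit inverse $\mathbf{B}_j^{-1}=2^{-j}\mathbf{H}_j\mathbf{D}^{-1}$, and in particular $\mathbf{J}^{(j)}=\mathbf{B}_j^{-1}\mathbf{g}^{(j)}$.

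If the Hadamard factorization proves awkward to present, the fallback is an induction on the recursion in Step 2 via block row and column operations: subtracting the top block row from the bottom reduces invertibility of $\mathbf{B}_{j+1}$ to that of $\mathbf{B}_j$ and a rank-one-corrected copy of it.
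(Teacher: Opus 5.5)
Your proposal is correct and follows the paper's proof essentially step for step. It uses the same state-counting argument for $\mathbf{g}^{(j)}=\mathbf{B}_j\mathbf{J}^{(j)}$ and the same four-block case analysis for the recursion. You split on the last bit while the paper splits on the first; this is immaterial, because $\underline{b}\cdot\underline{s}$ and the condition $\underline{b}=\underline{0}$ are unchanged when both bit strings are reversed. For invertibility you use the same row identities $\mathbf{B}_j(\underline{b},\cdot)=\tfrac12(\mathbf{H}_j(\underline{0},\cdot)-\mathbf{H}_j(\underline{b},\cdot))$ with the sign matrix $\mathbf{H}_j$. The only difference is packaging: you write these identities as $\mathbf{B}_j=\mathbf{D}\mathbf{H}_j$ with $\mathbf{D}$ lower triangular, which also gives the explicit inverse $2^{-j}\mathbf{H}_j\mathbf{D}^{-1}$ (consistent with the paper's $\mathbf{B}_3^{-1}$), whereas the paper argues linear independence of the rows directly.
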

\begin{proof}
For each coordinate $k$, define its state as $\underline{S}(k)=[u_k(1),u_k(2),\ldots,u_k(j)] \in \{0,1\}^j$. The value $J(\underline{s})$ counts how many coordinates have state $\underline{s}$.

The key observation is that every XOR combination is determined state by state. For $\underline{b}\neq\underline{0}$, by \eqref{eq:def_g},
\begin{align}
g(\underline{b})
&=
w_{\mathrm{H}}\!\left(\bigoplus_{i=1}^{j} b_i\,\underline{u}(i)\right)
=
\sum_{k=1}^{N}\left(\bigoplus_{i=1}^{j}b_i u_k(i)\right) \notag\\
&=
\sum_{k=1}^{N}\bigl(\underline{b}\cdot\underline{S}(k)\bigr)
=
\sum_{\underline{s}\in\{0,1\}^j}J(\underline{s})(\underline{b}\cdot\underline{s}).
\label{eq:Bj_pf_state_sum}
\end{align}
For $\underline{b}=\underline{0}$, we have $g(\underline{0})=N=\sum_{\underline{s}}J(\underline{s})$. These two cases are exactly captured by the entries of $\mathbf{B}_j$ in \eqref{eq:Bj_entry_new}. Arranging the resulting equations for all $\underline{b}\in\{0,1\}^j$ in lexicographic order gives $\mathbf{g}^{(j)}=\mathbf{B}_j\mathbf{J}^{(j)}$. The initial matrix $\mathbf{B}_1=
\begin{bmatrix}
1 & 1\\
0 & 1
\end{bmatrix}$ follows directly from the definition.

The recursion in \eqref{eq:Bj_recursion_new} and the invertibility of $\mathbf{B}_j$ are established in Appendix~\ref{app:Bj_recursion}.
\end{proof}

Now we are ready to complete the proof of Theorem~\ref{th:linear_pre_target_hit}.

It remains to combine the above ingredients with the inclusion--exclusion expansion \eqref{eq:pre_target_hit_IEP}. For any fixed indices $1\le i_1<\cdots<i_j<t$, Lemma~\ref{lem:linear_pre_target_hit} shows that $\Pr\!\left(\bigcap_{k=1}^{j}A_{i_k t}\right)$ can be evaluated by summing over all ordered codeword tuples in $\mathcal{C}^j$ having the same weight relationship as $[\underline{e}(i_1,t),\ldots,\underline{e}(i_j,t)]$.

By Proposition~\ref{prop:rearrange_number}, for every ordered codeword tuple appearing in this summation, the number of admissible coordinate permutations depends only on the common weight relationship $\mathbf{g}=\mathcal{F}(\underline{e}(i_1,t),\ldots,\underline{e}(i_j,t))$ and is given by $\mathcal{H}(\mathbf{g})$ in \eqref{eq:H_by_J}. Hence,
\begin{equation}
\Pr\!\left(
\bigcap_{k=1}^{j}\{\pi(\underline{e}(i_k,t))=\underline{u}(k)\}
\right)
=
\frac{1}{N!}\mathcal{H}(\mathbf{g}).
\end{equation}

Hence, every ordered codeword tuple with weight relationship $\mathbf{g}$ contributes the same probability, and summing over all such tuples yields
\begin{equation}
\Pr\!\left(\bigcap_{k=1}^{j}A_{i_k t}\right)
=
\frac{1}{N!}Z(\mathbf{g})\mathcal{H}(\mathbf{g}).
\end{equation}

Substituting this expression into \eqref{eq:pre_target_hit_IEP} gives \eqref{eq:th_line_mif_main}, which completes the proof of Theorem~\ref{th:linear_pre_target_hit}.

Theorem~\ref{th:linear_pre_target_hit} separates the computation of $f(\mathcal{C},\mathcal{E},t)$ into two parts. The factor $\mathcal{H}(\mathbf{g})$ is purely combinatorial and captures the symmetry induced by coordinate permutations. The term $Z(\mathbf{g})$ is code-dependent and describes how often a given higher-order weight relationship appears in the codebook.

This separation is useful at two levels. The first-order term leads to a simple upper bound that depends only on the ordinary weight distribution. Higher-order terms refine this bound by incorporating joint codeword relationships. We spell out both points below.

\begin{corollary}[Single-event probability]\label{coro:single_event_prob}
For any linear block code $\mathcal{C}$, let
\begin{equation}
\mathbf{g}_{it}^{(1)}
\triangleq
\bigl(N,w_{\mathrm{H}}(\underline{e}(i,t))\bigr).
\end{equation}
Then the probability of the event $A_{it}=\{\pi(\underline{e}(i,t))\in\mathcal{C}\}$ is given by
\begin{equation}\label{eq:single_event_prob}
\Pr(A_{it}) = \frac{Z(\mathbf{g}_{it}^{(1)})}{C_{N}^{w_{\mathrm{H}}(\underline{e}(i,t))}}.
\end{equation}
Consequently, the first-order BLER upper bound in \eqref{eq:first_order_bler_bound} can be written as
\begin{equation}\label{eq:first_order_bler_bound_weight}
P_{\mathrm{err}}
\le
1-\sum_{t=1}^{T}p_t + 
\sum_{t=1}^{T}p_t\sum_{i=1}^{t-1}
\frac{Z(\mathbf{g}_{it}^{(1)})}{C_{N}^{w_{\mathrm{H}}(\underline{e}(i,t))}}.
\end{equation}
\end{corollary}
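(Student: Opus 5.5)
The plan is to specialize Theorem~\ref{th:linear_pre_target_hit} to a single index ($j=1$) and evaluate $\Pr(A_{it})$ directly. We regard $\pi$ as a uniform random permutation on $\{1,\ldots,N\}$, as justified just before \eqref{eq:pre_target_hit_IEP}. Set $w=w_{\mathrm{H}}(\underline{e}(i,t))$. For $j=1$, Lemma~\ref{lem:linear_pre_target_hit} says $\Pr(A_{it})$ is the sum, over codewords $\underline{u}\in\mathcal{C}$ with $\mathcal{F}(\underline{u})=(N,w)$, of $\Pr(\pi(\underline{e}(i,t))=\underline{u})$. These are exactly the codewords of Hamming weight $w$, and there are $Z(\mathbf{g}^{(1)}_{it})$ of them by \eqref{eq:def_Z}.

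Next, Proposition~\ref{prop:rearrange_number} gives the number of permutations mapping $\underline{e}(i,t)$ to a fixed $\underline{u}$ of the same weight. With $j=1$ there are only two states: the state $1$ has multiplicity $J(1)=w$ and the state $0$ has $J(0)=N-w$. Proposition~\ref{prop:Bj_recursion} confirms this, since $\mathbf{B}_1^{-1}(N,w)^T=(N-w,w)^T$. Hence $\mathcal{H}(\mathbf{g}^{(1)}_{it})=w!(N-w)!$. Each matching codeword is therefore hit with probability $w!(N-w)!/N!=1/C_N^{w}$, and summing over the $Z(\mathbf{g}^{(1)}_{it})$ matching codewords gives \eqref{eq:single_event_prob}.

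This argument implicitly assumes $\underline{e}(i,t)\neq\underline{0}$, and the assumption is harmless. The EPs in $\mathcal{E}$ are distinct, so $w\ge 1$. Even for $w=0$ the formula would still hold, since $Z=1$ and $C_N^0=1$.

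Finally, the bound \eqref{eq:first_order_bler_bound_weight} follows by substituting \eqref{eq:single_event_prob} into \eqref{eq:first_order_bler_bound}. That bound was itself obtained from the union bound \eqref{eq:first_order_f_bound} together with the fact that $P_{\mathrm{err}}=1-\sum_t p_t(1-f(\mathcal{C},\mathcal{E},t))$ is increasing in each $f$, since $p_t\ge 0$.

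The only point needing care is the uniformity of $\pi$. It rests on the i.i.d. reliabilities under a memoryless output-symmetric channel, with ties having probability zero, which gives $\Pr(\pi=\sigma)=1/N!$ as stated before \eqref{eq:pre_target_hit_IEP}. It also requires that conditioning on the target event has already been removed, which Theorem~\ref{th:linear_error_rate} provides. Beyond that, the proof is a direct counting step.
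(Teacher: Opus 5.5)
Your proposal is correct and follows essentially the same route as the paper: specialize Theorem~\ref{th:linear_pre_target_hit} to $j=1$, recover $J(0)=N-w$ and $J(1)=w$ via $\mathbf{B}_1^{-1}$ to get $\mathcal{H}(\mathbf{g}_{it}^{(1)})=(N-w)!\,w!$ and hence $\Pr(A_{it})=Z(\mathbf{g}_{it}^{(1)})/C_N^{w}$, then substitute into \eqref{eq:first_order_bler_bound}. Your added remarks on the $w=0$ case and on the uniformity of $\pi$ are harmless clarifications that the paper leaves implicit.
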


\begin{proof}
Applying Theorem~\ref{th:linear_pre_target_hit} with $j=1$, we have
\begin{align}
\Pr(A_{it}) = \frac{1}{N!}Z(\mathbf{g}_{it}^{(1)})\mathcal{H}(\mathbf{g}_{it}^{(1)}).
\end{align}
To compute $\mathcal{H}(\mathbf{g}_{it}^{(1)})$, note that for $j=1$,
\begin{equation}
\left[\begin{array}{c}
J(0) \\ J(1)
\end{array}\right]
=
\left[\begin{array}{cc}
1 & 1 \\
0 & 1
\end{array}\right]^{-1}
\left[\begin{array}{c}
N \\ w_{\mathrm{H}}(\underline{e}(i,t))
\end{array}\right]
=
\left[\begin{array}{c}
N-w_{\mathrm{H}}(\underline{e}(i,t)) \\ w_{\mathrm{H}}(\underline{e}(i,t))
\end{array}\right].
\end{equation}
Hence $\mathcal{H}(\mathbf{g}_{it}^{(1)})=\bigl(N-w_{\mathrm{H}}(\underline{e}(i,t))\bigr)!w_{\mathrm{H}}(\underline{e}(i,t))!$, which gives
\begin{equation}
\Pr(A_{it}) = \frac{Z(\mathbf{g}_{it}^{(1)})}{C_{N}^{w_{\mathrm{H}}(\underline{e}(i,t))}}.
\end{equation}
Substituting this expression into \eqref{eq:first_order_bler_bound} gives \eqref{eq:first_order_bler_bound_weight}.
\end{proof}

\begin{example}\label{ex:hamming_first_order}
Consider the Hamming$(7,4)$ code. For the first-order relationship associated with an EP difference,
$\mathbf{g}^{(1)}=(7,w_{\mathrm{H}}(\underline{e}(i,t)))$. The nonzero multiplicities are
\begin{equation}
Z((7,0))=1,\qquad Z((7,3))=7,\qquad Z((7,4))=7,\qquad Z((7,7))=1,
\end{equation}
with zero multiplicity for all other Hamming weights. For this first-order relationship,
\begin{equation}
\mathcal{H}(\mathbf{g}^{(1)})=
\bigl(7-w_{\mathrm{H}}(\underline{e}(i,t))\bigr)!w_{\mathrm{H}}(\underline{e}(i,t))!,
\qquad
\begin{array}{c|cccc}
w_{\mathrm{H}}(\underline{e}(i,t)) & 0 & 3 & 4 & 7\\ \hline
Z(\mathbf{g}^{(1)}) & 1 & 7 & 7 & 1\\
\mathcal{H}(\mathbf{g}^{(1)}) & 7! & 4!3! & 3!4! & 7!\\
\Pr(A_{it}) & 1 & 1/5 & 1/5 & 1
\end{array}
\end{equation}
For all other Hamming weights, $\Pr(A_{it})=0$. For example, the EPs $1100000$ and $0010000$ differ by $1110000$, whose weight is $3$. If one of them is the target EP, testing the other one first produces a competing codeword with probability $Z((7,3))/C_7^3=7/C_7^3=1/5$. The zero-weight case is listed only for completeness; it does not arise for distinct EPs.

\end{example}

\begin{example}\label{ex:1}
To illustrate the evaluation of a higher-order term, consider the triple-intersection probability $\Pr(A_{1t}\cap A_{2t}\cap A_{3t})$. This example shows how the EP-difference tuple determines $\mathbf{g}$ and hence the combinatorial factor $\mathcal{H}(\mathbf{g})$; the remaining factor $Z(\mathbf{g})$ is then supplied by the codebook. The three EP differences $\underline{e}(1,t)$, $\underline{e}(2,t)$, and $\underline{e}(3,t)$ are shown in Fig.~\ref{fig:visio3}. In the figure, gray denotes $1$ and white denotes $0$; for example, $\underline{e}(1,t)=[110111000110000]$.
\begin{figure}[ht]
    \centering
    \includegraphics[width = 0.72\textwidth]{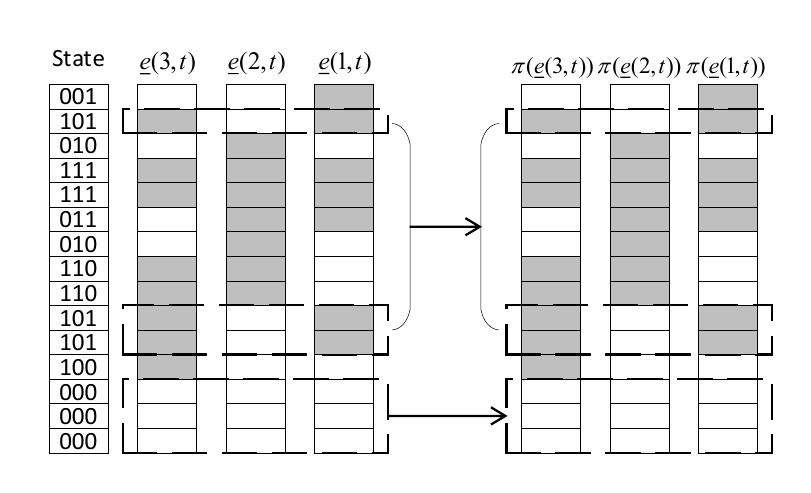}
    \caption{Schematic diagram of the three EP differences in Example~\ref{ex:1}.}
    \label{fig:visio3}
\end{figure}
\end{example}

\noindent\textit{Evaluation.}
From the figure, we obtain $\mathbf{g}=(15,7,7,8,8,5,7,6)$. Applying Theorem~\ref{th:linear_pre_target_hit}, the corresponding joint-event probability is
\begin{equation}
\frac{1}{15!}\,Z(\mathbf{g})\,\mathcal{H}(\mathbf{g}).
\end{equation}

To compute $\mathcal{H}(\mathbf{g})$, we solve $\mathbf{J}^{(3)}=\mathbf{B}_3^{-1}\mathbf{g}^{(3)}$. Using
\begin{equation}
\mathbf{B}_3=
\left[\begin{array}{llll:llll}
1 & 1 & 1 & 1 & 1 & 1 & 1 & 1 \\
0 & 1 & 0 & 1 & 0 & 1 & 0 & 1 \\
0 & 0 & 1 & 1 & 0 & 0 & 1 & 1 \\
0 & 1 & 1 & 0 & 0 & 1 & 1 & 0 \\
\hdashline
0 & 0 & 0 & 0 & 1 & 1 & 1 & 1 \\
0 & 1 & 0 & 1 & 1 & 0 & 1 & 0 \\
0 & 0 & 1 & 1 & 1 & 1 & 0 & 0 \\
0 & 1 & 1 & 0 & 1 & 0 & 0 & 1
\end{array}\right],\qquad
\mathbf{B}_3^{-1}
=
\frac{1}{4}
\left[\begin{array}{rrrrrrrr}
4 & -1 & -1 & -1 & -1 & -1 & -1 & -1 \\
0 & 1 & -1 & 1 & -1 & 1 & -1 & 1 \\
0 & -1 & 1 & 1 & -1 & -1 & 1 & 1 \\
0 & 1 & 1 & -1 & -1 & 1 & 1 & -1 \\
0 & -1 & -1 & -1 & 1 & 1 & 1 & 1 \\
0 & 1 & -1 & 1 & 1 & -1 & 1 & -1 \\
0 & -1 & 1 & 1 & 1 & 1 & -1 & -1 \\
0 & 1 & 1 & -1 & 1 & -1 & -1 & 1
\end{array}\right].
\end{equation}
Hence,
\begin{equation}
[J(000),J(100),\ldots,J(111)]^T=\mathbf{B}_3^{-1}\mathbf{g}^T=[3,1,2,1,1,3,2,2]^T,
\end{equation}
and therefore
\begin{equation}
\mathcal{H}(\mathbf{g})=\prod_{l=0}^{7}J(l)!=(3!)(1!)(2!)(1!)(1!)(3!)(2!)(2!)=288.
\end{equation}

Thus,
\begin{equation}
\Pr(A_{1t}\cap A_{2t}\cap A_{3t})
=
\frac{288}{15!}\,Z(\mathbf{g}),
\end{equation}
where $Z(\mathbf{g})$ is determined by the codebook. This example illustrates that the joint-event probability decomposes into a combinatorial factor $\mathcal{H}(\cdot)$ and a structural term $Z(\cdot)$.

\section{Numerical Experiments}\label{Sec:simulation}

In this section, we provide numerical results to validate the proposed analytical framework and evaluate the performance of RS-ORBGRAND. We first examine how many Monte Carlo trials are used and how the resulting estimates behave statistically, then verify the derived BLER expression on representative codes, and finally compare RS-ORBGRAND with existing decoding methods in terms of accuracy and complexity.

\subsection{Accuracy of Monte Carlo Estimation}
\label{SubSec:Err_analysis}

In this subsection, we quantify the statistical accuracy of the simulation results used later in this section. Both the conventional BLER from decoding simulation and the BLER computed from the analytical expression in Section~\ref{Sec:linear_code} rely on Monte Carlo estimation, so we derive CLT-based sufficient conditions on the numbers of independent Monte Carlo trials. We let $\mathcal{L}_{\text{dec}}$ denote the number of independent decoding simulations used to estimate the BLER, and let $\mathcal{L}_{\text{agp}}$ denote, in each run of the analytical method, the number of channel uses used to obtain Monte Carlo estimates of the AGPs~$\{p_t\}$.

We first consider the BLER under conventional decoding. Based on the Markov chain $\underline{W}\rightarrow \underline{Y}\rightarrow \hat{\underline{W}}$, we write $P_{\text{err}} = \Pr(\underline{W} \neq \hat{\underline{W}})$ and estimate it from $\mathcal{L}_{\text{dec}}$ independent Monte Carlo trials. The natural estimator is
\begin{equation}\label{eq:decoding_bler_estimator}
\hat{P}_{\text{err}} = \frac{1}{\mathcal{L}_{\text{dec}}} \sum_{i=1}^{\mathcal{L}_{\text{dec}}} \mathbf{1}(\underline{w} \neq \hat{\underline{w}}),
\end{equation}
which is unbiased, with $\operatorname{Var}(\hat{P}_{\text{err}}) = P_{\text{err}}(1 - P_{\text{err}})/\mathcal{L}_{\text{dec}}$. By the central limit theorem (CLT), for sufficiently large $\mathcal{L}_{\text{dec}}$ we use the approximation $\hat{P}_{\text{err}} \sim \mathcal{N}\!\left(P_{\text{err}}, P_{\text{err}}(1 - P_{\text{err}})/\mathcal{L}_{\text{dec}}\right)$, so that for a target relative deviation~$\Delta$ and confidence level~$\alpha$ a sufficient condition on $\mathcal{L}_{\text{dec}}$ is
\begin{equation}\label{eq:decoding_clt_condition}
2\Phi\!\left(\sqrt{\frac{\mathcal{L}_{\text{dec}}}{P_{\text{err}}(1 - P_{\text{err}})}} \, \Delta P_{\text{err}} \right) - 1 \geq \alpha.
\end{equation}

We next consider the BLER when it is obtained from the analysis of Section~\ref{Sec:linear_code}, i.e., from
\begin{equation}\label{eq:linear_bler_decomp}
P_{\text{err}} = 1 - \sum_{t=1}^{T} p_t\bigl(1 - f(\mathcal{C}, \mathcal{E}, t)\bigr).
\end{equation}
The same $1 - \sum_{t=1}^{T} p_t\bigl(1 - f(\mathcal{C}, \mathcal{E}, t)\bigr)$ structure applies in both the random-ensemble and fixed-code settings. For a random code ensemble, $f(\mathcal{C}, \mathcal{E}, t)$ is available in closed form by \eqref{eq:random_function}. For a fixed linear block code, the exact $f(\mathcal{C}, \mathcal{E}, t)$ is given by the full inclusion--exclusion expression in Theorem~\ref{th:linear_pre_target_hit}; practical computation may instead use a finite-order truncation and, when necessary, an approximation to the code-dependent counts $Z(\mathbf{g})$. For brevity, write $f(t)=f(\mathcal{C}, \mathcal{E}, t)$ in the remainder of this subsection.

We first isolate the sampling error caused by estimating the AGPs. To this end, treat the evaluated values $\{f(t)\}_{t=1}^{T}$ as fixed, so that the only randomness in the analytical BLER estimate comes from replacing $\{p_t\}$ by Monte Carlo estimates from an $\mathcal{L}_{\text{agp}}$-trial run. The reported BLER is then
\begin{equation}\label{eq:agp_bler_estimator}
\hat{P}_{\text{err}} = 1 - \sum_{t=1}^{T} \hat{p}_t\bigl(1 - f(t)\bigr).
\end{equation}
By the multivariate CLT, $(\hat{p}_1,\ldots,\hat{p}_T)$ is asymptotically jointly normal as $\mathcal{L}_{\text{agp}}\to\infty$. The AGP estimators are unbiased for $(p_1,\ldots,p_T)$, and \eqref{eq:linear_bler_decomp} is affine in $(p_1,\ldots,p_T)$, so the corresponding $\hat{P}_{\text{err}}$ in \eqref{eq:linear_bler_decomp} is unbiased for $P_{\text{err}}$ for the same $\{f(t)\}$ and is asymptotically normal. In the expressions below, $\operatorname{Var}(\hat{P}_{\text{err}})$ denotes the asymptotic variance of $\sqrt{\mathcal{L}_{\text{agp}}}\bigl(\hat{P}_{\text{err}}-P_{\text{err}}\bigr)$; in practice it is estimated from the sample variance of $\{\hat{P}_{\text{err}}^{(r)}\}$ over independent repetition indices~$r$, each obtained from a separate $\mathcal{L}_{\text{agp}}$-trial run. For large~$\mathcal{L}_{\text{agp}}$, we use $\hat{P}_{\text{err}} \sim \mathcal{N}\!\left(P_{\text{err}}, \operatorname{Var}(\hat{P}_{\text{err}})/\mathcal{L}_{\text{agp}}\right)$, which yields the following sample-size condition
\begin{equation}\label{eq:agp_clt_condition}
2\Phi\!\left(\sqrt{\frac{\mathcal{L}_{\text{agp}} P_{\text{err}}^2}{\operatorname{Var}(\hat{P}_{\text{err}})}} \, \Delta \right) - 1 \geq \alpha.
\end{equation}

In the same spirit, to target $\Delta = 1\%$ and $\alpha = 99\%$, a convenient sufficient condition for the number of decoding simulations~$\mathcal{L}_{\text{dec}}$ is
\begin{equation}\label{eq:decoding_sample_rule}
\mathcal{L}_{\text{dec}} \gtrsim 9 \times 10^4 \cdot \frac{1 - P_{\text{err}}}{P_{\text{err}}}.
\end{equation}
For the analytical evaluation that uses Monte Carlo estimates of~$\{p_t\}$, the quantities $P_{\text{err}}$ and $\operatorname{Var}(\hat{P}_{\text{err}})$ in the display below are taken for the same $\{f(t)\}$ as in \eqref{eq:linear_bler_decomp}, and a corresponding condition for~$\mathcal{L}_{\text{agp}}$ is
\begin{equation}\label{eq:agp_sample_rule}
\mathcal{L}_{\text{agp}} \gtrsim 9 \times 10^4 \cdot \frac{\operatorname{Var}(\hat{P}_{\text{err}})}{P_{\text{err}}^2}.
\end{equation}

\begin{remark}
For a fixed code, a finite-order evaluation of the inclusion--exclusion expansion can make the computed $f(\mathcal{C}, \mathcal{E}, t)$ differ from the exact pre-target codeword-hit probability. If some $Z(\mathbf{g})$ values are further replaced by a structural model, this introduces an additional model error. The CLT calculation above isolates only the Monte Carlo error in~$\{p_t\}$ and does not account for these deterministic errors. In the experiments below, the expansion order and, when used, the approximation of $Z(\mathbf{g})$ are chosen so that their effect is small at the target precision, while~$\mathcal{L}_{\text{agp}}$ is set large enough to keep the AGP sampling error small.
\end{remark}

\subsection{Analytical and Simulated BLER}\label{SubSec:Simulation1}

We compare decoding simulation with the BLER from the analytical expression for representative linear block codes, in line with Subsection~\ref{SubSec:Err_analysis} and the preceding remark: the deviation from decoding includes Monte Carlo error in the estimates of the AGPs, while a finite expansion order in~$f$ or an approximation to $Z(\mathbf{g})$ may introduce an additional bias. We begin with the Hamming$(7,4)$ code, with ORBGRAND and at most $T = 10$ tests.

The evaluation follows directly from the theoretical results in Section~\ref{Sec:linear_code}. Specifically, we compute the AGP $\{p_t\}$ using Proposition~\ref{prop:agp_pi}, evaluate the pre-target codeword-hit function $f(\mathcal{C}, \mathcal{E}, t)$ via Theorem~\ref{th:linear_pre_target_hit}, and obtain the BLER from Theorem~\ref{th:linear_error_rate}. For the short Hamming code, the required codeword-tuple counts are evaluated exactly; the first-order single-event probabilities are illustrated in Example~\ref{ex:hamming_first_order}. For the longer BCH code considered next, the first-order term uses the exact weight distribution, while the second-order term uses the pairwise approximation described below.

\begin{figure}[ht]
    \centering
    \includegraphics[width = 0.72\textwidth]{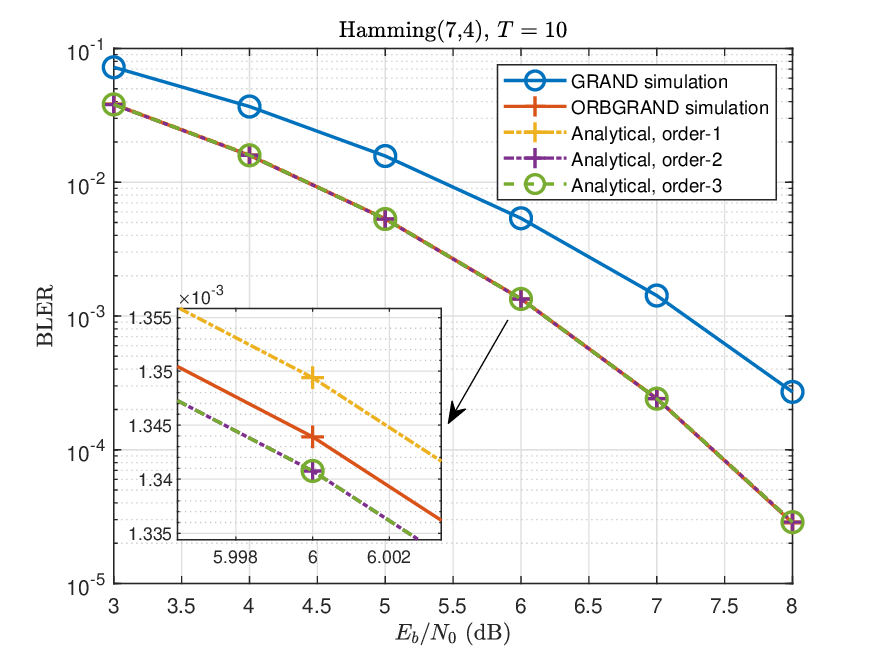}
    \caption{Hamming$(7,4)$: analytical BLER evaluation.}
    \label{fig:Hamming7_4}
\end{figure}

Fig.~\ref{fig:Hamming7_4} compares the BLER obtained from decoding simulation and from the analytical expression. To compute $f(\mathcal{C}, \mathcal{E}, t)$, we apply the inclusion--exclusion principle and truncate the expansion after a finite number of terms. The curves labeled by different orders correspond to retaining the first one, two, and three terms in \eqref{eq:th_line_mif_main}, respectively; in particular, the first-order curve is the union-bound-based BLER upper bound in \eqref{eq:first_order_bler_bound}.

It is observed that retaining two terms already yields the same result as retaining three terms. This indicates that, under the given EP set $\mathcal{E}$ and testing budget $T$, the probability of having three or more pre-target codeword hits is zero. Consequently, the truncated expression of $f(\mathcal{C}, \mathcal{E}, t)$ is exact in this case.

The small gap is therefore due mainly to sampling error in~$\{\hat{p}_t\}$ (cf.\ Subsection~\ref{SubSec:Err_analysis}), while $f$ is effectively free of truncation error. The case confirms that the analysis tracks decoding when the $f$ expansion is exact at the order retained.



We next consider a longer Bose--Chaudhuri--Hocquenghem (BCH) code, namely BCH$(127,113)$, and evaluate both the accuracy and computational complexity of the finite-order analytical approximation with $T = 10^4$.

Fig.~\ref{fig:BCH_calcu} compares the BLER obtained from decoding simulation and from different analytical approximations. The order-$k$ curves correspond to retaining the first $k$ terms in the inclusion--exclusion expansion \eqref{eq:th_line_mif_main}. In particular, order-0 ignores the pre-target codeword-hit effect, order-1 gives the first-order upper bound in \eqref{eq:first_order_bler_bound}, and order-2 further incorporates pairwise interactions using the approximation to $Z(\mathbf{g}^{(2)})$ described below. This matches the discussion in Subsection~\ref{SubSec:Err_analysis} and the remark: a small~$k$ inflates the approximation error in~$f$, while order-2 together with a sufficiently large~$\mathcal{L}_{\text{agp}}$ brings the analytical points close to the decoding curve (see Fig.~\ref{fig:BCH_calcu}).

\begin{figure}[ht]
    \centering
    \includegraphics[width = 0.72\textwidth]{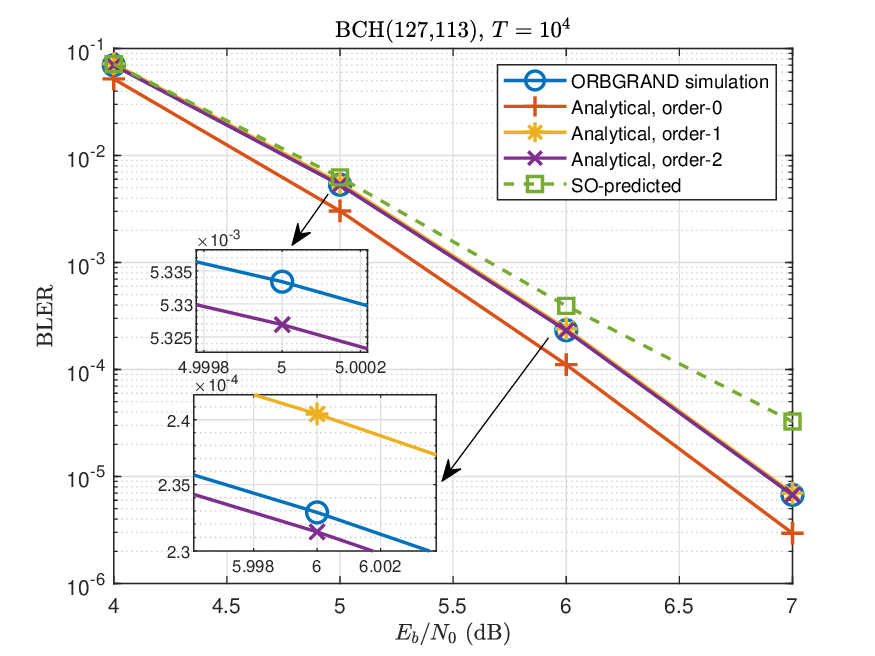}
    \caption{BCH$(127,113)$: finite-order analytical BLER approximation.}
    \label{fig:BCH_calcu}
\end{figure}

It is observed that the first-order bound is already reasonably close, although still conservative (around $3\%$ at $6$~dB), while the second-order approximation achieves high accuracy, with a deviation below $0.5\%$. Thus, the first-order term provides a simple and useful BLER upper bound, and adding the pairwise term substantially tightens the analytical prediction. For comparison, the method in~\cite{duffy2024soft} (referred to as ``SO-predicted'') assumes a geometric distribution for competing-codeword hits, which leads to noticeable inaccuracies at high SNR.

To further illustrate this behavior, Fig.~\ref{fig:dis_list} shows the empirical distribution of the number of pre-target codeword hits, obtained from decoding simulation. It is observed that the probability decays rapidly as the number of pre-target codeword hits increases, and events involving more than two such hits are extremely rare. This empirical observation supports the effectiveness of the second-order approximation, as higher-order terms contribute negligibly in practice.

\begin{figure}[ht]
    \centering
    \includegraphics[width = 0.72\textwidth]{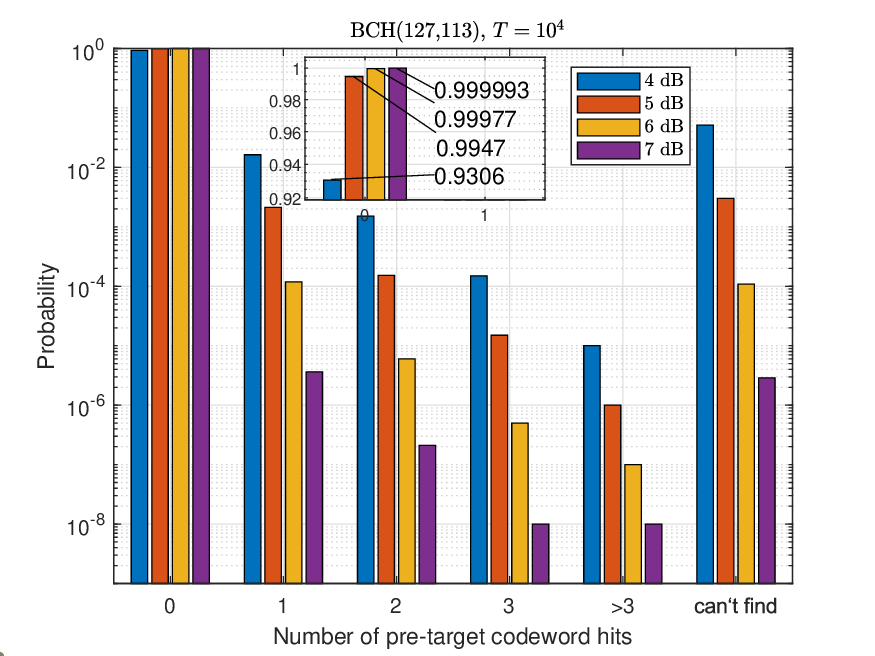}
    \caption{BCH$(127,113)$: pre-target codeword-hit count ($T=10^4$).}
    \label{fig:dis_list}
\end{figure}

We briefly describe how the analytical results are computed in practice. By Corollary~\ref{coro:single_event_prob}, the first-order term of $f(\mathcal{C}, \mathcal{E}, t)$ depends only on the weight distribution of the code, which can be efficiently obtained from the dual code via the MacWilliams identity~\cite{macwilliams1963theorem}:
\begin{equation}\label{eq:macwilliams_identity}
A(z)=2^{-(N-K)}(1+z)^N B\!\left(\frac{1-z}{1+z}\right).
\end{equation}

For the second-order term, the exact fixed-code expression requires $Z(\mathbf{g}^{(2)})$, which involves the joint weight distribution of codeword pairs, where
\begin{equation}
\mathbf{g}^{(2)}=(N,g(01),g(10),g(11)).
\end{equation}
Since such pairwise weight enumerators are typically unavailable for structured codes, the BCH evaluation adopts a tractable uniform support model: conditioned on their individual weights, the supports of two codewords are treated as uniformly distributed subsets of $\{1,\ldots,N\}$. Under this model, the overlap profile---and hence $Z(\cdot)$---is determined by simple combinatorial counting, leading to the following approximation:
\begin{equation}\label{eq:second_order_Z_approx}
Z(\mathbf{g}^{(2)}) \approx Z(\mathbf{g}_{01}^{(1)}) Z(\mathbf{g}_{10}^{(1)}) \cdot \frac{C_{g(01)}^{(g(01)+g(10)-g(11))/2} C_{N-g(01)}^{(g(10)+g(11)-g(01))/2}}{C_{N}^{g(10)}}.
\end{equation}
Here, $\mathbf{g}_{01}^{(1)}=(N,g(01))$ and $\mathbf{g}_{10}^{(1)}=(N,g(10))$.

Table~\ref{Tab:simu_num_BCH} lists the numbers of decoding runs and AGP-trial runs (values of~$\mathcal{L}_{\text{dec}}$ and~$\mathcal{L}_{\text{agp}}$) obtained from the bounds in Subsection~\ref{SubSec:Err_analysis} for $\Delta = 1\%$ and $\alpha = 99\%$, for second-order~$f$ in the same analytical setting as in Fig.~\ref{fig:BCH_calcu}.

The row for $\operatorname{Var}(\hat{P}_{\text{err}})$ uses the sample variance in Subsection~\ref{SubSec:Err_analysis} for the path that estimates~$\{p_t\}$ by Monte Carlo, not the Bernoulli variance $P_{\text{err}}(1-P_{\text{err}})$ from decoding simulation; the difference between the two explains why the sufficient condition on~$\mathcal{L}_{\text{dec}}$ in the last row is vastly larger than that on~$\mathcal{L}_{\text{agp}}$ at low~$P_{\text{err}}$.

As the SNR increases, the lower bound on~$\mathcal{L}_{\text{dec}}$ from the same sufficient conditions grows very quickly, while the corresponding lower bound on~$\mathcal{L}_{\text{agp}}$ changes only mildly across SNR, consistent with the variance scalings in Subsection~\ref{SubSec:Err_analysis}.


\begin{table}[ht]
    \renewcommand{\arraystretch}{1.2}
    \caption{BCH$(127,113)$: required sample sizes for low-BLER evaluation.}
    \label{Tab:simu_num_BCH}
    \centering
    \setlength{\tabcolsep}{5pt}
    \begin{tabular}{lccccc}
    \toprule
    & \textbf{$4$~dB} & \textbf{$5$~dB} & \textbf{$6$~dB} & \textbf{$7$~dB} & \textbf{$8$~dB} \\
    \midrule
    \multicolumn{6}{c}{\textit{Error Statistics}} \\
    \midrule
    $\hat{P}_{\text{err}}$  & $6.9\!\times\!10^{-2}$ & $5.3\!\times\!10^{-3}$ & $2.3\!\times\!10^{-4}$ & $6.7\!\times\!10^{-6}$ & $1.0\!\times\!10^{-7}$ \\
    $\operatorname{Var}(\hat{P}_{\text{err}})$   & $3.5\!\times\!10^{-3}$ & $4.2\!\times\!10^{-6}$ & $5.0\!\times\!10^{-8}$ & $3.2\!\times\!10^{-11}$ & $8.0\!\times\!10^{-15}$ \\
    \midrule
    \multicolumn{6}{c}{\textit{Required Samples}} \\
    \midrule
    Proposed $\mathcal{L}_{\text{agp}}$  & $6.7\!\times\!10^{4}$ & $1.4\!\times\!10^{5}$ & $8.5\!\times\!10^{4}$ & $6.5\!\times\!10^{4}$ & $7.2\!\times\!10^{4}$ \\
    Decoding simulation $\mathcal{L}_{\text{dec}}$        & $1.3\!\times\!10^{5}$ & $1.7\!\times\!10^{7}$ & $3.9\!\times\!10^{8}$ & $1.4\!\times\!10^{10}$ & $8.9\!\times\!10^{11}$ \\
    \bottomrule
    \end{tabular}
\end{table}

\subsection{Performance and Near-Optimality of RS-ORBGRAND}\label{SubSec:RSORB}

In this subsection, we evaluate the performance of the proposed RS-ORBGRAND through numerical simulations. We consider two representative codes: BCH$(127,113)$ and cyclic-redundancy-check (CRC)-aided polar$(128,114)$.

The EP list $\mathcal{E}_{RS}$ is constructed following Algorithm~\ref{alg:E_RS}: we first generate an initial candidate list using CDF-ORBGRAND~\cite{duffy2022ORBGRAND, liu2022orbgrand} and then reshuffle it by sorting the estimated AGPs $\{p_t\}$ in descending order. Unless otherwise specified, the maximum number of tests is set to $T=10^4$.

\begin{figure}[ht]
    \centering
    \includegraphics[width = 0.72\textwidth]{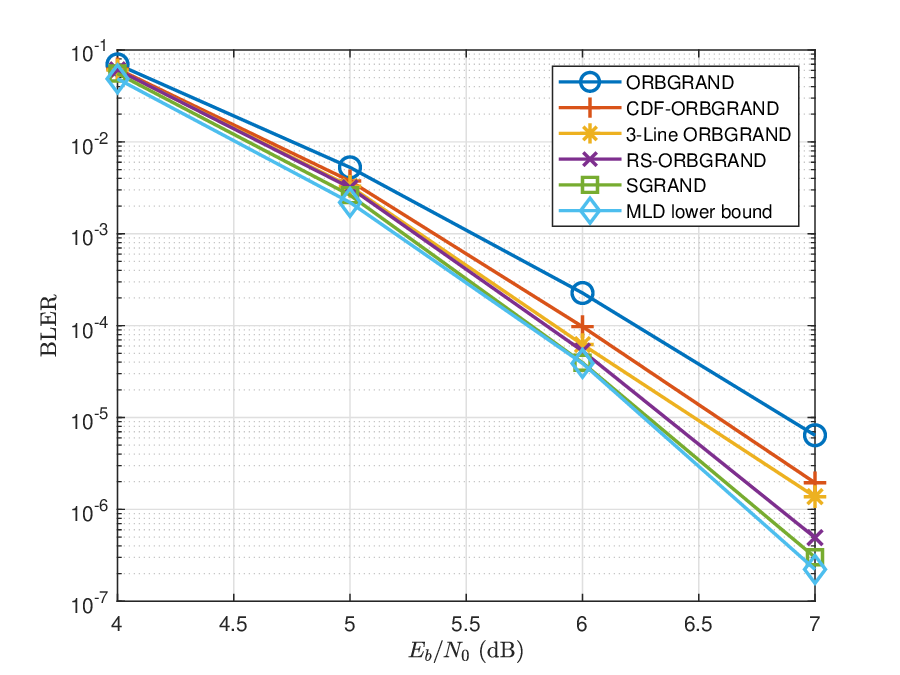}
    \caption{BCH$(127,113)$: RS-ORBGRAND performance.}
    \label{fig:BCH_perfomance}
\end{figure}

Fig.~\ref{fig:BCH_perfomance} compares the BLER of RS-ORBGRAND with several ORB-type decoders, as well as SGRAND and a lower bound on the MLD BLER. Since this lower-bound curve is no higher than the true MLD BLER curve, its SNR gap provides a conservative reference for near-MLD performance. It is observed that RS-ORBGRAND approaches SGRAND and remains within $0.1$~dB of this lower-bound curve at high SNR. Compared with existing ORB-type decoders, RS-ORBGRAND provides a consistent performance gain. For example, at a BLER of $10^{-5}$, the SNR gain relative to ORBGRAND is about $0.5$~dB.

Table~\ref{Tab:2} further reports the average number of tests. Among ORB-type GRAND decoders, RS-ORBGRAND requires the smallest number of queries. Although SGRAND achieves the lowest average number of tests, it generates its EP queries adaptively from the received LLR vector for each decoding instance, which results in significantly higher computational complexity.

\begin{table}[ht]
    \renewcommand{\arraystretch}{1.2}
    \caption{Average number of tests for BCH$(127,113)$ under different decoders.}
    \label{Tab:2}
    \centering
    \setlength{\tabcolsep}{6pt}
    \begin{tabular}{lcccc}
    \toprule
    \textbf{Decoder} & \textbf{$4$~dB} & \textbf{$5$~dB} & \textbf{$6$~dB} & \textbf{$7$~dB} \\
    \midrule
    ORBGRAND & 790.8 & 83.89 & 7.072 & 1.479 \\
    CDF-ORBGRAND & 727.9 & 67.44 & 5.476 & 1.478 \\
    3-Line ORBGRAND & 730.0 & 62.34 & 4.732 & 1.445 \\
    \textbf{RS-ORBGRAND (proposed)} & \textbf{715.6} & \textbf{60.63} & \textbf{4.445} & \textbf{1.350} \\
    \midrule
    SGRAND & 666.5 & 52.99 & 3.932 & 1.328 \\
    \bottomrule
    \end{tabular}
\end{table}

Fig.~\ref{fig:Polar_perfomance} shows the BLER performance for CRC-aided polar$(128,114)$. We further investigate the impact of the candidate set size $T_1$ used in Algorithm~\ref{alg:E_RS} to form the initial EP candidate list before reshuffling. It is observed that a sufficiently large candidate set is essential for achieving good performance. Increasing $T_1$ enables more EPs with larger AGPs to be identified and prioritized during reshuffling, thereby improving decoding performance. Notably, $T_1$ only affects the one-time construction of $\mathcal{E}_{RS}$, while the per-decoding complexity remains bounded by $T = 10^4$.

\begin{figure}[ht]
    \centering
    \includegraphics[width = 0.72\textwidth]{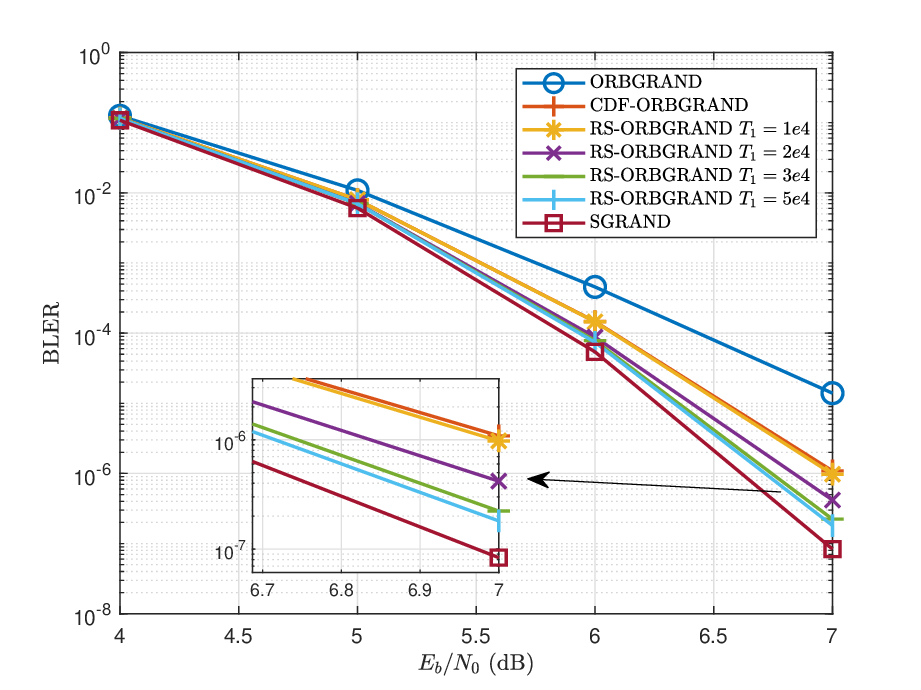}
    \caption{CRC-aided polar$(128,114)$: effect of candidate-set size.}
    \label{fig:Polar_perfomance}
\end{figure}

\section{Conclusion}
\label{Sec:conclusion}

This paper develops an AGP-based framework for analyzing and designing ORB-type GRAND algorithms. For random code ensembles, we derive exact expressions for the BLER, stopping-time distribution, and average number of tests under a fixed test budget. The analysis separates target-miss and target-preemption errors and establishes that ordering EPs by non-increasing AGP simultaneously minimizes the average BLER and the average number of tests over the EP set under consideration.

For fixed linear block codes, we extend the BLER analysis by isolating the code-dependent target-preemption term. This term is characterized through higher-order weight relationships of codeword tuples, with the ordinary weight distribution appearing as the first-order special case. Guided by these results, we formulate RS-ORBGRAND as an AGP-driven offline reshuffling scheme that preserves the ranking-based ORB-type structure. Numerical results show consistent gains over existing ORB-type GRAND decoders and near-MLD performance, while finite-order analytical evaluations based on the derived expressions enable efficient low-error-rate prediction beyond brute-force decoding simulation.

{\appendix

\subsection{Proof of Proposition~\ref{prop:Bj_recursion}}
\label{app:Bj_recursion}

\paragraph{Proof of the recursive structure.}

We derive \eqref{eq:Bj_recursion_new} from the entry-wise definition of $\mathbf{B}_{j+1}$.

\begin{proof}
    
Index the rows and columns of $\mathbf{B}_{j+1}$ by $\underline{b}=(b_1,\tilde{\underline{b}})$ and $\underline{s}=(s_1,\tilde{\underline{s}})$ in $\{0,1\}^{j+1}$, where $\tilde{\underline{b}},\tilde{\underline{s}}\in\{0,1\}^j$. Under the lexicographic order, $\mathbf{B}_{j+1}$ is partitioned into four $2^j\times 2^j$ blocks according to the leading bits $b_1$ and $s_1$.

If $b_1=0$, then $\underline{b}=\underline{0}$ holds if and only if $\tilde{\underline{b}}=\underline{0}$, and for $\underline{b}\neq \underline{0}$ we have
\begin{equation}
\underline{b}\cdot\underline{s}
=
\tilde{\underline{b}}\cdot\tilde{\underline{s}}.
\end{equation}
Therefore, both the upper-left block $(b_1,s_1)=(0,0)$ and the upper-right block $(b_1,s_1)=(0,1)$ are equal to $\mathbf{B}_j$.

Now consider $b_1=1$. Then $\underline{b}\neq \underline{0}$ always holds. If $s_1=0$, then
\begin{equation}
\underline{b}\cdot\underline{s}
=
\tilde{\underline{b}}\cdot\tilde{\underline{s}}.
\end{equation}
Hence the lower-left block agrees with $\mathbf{B}_j$ for all rows except the first row, corresponding to $\tilde{\underline{b}}=\underline{0}$. In that row, the entries of $\mathbf{B}_j$ are all $1$, whereas the entries of the lower-left block are all $0$. Therefore, the lower-left block is
\begin{equation}
\mathbf{E}^{2^j\times 2^j}_{1,1:2^j}\oplus \mathbf{B}_j.
\end{equation}

If $s_1=1$, then
\begin{equation}
\underline{b}\cdot\underline{s}
=
1\oplus \tilde{\underline{b}}\cdot\tilde{\underline{s}}.
\end{equation}
Thus, for all rows except the first row, the lower-right block is obtained from $\mathbf{B}_j$ by flipping all bits, namely by $\mathbf{1}^{2^j\times 2^j}\oplus \mathbf{B}_j$. For the first row corresponding to $\tilde{\underline{b}}=\underline{0}$, the entries should all be $1$, so an additional flip of the first row is needed. Hence the lower-right block is
\begin{equation}
\mathbf{E}^{2^j\times 2^j}_{1,1:2^j}\oplus \mathbf{1}^{2^j\times 2^j}\oplus \mathbf{B}_j.
\end{equation}

Combining the four blocks gives
\begin{equation}
\mathbf{B}_{j+1}
=
\begin{bmatrix}
\mathbf{B}_j & \mathbf{B}_j\\
\mathbf{E}_{1,1:2^j}^{2^j\times 2^j}\oplus \mathbf{B}_j &
\mathbf{E}_{1,1:2^j}^{2^j\times 2^j}\oplus \mathbf{1}^{2^j\times 2^j}\oplus \mathbf{B}_j
\end{bmatrix},
\end{equation}
which proves \eqref{eq:Bj_recursion_new}.
\end{proof}

\paragraph{Proof of invertibility.}
We prove that $\mathbf{B}_j$ is invertible over $\mathbb{R}$ for all $j$.
\begin{proof}
Consider the sign matrix $\mathbf{H}_j$ indexed by $\{0,1\}^j\times\{0,1\}^j$, with entries
\begin{equation}
\bigl[\mathbf{H}_j\bigr]_{\underline{b},\underline{s}}
=
(-1)^{\underline{b}\cdot\underline{s}},
\qquad
\underline{b},\underline{s}\in\{0,1\}^j .
\end{equation}
The rows of $\mathbf{H}_j$ are orthogonal, since
\begin{equation}
\sum_{\underline{s}\in\{0,1\}^j}
(-1)^{(\underline{b}\oplus\underline{b}')\cdot\underline{s}}
=
\begin{cases}
2^j, & \underline{b}=\underline{b}',\\
0, & \underline{b}\neq \underline{b}'.
\end{cases}
\end{equation}
Indeed, when $\underline{b}\neq \underline{b}'$, the nonzero linear form $(\underline{b}\oplus\underline{b}')\cdot\underline{s}$ takes the values $0$ and $1$ equally often as $\underline{s}$ ranges over $\{0,1\}^j$, so the corresponding signs cancel.
Thus $\mathbf{H}_j\mathbf{H}_j^{\mathsf T}=2^j\mathbf{I}$, and $\mathbf{H}_j$ is nonsingular.

We now relate the rows of $\mathbf{B}_j$ to the rows of $\mathbf{H}_j$. For the all-zero row index,
\begin{equation}
\mathbf{B}_j(\underline{0},\cdot)
=
\mathbf{H}_j(\underline{0},\cdot).
\end{equation}
For any $\underline{b}\neq\underline{0}$,
\begin{equation}
\mathbf{B}_j(\underline{b},\cdot)
=
\frac{1}{2}\left(
\mathbf{H}_j(\underline{0},\cdot)
-
\mathbf{H}_j(\underline{b},\cdot)
\right),
\end{equation}
because the right-hand side equals $0$ when $\underline{b}\cdot\underline{s}=0$ and equals $1$ when $\underline{b}\cdot\underline{s}=1$.

Suppose that a linear combination of the rows of $\mathbf{B}_j$ is zero:
\begin{equation}
a_{\underline{0}}\mathbf{B}_j(\underline{0},\cdot)
+
\sum_{\underline{b}\neq\underline{0}}
a_{\underline{b}}\mathbf{B}_j(\underline{b},\cdot)
=
\mathbf{0}.
\end{equation}
Substituting the row relations above gives
\begin{equation}
\left(
a_{\underline{0}}
+
\frac{1}{2}\sum_{\underline{b}\neq\underline{0}}a_{\underline{b}}
\right)
\mathbf{H}_j(\underline{0},\cdot)
-
\frac{1}{2}
\sum_{\underline{b}\neq\underline{0}}
a_{\underline{b}}\mathbf{H}_j(\underline{b},\cdot)
=
\mathbf{0}.
\end{equation}
Since the rows of $\mathbf{H}_j$ are linearly independent, all coefficients in this combination are zero. Hence $a_{\underline{b}}=0$ for every $\underline{b}\neq\underline{0}$, and then $a_{\underline{0}}=0$. Therefore the rows of $\mathbf{B}_j$ are linearly independent, so $\mathbf{B}_j$ is nonsingular over $\mathbb{R}$.
\end{proof}
}


 
%

\bibliographystyle{IEEEtran}


\bibliography{reference}

@article{shannon1948,
  title={A mathematical theory of communication},
  author={Shannon, Claude Elwood},
  journal={Bell Syst. Tech. J.},
  volume={27},
  number={3},
  pages={379--423},
  year={1948},
  publisher={Nokia Bell Labs}
}

@article{berlekamp1978,
  title={On the inherent intractability of certain coding problems},
  author={Berlekamp, Elwyn and McEliece, Robert and Van Tilborg, Henk},
  journal={IEEE Trans. Inf. Theory},
  volume={24},
  number={3},
  pages={384--386},
  year={1978},
  publisher={IEEE}
}

@article{you2021towards,
  title={Towards {6G} wireless communication networks: Vision, enabling technologies, and new paradigm shifts},
  author={You, Xiaohu and Wang, Cheng-Xiang and Huang, Jie and Gao, Xiqi and Zhang, Zaichen and Wang, Mao and Huang, Yongming and Zhang, Chuan and Jiang, Yanxiang and Wang, Jiaheng and others},
  journal={Sci. China Inf. Sci.},
  volume={64},
  pages={1--74},
  year={2021},
  publisher={Springer}
}

@article{winzer2018fiber,
  title={Fiber-optic transmission and networking: The previous 20 and the next 20 years},
  author={Winzer, Peter J},
  journal={Opt. Express},
  volume={26},
  number={18},
  pages={24190--24239},
  year={2018},
  publisher={Optica Publishing Group}
}

@article{mickevicius2021serdes,
  title={A survey of high-speed serializer/deserializer architectures},
  author={Mickevicius, T and others},
  journal={IEEE Access},
  volume={9},
  pages={116770--116793},
  year={2021},
  publisher={IEEE}
}

@article{shirvanimoghaddam2018short,
  title={Short block-length codes for ultra-reliable low latency communications},
  author={Shirvanimoghaddam, Mahyar and Mohammadi, Mohammad Sadegh and Abbas, Rana and Minja, Aleksandar and Yue, Chentao and Matuz, Balazs and Han, Guojun and Lin, Zihuai and Liu, Wanchun and Li, Yonghui and others},
  journal={IEEE Commun. Mag.},
  volume={57},
  number={2},
  pages={130--137},
  year={2018},
  publisher={IEEE}
}

@article{yue2023efficient,
  title={Efficient decoders for short block length codes in {6G} {URLLC}},
  author={Yue, Chentao and Miloslavskaya, Vera and Shirvanimoghaddam, Mahyar and Vucetic, Branka and Li, Yonghui},
  journal={IEEE Commun. Mag.},
  volume={61},
  number={4},
  pages={84--90},
  year={2023},
  publisher={IEEE}
}

@article{duffy2019GRAND,
  title={Capacity-achieving guessing random additive noise decoding},
  author={Duffy, Ken R and Li, Jiange and M{\'e}dard, Muriel},
  journal={IEEE Trans. Inf. Theory},
  volume={65},
  number={7},
  pages={4023--4040},
  year={2019},
  publisher={IEEE}
}

@inproceedings{duffy2018guessing,
  title={Guessing noise, not code-words},
  author={Duffy, Ken R and Li, Jiange and M{\'e}dard, Muriel},
  booktitle={Proc. IEEE Int. Symp. Inf. Theory (ISIT)},
  pages={671--675},
  year={2018}
}

@inproceedings{solomon2020SGRAND,
  title={Soft maximum likelihood decoding using {GRAND}},
  author={Solomon, Amit and Duffy, Ken R and M{\'e}dard, Muriel},
  booktitle={Proc. IEEE Int. Conf. Commun. (ICC)},
  pages={1--6},
  year={2020}
}

@article{duffy2021SRGRAND,
  title={Guessing random additive noise decoding with symbol reliability information ({SRGRAND})},
  author={Duffy, Ken R and M{\'e}dard, Muriel and An, Wei},
  journal={IEEE Trans. Commun.},
  volume={70},
  number={1},
  pages={3--18},
  year={2021},
  publisher={IEEE}
}

@inproceedings{abbas2022grand,
  title={{GRAND} for {R}ayleigh fading channels},
  author={Abbas, Syed Mohsin and Jalaleddine, Marwan and Gross, Warren J},
  booktitle={Proc. IEEE Global Commun. Conf. Workshops (GC Wkshps)},
  pages={504--509},
  year={2022}
}

@inproceedings{duffy2023using,
  title={Using channel correlation to improve decoding-{ORBGRAND-AI}},
  author={Duffy, Ken R and Grundei, Moritz and M{\'e}dard, Muriel},
  booktitle={Proc. IEEE Global Commun. Conf. (GLOBECOM)},
  pages={3585--3590},
  year={2023}
}

@inproceedings{sarieddeen2022grand,
  title={{GRAND} for fading channels using pseudo-soft information},
  author={Sarieddeen, Hadi and M{\'e}dard, Muriel and Duffy, Ken R},
  booktitle={Proc. IEEE Global Commun. Conf. (GLOBECOM)},
  pages={3502--3507},
  year={2022}
}

@inproceedings{feng2024laplacian,
  title={Laplacian-{ORBGRAND}: Decoding for impulsive noise},
  author={Feng, Jiewei and Duffy, Ken R and M{\'e}dard, Muriel},
  booktitle={Proc. IEEE Mil. Commun. Conf. (MILCOM)},
  pages={1--6},
  year={2024}
}

@inproceedings{condo2021high,
  title={High-performance low-complexity error pattern generation for {ORBGRAND} decoding},
  author={Condo, Carlo and Bioglio, Valerio and Land, Ingmar},
  booktitle={Proc. IEEE Global Commun. Conf. Workshops (GC Wkshps)},
  pages={1--6},
  year={2021}
}

@inproceedings{yuan2023guessing,
  title={Guessing random additive noise decoding with quantized soft information},
  author={Yuan, Peihong and Duffy, Ken R and Gabhart, Evan P and M{\'e}dard, Muriel},
  booktitle={Proc. IEEE Global Commun. Conf. Workshops (GC Wkshps)},
  pages={1698--1703},
  year={2023}
}

@book{Lin2004ErrorCC,
    author ={Shu Lin and Daniel J. Costello},
    title = {Error Control Coding: Fundamentals and Applications},
    publisher = {Englewood Cliffs, NJ, USA: Prentice Hall},
    year = {2004}
}

@article{wan2025parallelism,
  title={A parallelization strategy for {GRAND} with optimality guarantee by exploiting error pattern tree representation},
  author={Wan, Li and Yin, Huarui and Zhang, Wenyi},
  journal={IEEE Trans. Commun.},
  volume={74},
  pages={8517--8532},
  year={2026},
  publisher={IEEE}
}

@article{duffy2022ORBGRAND,
  title={Ordered reliability bits guessing random additive noise decoding},
  author={Duffy, Ken R and An, Wei and M{\'e}dard, Muriel},
  journal={IEEE Trans. Signal Process.},
  volume={70},
  pages={4528--4542},
  year={2022},
  publisher={IEEE}
}

@inproceedings{an2023soft,
  title={Soft decoding without soft demapping with {ORBGRAND}},
  author={An, Wei and M{\'e}dard, Muriel and Duffy, Ken R},
  booktitle={Proc. IEEE Int. Symp. Inf. Theory (ISIT)},
  pages={1080--1084},
  year={2023}
}

@inproceedings{galligan2023block,
  title={Block turbo decoding with {ORBGRAND}},
  author={Galligan, Kevin and M{\'e}dard, Muriel and Duffy, Ken R},
  booktitle={Proc. 57th Annu. Conf. Inf. Sci. Syst. (CISS)},
  pages={1--6},
  year={2023}
}

@article{abbas2022high,
  title={High-throughput and energy-efficient {VLSI} architecture for ordered reliability bits {GRAND}},
  author={Abbas, Syed Mohsin and Tonnellier, Thibaud and Ercan, Furkan and Jalaleddine, Marwan and Gross, Warren J},
  journal={IEEE Trans. Very Large Scale Integr. (VLSI) Syst.},
  volume={30},
  number={6},
  pages={681--693},
  year={2022},
  publisher={IEEE}
}

@book{abbas2023guessing,
  title={Guessing Random Additive Noise Decoding: A Hardware Perspective},
  author={Abbas, Syed Mohsin and Jalaleddine, Marwan and Gross, Warren J},
  year={2023},
  publisher={Springer Nature}
}

@article{ji2024efficient,
  title={Efficient {ORBGRAND} implementation with parallel noise sequence generation},
  author={Ji, Chao and You, Xiaohu and Zhang, Chuan and Studer, Christoph},
  journal={IEEE Trans. Very Large Scale Integr. (VLSI) Syst.},
  volume={33},
  number={2},
  pages={435--448},
  year={2024},
  publisher={IEEE}
}

@inproceedings{wang2023improved,
  title={Improved {ORB-GRAND} for {PAC} Codes},
  author={Wang, Yueh and Shi, Zhiping and Han, Ziyu and Li, Kunyang and others},
  booktitle={Proc. 8th IEEE Int. Conf. Commun., Image Signal Process. (CCISP)},
  pages={477--481},
  year={2023}
}

@article{condo2022fixed,
  title={A fixed latency {ORBGRAND} decoder architecture with {LUT}-aided error-pattern scheduling},
  author={Condo, Carlo},
  journal={IEEE Trans. Circuits Syst. I, Reg. Papers},
  volume={69},
  number={5},
  pages={2203--2211},
  year={2022},
  publisher={IEEE}
}

@inproceedings{wan2024approaching,
  title={Approaching Maximum Likelihood Decoding Performance via Reshuffling {ORBGRAND}},
  author={Wan, Li and Zhang, Wenyi},
  booktitle={Proc. IEEE Int. Symp. Inf. Theory (ISIT)},
  pages={31--36},
  year={2024}
}

@article{liu2022orbgrand,
  title={{ORBGRAND} is almost capacity-achieving},
  author={Liu, Mengxiao and Wei, Yuejun and Chen, Zhenyuan and Zhang, Wenyi},
  journal={IEEE Trans. Inf. Theory},
  volume={69},
  number={5},
  pages={2830--2840},
  year={2022},
  publisher={IEEE}
}

@inproceedings{li2024orbgrand,
  title={{ORBGRAND}: Achievable Rate for General Bit Channels and Application in {BICM}},
  author={Li, Zhuang and Zhang, Wenyi},
  booktitle={Proc. IEEE 35th Int. Symp. Pers., Indoor Mobile Radio Commun. (PIMRC)},
  pages={1--7},
  year={2024}
}

@article{li2025orbgrand,
  title={{ORBGRAND} is exactly capacity-achieving via rank companding},
  author={Li, Zhuang and Zhang, Wenyi},
  journal={arXiv preprint arXiv:2512.00347},
  year={2025}
}

@article{li2026finite,
  title={A Finite-Blocklength Analysis for {ORBGRAND}},
  author={Li, Zhuang and Zhang, Wenyi},
  journal={arXiv preprint arXiv:2603.07526},
  year={2026}
}

@article{abbas2022list,
  title={List-{GRAND}: A practical way to achieve maximum likelihood decoding},
  author={Abbas, Syed Mohsin and Jalaleddine, Marwan and Gross, Warren J},
  journal={IEEE Trans. Very Large Scale Integr. (VLSI) Syst.},
  volume={31},
  number={1},
  pages={43--54},
  year={2022},
  publisher={IEEE}
}

@inproceedings{galligan2023upgrade,
  title={Upgrade error detection to prediction with {GRAND}},
  author={Galligan, Kevin and Yuan, Peihong and M{\'e}dard, Muriel and Duffy, Ken R},
  booktitle={Proc. IEEE Global Commun. Conf. (GLOBECOM)},
  pages={1818--1823},
  year={2023}
}

@article{yuan2024softoutput,
  title={Soft-output ({SO}) {GRAND} and iterative decoding to outperform {LDPC} codes},
  author={Yuan, Peihong and M{\'e}dard, Muriel and Galligan, Kevin and Duffy, Ken R},
  journal={IEEE Trans. Wireless Commun.},
  volume={24},
  number={4},
  pages={3386--3399},
  year={2025},
  publisher={IEEE}
}

@article{duffy2024soft,
  title={Soft-output guessing codeword decoding},
  author={Duffy, Ken R and Yuan, Peihong and Griffin, Joseph and M{\'e}dard, Muriel},
  journal={IEEE Commun. Lett.},
  volume={29},
  number={2},
  pages={328--332}, 
  year={2024},
  publisher={IEEE}
}

@book{feller1968probability,
  title={An Introduction to Probability Theory and Its Applications, Volume I},
  author={Feller, William},
  edition={3},
  publisher={Wiley},
  address={New York, NY, USA},
  year={1968}
}

@article{macwilliams1963theorem,
  title={A theorem on the distribution of weights in a systematic code},
  author={MacWilliams, Jessie},
  journal={Bell Syst. Tech. J.},
  volume={42},
  number={1},
  pages={79--94},
  year={1963},
  publisher={Wiley Online Library}
}


\end{document}